
\documentclass[11pt]{article}

\usepackage[utf8]{inputenc}
\usepackage[english]{babel}
\usepackage{amsmath}
\usepackage{nicefrac}
\usepackage{amsthm}
\usepackage{amsfonts}
\usepackage{amssymb}
\usepackage{verbatim}
\usepackage{color}
\usepackage[arrow, matrix, curve]{xy}
\usepackage{bbm}
\usepackage{eqparbox}
\usepackage[numbers]{natbib}
\usepackage{stmaryrd}
\usepackage{amssymb}
\usepackage{mathrsfs}
\usepackage{pictexwd,dcpic}
\usepackage{mathabx}


\DeclareMathSymbol{\leq}{\mathrel}{symbols}{20}
   
\DeclareMathSymbol{\geq}{\mathrel}{symbols}{21}

\newtheoremstyle{WreschTheoremstyle} 
                        {1.5em}    
                        {2.5em}    
                        {}         
                        {}         
                        {\bfseries}
                        {}        
                        {\newline} 
                        {\raisebox{0.6em}{\thmname{#1}\thmnumber{#2}\thmnote{ (#3)}}}


\newcommand{\R}{\mathbb{R}}
\newcommand{\C}{\mathbb{C}}
\newcommand{\N}{\mathbb{N}}

\newcommand{\K}{\mathcal{K}}

\newcommand{\e}{\varepsilon}
\newcommand{\Lb}{\mathcal{L}}
\renewcommand{\1}{\mathbbm{1}}
\newcommand{\dm}{\mathrm{d}}
\newcommand{\esssup}{\mathrm{ess}\sup}

\newcommand{\bfrac}[2]{\genfrac{}{}{0pt}{}{#1}{#2}}

\newtheorem{theorem}{theorem}[section]
\newtheorem{proposition}[theorem]{proposition}

\newtheorem{lemma}[theorem]{Lemma}
\newtheorem{remark}[theorem]{Remark}

\numberwithin{equation}{section}


\makeatletter
\newcommand{\customlabel}[1]{%
     \stepcounter{ref}%
   \protected@write
\@auxout{}{\string\newlabel{#1}{{\thesatz.\arabic{ref}}{\thepage}{\thesatz.\arabic{ref}}{#1}{}}}%
   \hypertarget{#1}{\thesatz.\arabic{ref}}%
}
\makeatother



\topmargin 0.0cm
\oddsidemargin 0.2cm
\textwidth 16cm
\textheight 21cm
\footskip 1.0cm


\newenvironment{sciabstract}{\begin{quote}}{\end{quote}}




\newcounter{lastnote}


\title{Stochastic averaging principle for spatial Markov evolutions in the continuum}
\newcommand{\pdftitle} {Stochastic averaging principle for spatial Markov evolutions in the continuum}
\newcommand{\pdfauthor}{Martin Friesen}


\author{
Martin Friesen\footnote{Department of Mathematics, Wuppertal University, Germany, friesen@uni-wuppertal.de}\\
Yuri Kondratiev\footnote{Department of Mathematics, Bielefeld University, Germany, kondrat@math.uni-bielefeld.de}
}

\usepackage[plainpages=false,pdfpagelabels=true,bookmarks=true,pdfauthor={\pdfauthor},
pdftitle={\pdftitle}]{hyperref}

\makeatletter
\def\HyPsd@CatcodeWarning#1{}
\makeatother




\begin{document}




\maketitle

\begin{sciabstract}\textbf{Abstract:}
We study a spatial birth-and-death process on the phase space of locally finite configurations $\Gamma^+ \times \Gamma^-$ over $\R^d$.
Dynamics is described by an non-equilibrium evolution of states obtained from the Fokker-Planck equation and 
associated with the Markov operator $L^+(\gamma^-) + \frac{1}{\e}L^-$, $\e > 0$.
Here $L^-$ describes the environment process on $\Gamma^-$ and $L^+(\gamma^-)$ describes the system process on $\Gamma^+$,
where $\gamma^-$ indicates that the corresponding birth-and-death rates depend on another locally finite configuration $\gamma^- \in \Gamma^-$. 
We prove that, for a certain class of birth-and-death rates, the corresponding Fokker-Planck equation is well-posed,
i.e. there exists a unique evolution of states $\mu_t^{\e}$ on $\Gamma^+ \times \Gamma^-$.
Moreover, we give a sufficient condition such that the environment is ergodic with exponential rate.
Let $\mu_{\mathrm{inv}}$ be the invariant measure for the environment process on $\Gamma^-$.
In the main part of this work we establish the stochastic averaging principle, i.e. we prove that the marginal 
of $\mu_t^{\e}$ onto $\Gamma^+$ converges weakly to an evolution of states on $\Gamma^+$
associated with the averaged Markov birth-and-death operator $\overline{L} = \int_{\Gamma^-}L^+(\gamma^-)d \mu_{\mathrm{inv}}(\gamma^-)$.
\end{sciabstract}

\noindent \textbf{AMS Subject Classification: }37L40; 37L55; 47D06; 82C22\\
\textbf{Keywords: }spatial birth-and-death processes; Fokker-Planck equation; ergodicity; averaging; Random evolution

\section{Introduction}
Many physical, ecological and biological phenomena can be modelled by spatial birth-and-death processes.
It is assumed that particles are located in a continuous space, say $\R^d$, are identical by properties, indistinguishable and,
randomly appear or disappear in the location space. Particular examples can be found in \cite{BCFKKO14}, \cite{FFHKKK15},\cite{SEW05}, see also the references therein. 

Models where it is assumed that the total number of particles remains finite at any moment of time
are fairly well-understood (see e.g. \cite{EW03}, \cite{FM04}, \cite{KOL06}).
In the case where the collection of all particles forms only a locally finite configuration in $\R^d$ the situation is much more subtle.
Existence of a Markov process is only known for the case where the death rate is constant and the birth rate is sublinear (see \cite{GKT06}, \cite{KS06}).
The main difficulty comes from the necessity to control the number of particles in any bounded volume.
However, in many cases it is possible to study the corresponding one-dimensional distributions in terms of solutions to the Fokker-Planck equation
under more broad assumptions on the birth-and-death rates (see e.g. \cite{FKK15} and the references given therein).
Such a construction is based on the use of correlation functions which are supposed to satisfy the Ruelle bound
and are obtained from a Markov analogue of the well-known BBGKY-hierarchy. 
A general description of this approach goes back to \cite{FKO09}, \cite{KKM08}.

This work is devoted to the study of the Fokker-Planck equation for a general class of models consisting of a birth-and-death process (system) with
rates depending on another birth-and-death process (environment) in the space of locally finite configurations over $\R^d$.
We suppose that the environment has, compared to the system, significantly large birth-and-death rates, i.e.
we consider the scaling regime where its rates are scaled by $\e^{-1}$ with $\e \ll 1$.
Based on similar assumptions to \cite{FK16b} we show that the corresponding Fokker-Planck equation for the coupled model (system with environment) is well-posed.
Moreover, we give a sufficient condition for the environment process to be ergodic (see Theorem \ref{TH:ERGODICITY}). 
Roughly speaking, the stochastic averaging principle asserts that the dynamics of the coupled model
may, in the scaling regime $\e \to 0$, be accurately described by an one-component dynamics with rates obtained by averaging 
the birth-and-death rates of the system w.r.t. the invariant measure of the environment. 
Such scheme is well-known in the physical literature and falls into the class of 
Markovian limits (see \cite{S88}).

Such kind of problems are well-developed in the framework of stochastic differential equations (see e.g. \cite{ET86},\cite{K92},\cite{PINSKY},\cite{SHS}).
The environments are typically relatively simple processes and the system consists of finitely many particles. 
Recently, we have established in \cite{FK16} the stochastic averaging principle for a system consisting of finitely many particles
evolving in an infinite particle environment in equilibrium.
The aim of this work is to extend this result to the case where both, the system and environment, are infinite particle systems
and, moreover, the environment is not assumed to be in equilibrium.

Since existence of a Markov process or an analysis of the backward Kolmogorov equation in the cases we consider is absent, 
we cannot directly apply the classical theory.
Having in mind that solutions to the Fokker-Planck equations are constructed in terms of correlation functions,
our main idea is to reformulate the problem in terms of correlation functions and then seek to apply abstract semigroup methods such as \cite{KURTZ73}.
One important difficulty in this approach is that, in contrast to finite particle systems, we cannot work with integrable correlation functions,
i.e. correlation functions corresponding to infinite particle systems are typically only Ruelle bounded.
The collection of all correlation functions then forms a cone in a weighted $L^{\infty}$-space 
where by the Lotz Theorem \cite{L85} any semigroup on such a space cannot by strongly continuous.
For this purpose we first study the pre-dual problem on a proper $L^1$-space describing the evolution of so-called quasi-observables.
Here we may apply \cite{KURTZ73} and then deduce our desired result on correlation functions by duality.

This work is organized as follows.
Our main results are formulated and discussed in the next section. 
For this purpose we introduce some notation used throughout this work.
Then we describe in some detail the environment process, the Fokker-Planck equation in Theorem \ref{TH:FPE}
and give a sufficient condition for the ergodicity of the environment in Theorem \ref{TH:ERGODICITY}. 
Afterwards we briefly discuss the system and the limiting process. The stochastic averaging principle is stated in Theorem \ref{TH:MAIN}.
Particular examples which show how this result can be applied are given in the third section.
The fourth section is devoted to the proofs of Theorem \ref{TH:FPE} and Theorem \ref{TH:ERGODICITY}.
Finally, based on the results of section four, a proof of the main result is given in section five.

\section{Statement of the results}

\subsection{Harmonic analysis on the configuration space}
The following is mainly based on \cite{KK02}.
Each particle is completely described by its position $x \in \R^d$ and the corresponding (one-type) configuration space is
\[
 \Gamma = \{ \gamma \subset \R^d \ | \ |\gamma \cap \Lambda| < \infty \text{ for all compacts } \Lambda \subset \R^d\},
\]
where $|\Lambda|$ denotes the number of elements in the set $\Lambda$.
It is well-known that $\Gamma$ is a Polish space w.r.t. the smallest topology such that 
$\gamma \longmapsto \sum_{x \in \gamma}f(x)$ is continuous for any continuous function $f$ having compact support (see \cite{KK06}).
Let $\mathcal{B}(\Gamma)$ be the Borel-$\sigma$-algebra on $\Gamma$. It is the smallest $\sigma$-algebra such that
$\Gamma \ni \gamma \longmapsto |\gamma \cap \Lambda|$ is measurable for any compact $\Lambda \subset \R^d$.

The space of finite configurations is defined by $\Gamma_0 := \{ \eta \subset \R^d \ | \ |\eta| < \infty\}$.
It is equipped with the smallest $\sigma$-algebra such that $\Gamma_0 \ni \eta \longmapsto |\eta \cap \Lambda|$
is measurable for any compact $\Lambda \subset \R^d$. 
We define a  measure $\lambda$ on $\Gamma_0$ by the relation
\[
 \int \limits_{\Gamma_0}G(\eta)d\lambda(\eta) = G(\emptyset) + \sum \limits_{n=1}^{\infty}\frac{1}{n!}\int\limits_{\R^{dn}}G(\{x_1,\dots, x_n\})dx_1\dots dx_n,
\]
where $G$ is any non-negative measurable function.
Let $B_{bs}(\Gamma_0)$ be the space of all bounded measurable functions with bounded support, i.e.
$G \in B_{bs}(\Gamma_0)$ iff $G$ is bounded and there exists a compact $\Lambda \subset \R^d$ and $N \in \N$ with $G(\eta) = 0$, whenever $\eta \cap \Lambda^c \neq \emptyset$ or
$|\eta| > N$. The $K$-transform is, for $G \in B_{bs}(\Gamma_0)$, defined by
\begin{align}\label{EQ:12}
 (KG)(\gamma) := \sum \limits_{\eta \Subset \gamma}G(\eta), \ \ \gamma \in \Gamma,
\end{align}
where $\Subset$ means that the sum is taken over all finite subsets of $\gamma$.
Let $\mu$ be a probability measure on $\Gamma^-$ with finite local moments, i.e. $\int_{\Gamma^-}|\gamma^- \cap \Lambda|^nd\mu(\gamma^-) < \infty$
for all compacts $\Lambda$ and $n \geq 1$. The correlation function $k_{\mu}: \Gamma_0 \longrightarrow \R_+$ is defined by the relation
\begin{align}\label{EQ:14}
 \int \limits_{\Gamma}(KG)(\gamma)d\mu(\gamma) = \int \limits_{\Gamma_0}G(\eta)k_{\mu}(\eta)d\lambda(\eta), \ \ G \in B_{bs}(\Gamma_0).
\end{align}
The Poisson measure $\pi_{z}$ with intensity measure $zdx$, $z > 0$, is main guiding example.
It is uniquely determined by the relation
$\pi_z( \{ \gamma \in \Gamma \ | \ |\gamma \cap \Lambda| = n \}) = \frac{m(\Lambda)^n z^n}{n!}$ 
where $m(\Lambda)$ denotes the Lebesgue measure of $\Lambda$.
Then $\pi_{z}$ has correlation function $k_{\pi_{z}}(\eta) = z^{|\eta|}$. 

Below we briefly describe how this notations are extended to the two-component case.
Namely, let $\Gamma^2 := \Gamma^+ \times \Gamma^-$ be equipped with the product topology where $\Gamma^{\pm}$ are two identical copies of $\Gamma$.
We let $\gamma = (\gamma^+,\gamma^-) \in \Gamma^2$, where $\gamma^+$ describes the particles of the system
and $\gamma^-$ the particles of the environment, respectively.
Similarly let $\Gamma_0^2 := \Gamma_0^+ \times \Gamma_0^-$, $\eta := (\eta^+, \eta^-)$ and $|\eta| := |\eta^+| + |\eta^-|$.
Define $G \in B_{bs}(\Gamma_0^2)$ iff $G$ is bounded and measurable
and there exists a compact $\Lambda \subset \R^d$ and $N \in \N$ such that $G(\eta) = 0$, whenever $|\eta| > N$ or $\eta^{\pm} \cap \Lambda^c \neq \emptyset$.
The $K$-transform is, for $G \in B_{bs}(\Gamma_0^2)$, defined by $(\mathbbm{K}G)(\gamma) := \sum_{\eta \Subset \gamma}G(\eta)$,
where $\xi \subset \eta$ and $\eta \Subset \gamma$ are defined component-wise.
Let $\mu$ be a probability measure on $\Gamma^2$ (=: state) with finite local moments, 
i.e. $\int_{\Gamma^2}|\gamma^- \cap \Lambda|^n |\gamma^+ \cap \Lambda|^n d\mu(\gamma) < \infty$ for all compacts $\Lambda$ and $n \geq 1$.
As before we define the correlation function $k_{\mu}$ by
\[
 \int \limits_{\Gamma^2}\mathbbm{K}G(\gamma)d \mu(\gamma) = \int \limits_{\Gamma_0^2}G(\eta)k_{\mu}(\eta)d \lambda(\eta), \ \ G \in B_{bs}(\Gamma_0^2).
\]
At this point it is worth to mention that not every non-negative function $k$ on $\Gamma_0^2$ is the correlation function of some state $\mu$.
It is necessary and sufficient that $k(\emptyset) = 1$ and that $k$ is positive definite in the sense of Lenard (see \cite{L75a}, \cite{L75b}).

\subsection{The environment}
Particles, in the framework of spatial birth-and-death processes, may randomly disappear and new particles may appear in the configuration $\gamma^- \in \Gamma^-$.
Death of a particle $x \in \gamma^-$ is described by the death rate $d^-(x,\gamma^-) \geq 0$. 
Similarly, $b^-(x,\gamma^-) \geq 0$ describes the birth rate and distribution of a new particle $x \in \R^d \backslash \gamma^-$.
In this work we assume that the environment is described by a Markov operator (formally) given by
\begin{align}\label{MCSL:21}
 (L^-F)(\gamma^-) = \sum \limits_{x \in \gamma^-}d^-(x,\gamma^- \backslash x)(F(\gamma^- \backslash x) - F(\gamma^-)) + \int \limits_{\R^d}b^-(x,\gamma^-)(F(\gamma^- \cup x) - F(\gamma^-))dx
\end{align}
where $F \in \mathcal{FP}(\Gamma^-) := K(B_{bs}(\Gamma_0^-))$.
For simplicity of notation we have let $\gamma^- \backslash x, \gamma^- \cup x$ stand for $\gamma^- \backslash \{x\}, \gamma^- \cup \{x\}$.
Note that in general we cannot expect that $d^-,b^-$ are well-defined for all $\gamma^- \in \Gamma^-$ and $x \in \R^d$.
Below we discuss our assumptions on the birth-and-death rates.
\begin{enumerate}
 \item[(E1)] There exist measurable functions $D^-, B^-: \R^d \times \Gamma_0^- \longrightarrow \R$ such that
 \begin{align}\label{EQ:20}
  d^-(x,\gamma^-) = \sum \limits_{\eta^- \Subset \gamma^-}D^-(x,\eta^-), \qquad b^-(x,\gamma^-) = \sum \limits_{\eta^- \Subset \gamma^-}B^-(x,\eta^-), \ \ x \in \R^d, \ \ \gamma^- \in \Gamma^-,
 \end{align}
 Moreover there exist constants $A > 0$, $N \in \N$ and $\nu \geq 0$ such that
 \begin{align*}
  b^-(x,\eta^-) + d^-(x,\eta^-) \leq A(1+|\eta^-|)^{N}e^{\nu|\eta^-|}, \ \ \eta^- \in \Gamma_0^-, \ x \in \R^d.
 \end{align*}
 \item[(E2)] There exists $C_- > 0$ and $a_- \in (0,2)$ such that
  \[
   c_-(\eta^-) \leq a_- M_-(\eta^-), \ \ \eta^- \in \Gamma_0^-
  \]
  holds, where $M_-(\eta^-) = \sum_{x \in \eta^-}d^-(x,\eta^- \backslash x)$ and
  \begin{align*}
   c_-(\eta^-) := &\ \sum \limits_{x \in \eta^-}\int \limits_{\Gamma_0^-}\left| \sum \limits_{\zeta^- \subset \eta^- \backslash x}D^-(x,\xi^- \cup \zeta^-)\right|C_-^{|\xi^-|}d \lambda(\xi^-)
  \\ &\ \ \ + \frac{1}{C_-}\sum \limits_{x \in \eta^-}\int \limits_{\Gamma_0^-}\left| \sum \limits_{\zeta^- \subset \eta^- \backslash x}B^-(x,\xi^- \cup \zeta^-)\right|C_-^{|\xi^-|}d \lambda(\xi^-).
 \end{align*}
\end{enumerate}
A priori it is not clear that the sums in \eqref{EQ:20} are absolutely convergent, it is part of the assertion of Lemma \ref{ENV:LEMMA00}.
For the last condition let $(R_{\delta})_{\delta > 0} \subset L^1(\R^d) \cap C(\R^d)$ 
with $0 < R_{\delta} \leq 1$ and $R_{\delta}(x) \nearrow 1$ for all $x$ as $\delta \to 0$. 
\begin{enumerate}
 \item[(E3)] For all $\delta > 0$ there exists a measurable function $V_{\delta}: \Gamma_0^- \longrightarrow \R_+$ and constants $c_{\delta},\e_{\delta} > 0$ such that 
 \[
  (L_{\delta}^-V_{\delta})(\eta^-) \leq c_{\delta}( 1 + V_{\delta}(\eta^-)) - \e_{\delta} D_{\delta}^-(\eta^-), \ \ \eta^- \in \Gamma_0^-
 \]
 where $L_{\delta}^-$ is obtained from $L^-$ with $b^-(x,\cdot)$ replaced by $R_{\delta}(x)b^-(x,\cdot)$ and 
 \[
  D_{\delta}^-(\eta^-) = M_-(\eta^-) + \int \limits_{\R^d}R_{\delta}(x)b^-(x,\eta^-)dx.
 \]
\end{enumerate}
Let us give some additional comments on this assumptions.
\begin{remark}
 \begin{enumerate}
  \item[(i)] Here $d^-(x,\emptyset) = D^-(x,\emptyset)$ describes the constant mortality (without interactions)
  whereas $D^-(x,\{y\})$ takes pair interactions (competition of $x$ with $y$) into account. General $n$-point interactions are
  described by $D^-(x,\eta^-)$ where $|\eta^-| = n$.
  \item[(ii)] For many particular models, see the next section, the function $c_-$ can be computed explicitly.
  Condition (E2) is satisfied, provided $d^-(x,\emptyset)$ is large enough, i.e. it describes some sort of ''high-mortality regime''.
  \item[(iii)] Condition (E3) asserts the existence of some sort of Lyapunov function for the generator $L_{\delta}^-$.
  The latter one describes a birth-and-death Markov process on $\Gamma_0^-$ and under the given condition this process is conservative. 
  A simpler sufficient condition for (E3) is given in the next section;
  e.g. if $b^-(x,\eta^-) \leq A(1+|\eta^-|)$ for some constant $A > 0$, then (E3) holds.
 \end{enumerate}
\end{remark}
We study dynamics of the environment in terms of one-dimensional distributions, i.e. solutions to the Fokker-Planck equation
\begin{align}\label{EQ:06}
 \frac{d}{dt}\int \limits_{\Gamma^-}F(\gamma^-)d \mu_t^-(\gamma^-) = \int \limits_{\Gamma^-}L^-F(\gamma^-)d \mu_t^-(\gamma^-), \ \ \mu_t^-|_{t= 0} = \mu_0^-, \ \ t \geq 0.
\end{align}
The latter one is analyzed in the class $\mathcal{P}_{C_-}$, where 
$\mu \in \mathcal{P}_{C_-}$ if and only if the correlation function $k_{\mu}$ exists and satisfies for some constant $A(\mu) > 0$ the so-called Ruelle bound 
\begin{align}\label{RUELLE:BOUND}
 k_{\mu}(\eta^-) \leq A(\mu)C_-^{|\eta^-|}, \ \ \eta^- \in \Gamma_0^-.
\end{align}
It is worth to mention that under condition \eqref{RUELLE:BOUND} the correlation function $k_{\mu}$ uniquely determines the state $\mu$
(see \cite{L75a}). The next lemma shows that $L^-$ is well-defined.
\begin{lemma}\label{ENV:LEMMA00}
 Suppose that conditions (E1) and (E2) are satisfied. Then
 \begin{enumerate}
  \item[(a)] For any $\mu \in \mathcal{P}_{C_-}$ and $x \in \R^d$ we have
  \begin{align*}
   \int \limits_{\Gamma^-}\left(d^-(x,\gamma^-) + b(x,\gamma^-)\right)d\mu(\gamma^-) \leq \max\{1, C_- \} A(\mu)a_- d^-(x,\emptyset).
  \end{align*}
  In particular, the sums in \eqref{EQ:20} are absolutely convergent for $\mu$-a.a. $\gamma^- \in \Gamma^-$ and all $x \in \R^d$.
  \item[(b)] We have $L^-F \in L^1(\Gamma^-, d\mu)$ for all $\mu \in \mathcal{P}_{C_-}$ and $F \in \mathcal{FP}(\Gamma^-)$.
 \end{enumerate}
\end{lemma}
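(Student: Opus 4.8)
The plan is to reduce every integral over $\Gamma^-$ to a weighted integral over the finite configuration space $\Gamma_0^-$ by means of the correlation-function duality, and then to read off the required bounds from the quantities $c_-$ and $M_-$ appearing in (E2).

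For part (a), since $d^-(x,\gamma^-),b^-(x,\gamma^-)\geq 0$, the triangle inequality applied to the representation \eqref{EQ:20} gives $d^-(x,\gamma^-)\leq\sum_{\eta^-\Subset\gamma^-}|D^-(x,\eta^-)| = (K|D^-(x,\cdot)|)(\gamma^-)$, and similarly for $b^-$. The defining relation of the correlation function extends from $B_{bs}(\Gamma_0^-)$ to arbitrary non-negative measurable integrands by monotone convergence, so
\begin{align*}
\int_{\Gamma^-}d^-(x,\gamma^-)d\mu(\gamma^-) \leq \int_{\Gamma_0^-}|D^-(x,\eta^-)|k_\mu(\eta^-)d\lambda(\eta^-) \leq A(\mu)\int_{\Gamma_0^-}|D^-(x,\eta^-)|C_-^{|\eta^-|}d\lambda(\eta^-),
\end{align*}
where the last step uses the Ruelle bound \eqref{RUELLE:BOUND}; the analogous estimate holds for $b^-$ with $B^-$ in place of $D^-$. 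The crucial observation is that evaluating the definition of $c_-$ at the one-point configuration $\eta^-=\{x\}$ collapses the inner sum (since $\{x\}\backslash x=\emptyset$) and yields exactly
\begin{align*}
c_-(\{x\}) = \int_{\Gamma_0^-}|D^-(x,\xi^-)|C_-^{|\xi^-|}d\lambda(\xi^-) + \frac{1}{C_-}\int_{\Gamma_0^-}|B^-(x,\xi^-)|C_-^{|\xi^-|}d\lambda(\xi^-),
\end{align*}
while $M_-(\{x\}) = d^-(x,\emptyset)$. A short case distinction on whether $C_-\geq 1$ shows that $\int_{\Gamma_0^-}|D^-(x,\cdot)|C_-^{|\cdot|}d\lambda + \int_{\Gamma_0^-}|B^-(x,\cdot)|C_-^{|\cdot|}d\lambda \leq \max\{1,C_-\}\,c_-(\{x\})$, so that (E2) at $\{x\}$ gives the claimed bound $\max\{1,C_-\}A(\mu)a_- d^-(x,\emptyset)$. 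Finiteness of $\int_{\Gamma^-}(K|D^-(x,\cdot)|)d\mu$ then forces $(K|D^-(x,\cdot)|)(\gamma^-)<\infty$ for $\mu$-a.a.\ $\gamma^-$, which is precisely the asserted absolute convergence of the series in \eqref{EQ:20}.

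For part (b) I would write $F=KG$ with $G\in B_{bs}(\Gamma_0^-)$ supported on configurations of at most $N$ points inside a compact $\Lambda$, and use the elementary difference identities $F(\gamma^-\cup x)-F(\gamma^-) = (KG(\cdot\cup x))(\gamma^-)$ and $F(\gamma^-\backslash x)-F(\gamma^-) = -(KG(\cdot\cup x))(\gamma^-\backslash x)$. Substituting these into $L^-F$ and using $d^-(x,\cdot)=KD^-(x,\cdot)$, $b^-(x,\cdot)=KB^-(x,\cdot)$ together with the product rule $(KG_1)(KG_2)=K(G_1\star G_2)$, the birth term becomes a $K$-transform in $\gamma^-$, while the death term, after the combinatorial rearrangement $\sum_{x\in\gamma^-}(K\psi_x)(\gamma^-\backslash x) = K\big(\eta^-\mapsto\sum_{x\in\eta^-}\psi_x(\eta^-\backslash x)\big)(\gamma^-)$, is likewise a single $K$-transform. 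Taking absolute values inside and applying the same duality and Ruelle bound as in (a) reduces $\int_{\Gamma^-}|L^-F|d\mu$ to weighted $\lambda$-integrals over $\Gamma_0^-$; these are finite because the support of $G$ localizes the relevant configurations to $\Lambda$, while the rate bound in (E1), of the form $A(1+|\eta^-|)^Ne^{\nu|\eta^-|}$, is integrable against $C_-^{|\eta^-|}d\lambda$ on bounded volumes (the $\lambda$-integral produces a convergent exponential series).

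The routine parts are the difference identities and the monotone-convergence justification of the duality; the main obstacle is the combinatorial bookkeeping in part (b), namely rewriting the products of $K$-transforms via the $\star$-product and organizing the sum over $x\in\gamma^-$ so that everything collapses to a single, manifestly $\mu$-integrable $K$-transform, all while keeping careful track of the localizing role of the support of $G$. Part (a), by contrast, is essentially forced once one recognizes that its two integrals are precisely $c_-(\{x\})$.
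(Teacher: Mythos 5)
Your part (a) is correct and is essentially the paper's own proof: extend the $K$-transform duality to non-negative integrands, apply the Ruelle bound, recognize the two weighted integrals as $c_-(\{x\})$ up to the factor $\max\{1,C_-\}$, and apply (E2) at the singleton $\{x\}$ together with $M_-(\{x\})=d^-(x,\emptyset)$.

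Part (b), however, contains a genuine gap in the final integrability step. Your structural plan is the paper's: rewrite $L^-F=L^-KG$ as a single $K$-transform $K(\widehat{L}^-G)$ (the paper does this by computing $(K^{-1}d^-(x,\cdot\cup\xi^-\backslash x))$ and $(K^{-1}b^-(x,\cdot\cup\xi^-))$ and invoking \cite[Proposition 3.1]{FKK12}), and then conclude by duality. But your justification that the resulting weighted $\lambda$-integrals are finite ``because the support of $G$ localizes the relevant configurations to $\Lambda$'' together with (E1) does not work. The pre-image has the form
\begin{align*}
 (\widehat{L}^-G)(\eta^-)=-\sum_{\xi^-\subset\eta^-}G(\xi^-)\sum_{x\in\xi^-}\sum_{\zeta^-\subset\xi^-\backslash x}D^-(x,\eta^-\backslash\xi^-\cup\zeta^-)+\ldots,
\end{align*}
so the support of $G$ localizes only the variable $\xi^-$; the complementary part $\eta^-\backslash\xi^-$ --- which is precisely the variable integrated against the weight $C_-^{|\cdot|}\,d\lambda$ --- runs over all of $\Gamma_0^-$, with arbitrarily many particles arbitrarily far from $\Lambda$. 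Condition (E1) bounds the rates $d^-,b^-$ (the subset-sums of $D^-,B^-$) uniformly in $x$, but it gives no spatial decay of $D^-,B^-$ themselves, hence no finiteness of $\int_{\Gamma_0^-}|D^-(x,\eta^-)|C_-^{|\eta^-|}d\lambda(\eta^-)$: each $n$-particle layer is an integral over $(\R^d)^n$ which (E1) does not make convergent. Your remark that the (E1)-bound is ``integrable on bounded volumes'' is true but irrelevant, since the configurations appearing here are not confined to a bounded volume. The missing ingredient is (E2), applied now at general configurations rather than singletons: using \eqref{IBP} one obtains
\begin{align*}
 \|\widehat{L}^-G\|_{\Lb_{C_-}}\leq\int_{\Gamma_0^-}c_-(\xi^-)|G(\xi^-)|C_-^{|\xi^-|}d\lambda(\xi^-)
 \leq a_-\int_{\Gamma_0^-}M_-(\xi^-)|G(\xi^-)|C_-^{|\xi^-|}d\lambda(\xi^-),
\end{align*}
and only at this last stage do (E1) and the localization of $G$ enter, namely to bound $M_-$ on the bounded-volume, bounded-cardinality support of $G$. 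This is exactly the paper's route: Proposition \ref{PROP:03} uses (E2) to prove the relative bound of the off-diagonal part against $A=-M_-\cdot$, whence $\widehat{L}^-G\in\Lb_{C_-}\subset L^1(\Gamma_0^-,k_{\mu}d\lambda)$ and therefore $L^-F=K\widehat{L}^-G\in L^1(\Gamma^-,d\mu)$. In short, you used the (E2)-mechanism correctly in (a) but dropped it in (b), where it is indispensable and cannot be replaced by (E1) plus localization.
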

A proof is given in section four.
Following the general scheme described in \cite{KKM08}, we study solutions to \eqref{EQ:06} in terms of correlation functions $(k_t)_{t \geq 0}$.
The latter ones should (at least formally) satisfy a Markov analogue of the BBGKY-hierarchy known from physics (see equation \eqref{EQ:01}).
Motivated by \eqref{RUELLE:BOUND} we denote by $\K_{C_-}$ the Banach space of essentially bounded functions $k$ on $\Gamma_0^-$ with norm
$\| k \|_{\K_{C_-}} = \esssup_{\eta \in \Gamma_0^-} | k(\eta^-)|C_-^{-|\eta^-|}$.
\begin{theorem}\label{TH:FPE}
 Suppose that conditions (E1) -- (E3) are satisfied. 
 Then for each $\mu_0^- \in \mathcal{P}_{C_-}$ 
 there exists a family of states $(\mu_t^-)_{t \geq 0} \subset \mathcal{P}_{C_-}$ with the following properties
 \begin{enumerate}
  \item[(a1)] $t \longmapsto \int_{\Gamma^-}(L^-F)(\gamma^-)d\mu_t^-(\gamma^-)$ is continuous for all $F \in \mathcal{FP}(\Gamma^-)$.
  \item[(a2)] $\| k_{\mu_t} \|_{\K_{C_-}} \leq \| k_{\mu_0} \|_{\K_{C_-}}$ for all $t \geq 0$.
  \item[(a3)] $t \longmapsto \int_{\Gamma^-}F(\gamma^-)d\mu_t^-(\gamma^-)$ is continuously differentiable and \eqref{EQ:06} holds for all $F \in \mathcal{FP}(\Gamma^-)$.
 \end{enumerate}
 Moreover, this solution is unique among all $(\nu_t^-)_{t \geq 0} \subset \mathcal{P}_{C_-}$ which satisfy
 \begin{enumerate}
  \item[(b1)] $t \longmapsto \int_{\Gamma^-}(L^-F)(\gamma^-)d\nu_t^-(\gamma^-)$ is locally integrable for all $F \in \mathcal{FP}(\Gamma^-)$.
  \item[(b2)] $\sup_{t \in [0,T]}\| k_{\nu_t} \|_{\K_{C_-}} < \infty$ holds for all $T > 0$.
  \item[(b3)] $t \longmapsto \int_{\Gamma^-}F(\gamma^-)d\nu_t^-(\gamma^-)$ is absolutely conditions and \eqref{EQ:06} holds 
  for all $F \in \mathcal{FP}(\Gamma^-)$ and a.a. $t \geq 0$.
 \end{enumerate}
\end{theorem}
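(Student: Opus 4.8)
The plan is to pass from the measure-valued formulation \eqref{EQ:06} to a linear Cauchy problem for correlation functions via the $K$-transform. Writing $F = KG$ with $G \in B_{bs}(\Gamma_0^-)$ and combining the duality \eqref{EQ:14} with Lemma \ref{ENV:LEMMA00}, equation \eqref{EQ:06} becomes, at least formally, the hierarchy $\frac{d}{dt}k_t = L^{\Delta}k_t$, where $L^{\Delta} = (\widehat{L}^-)'$ is the dual of the symbol $\widehat{L}^- := K^{-1}L^-K$ acting on quasi-observables. From (E1) the operator $\widehat{L}^-$ can be written out explicitly and consists of a diagonal multiplication operator together with integral birth-and-death terms. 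As noted in the introduction, the Lotz theorem forbids a strongly continuous semigroup on the cone of Ruelle-bounded functions inside $\K_{C_-}$, so I would not solve the $k$-equation directly; instead I would solve the pre-dual problem on an $L^1$-space and recover $k_t$ by duality.

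\textbf{The pre-dual semigroup.} Let $\Lb_{C_-} := L^1(\Gamma_0^-, C_-^{|\eta^-|}d\lambda)$, whose dual is isometrically $\K_{C_-}$ under the pairing $\int_{\Gamma_0^-}G\,k\,d\lambda$. On $\Lb_{C_-}$ I would decompose $\widehat{L}^- = \widehat{L}_0^- + \widehat{L}_1^-$, with $\widehat{L}_0^-$ the multiplication by $-M_-(\eta^-)$ (the negative total death intensity) and $\widehat{L}_1^-$ collecting the remaining birth-and-death contributions. The operator $\widehat{L}_0^-$ generates a positive contraction $C_0$-semigroup, and condition (E2), namely $c_-(\eta^-) \le a_-M_-(\eta^-)$ with $a_- \in (0,2)$, is precisely the relative bound of $\widehat{L}_1^-$ against $\widehat{L}_0^-$ in the $L^1$-norm that a perturbation theorem for positive semigroups requires; the admissible range $a_- < 2$ is exactly the $L^1$ threshold of such theorems.

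\textbf{Conservativity via (E3) and passage to the limit.} Since the integrated birth intensity $\int_{\R^d}b^-(x,\eta^-)dx$ need not be finite, I would make the perturbation argument rigorous through the cut-offs $R_{\delta}$: these render $\int_{\R^d}R_{\delta}(x)b^-(x,\eta^-)dx$ finite, so that the approximating generators $\widehat{L}_{\delta}^-$ are well-defined and generate positive $C_0$-semigroups $\widehat{T}_{\delta}(t)$ on $\Lb_{C_-}$. The Lyapunov bound $(L_{\delta}^-V_{\delta})(\eta^-) \le c_{\delta}(1 + V_{\delta}(\eta^-)) - \e_{\delta}D_{\delta}^-(\eta^-)$ of (E3) controls the total intensity $D_{\delta}^-$ uniformly in $\delta$, which guarantees that the associated cut-off dynamics is conservative and provides the tightness needed to pass to the limit. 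Applying the abstract semigroup/approximation machinery of \cite{KURTZ73}, I would let $\delta \to 0$ to obtain strong convergence $\widehat{T}_{\delta}(t) \to \widehat{T}(t)$, identify $\widehat{T}(t)$ as the semigroup generated by a suitable extension of $\widehat{L}^-$, and conclude that it is a positive conservative contraction, whence $\|\widehat{T}(t)\| \le 1$.

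\textbf{Dualization and verification.} Finally I would set $k_t := \widehat{T}(t)'k_{\mu_0^-}$, the adjoint semigroup on $\K_{C_-}$, which is weak-$*$ continuous. Contractivity yields (a2), i.e. $\|k_{\mu_t}\|_{\K_{C_-}} \le \|k_{\mu_0}\|_{\K_{C_-}}$, while the generator identity on $B_{bs}(\Gamma_0^-)$, transported back through $K$, gives (a1) and the differential form (a3) of \eqref{EQ:06}, with the regularity inherited from the semigroup; uniqueness under (b1)--(b3) follows from uniqueness of the solution to the pre-dual Cauchy problem in $\Lb_{C_-}$ via the same pairing. The step I expect to be the main obstacle is showing that $k_t$ is genuinely the correlation function of a state $\mu_t^- \in \mathcal{P}_{C_-}$: one must verify that positive-definiteness in the sense of Lenard is preserved under $\widehat{T}(t)'$, which is delicate because this dual semigroup is only weak-$*$, and not strongly, continuous on $\K_{C_-}$. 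I would establish this by approximation with the conservative finite-intensity dynamics of the $R_{\delta}$-cut-off systems, whose solutions are manifestly states, and then pass to the limit using the uniform Ruelle bound from (E2) together with the tightness furnished by the Lyapunov function in (E3).
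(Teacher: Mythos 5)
Your overall architecture is the paper's: pass from \eqref{EQ:06} to the hierarchy via the $K$-transform, solve the pre-dual problem on the weighted $L^1$-space $\Lb_{C_-}$, recover $k_t$ as the adjoint orbit, and treat Lenard positivity \eqref{EQ:19} as the crux, handled by the $R_{\delta}$ cut-off dynamics together with (E3); your reading of (E2), with threshold $a_-<2$, as the $L^1$ relative bound is exactly what drives Proposition \ref{PROP:03}. The genuine gap is in your step ``Conservativity via (E3) and passage to the limit'': you assert that the perturbation argument cannot be applied to $\widehat{L}^-$ directly because $\int_{\R^d}b^-(x,\eta^-)dx$ need not be finite, and you therefore construct the semigroup as a strong $\delta\to 0$ limit of cut-off semigroups, invoking \cite{KURTZ73}.

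That premise is false, and the detour it motivates is unsupported. In the pre-dual operator the birth part integrates $G(\xi^-\cup x)$ over $x\in\R^d$, so no finiteness of $\int_{\R^d}b^-(x,\eta^-)dx$ is required; condition (E2), i.e. $c_-\le a_-M_-$, controls the off-diagonal part $B$ relative to $A=-M_-\cdot$ on $\Lb_{C_-}$ without any cut-off. This is precisely how the paper proceeds: the Thieme--Voigt theorem \cite[Theorem 2.2]{TV06} applied to $A+B_*$, then the domination result of \cite{AR91} with $|BG|\le B_*|G|$ to get the analytic contraction semigroup $T^-(t)$, using (E1)--(E2) only; the cut-offs and (E3) enter solely in the Lenard-positivity step. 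Your substitute construction lacks the ingredients that would make it work: no uniform resolvent estimates or convergence on a core identifying the limit generator as an extension of $\widehat{L}^-$ (a Trotter--Kato-type argument), and \cite{KURTZ73} does not supply one --- it is a singular-perturbation theorem for $A+\e^{-1}B$, used in the paper only for Theorem \ref{TH:MAIN}. Note also that (E3) is a pointwise Lyapunov inequality for the cut-off jump process on finite configurations $\Gamma_0^-$; it yields conservativity of that process (exactly what the positivity step needs), not tightness or compactness for a family of semigroups on $\Lb_{C_-}$. A second, smaller gap: uniqueness under (b1)--(b3) does not simply ``follow from uniqueness of the pre-dual Cauchy problem via the pairing''; one must first show that any Ruelle-bounded weak solution is continuous in a suitable weak-$*$ topology before the duality argument closes, which the paper does via \cite[Theorem 2.1]{WZ06} and \cite[Lemma 1.10]{WZ06} in Proposition \ref{PROP:04}. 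If you replace your limit construction by the direct perturbation argument you already half-state, and justify uniqueness along these lines, your proposal becomes the paper's proof.
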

Classical solutions to the BBGKY-hierarchy (see \eqref{EQ:01}) with general rates have been first obtained in \cite{FKK12},\cite{FKK15}
in the class of functions obeying \eqref{RUELLE:BOUND}.
The construction given in this work relies on the same idea, but we use suitable perturbation theory for substochastic semigroups on weighted $L^1$-spaces instead.
At this point condition (E2) is used to show that certain operators involved in the construction are relatively bounded.

In general a solution to the BBGKY-hierarchy does not need to determine uniquely an evolution of states.
It is necessary and sufficient to show that such a solution is positive definite in the sense of Lenard (see \eqref{EQ:19} for the definition).
Such a property was shown under more stringent assumptions in \cite{FK16b}. Namely an additional technical assumption (stronger then (E2))
was imposed on $c_-$; the initial condition $\mu_0^-$ was assumed to belong to some strictly smaller subspace $\mathcal{P}' \subset \mathcal{P}_{C_-}$;
and finally condition (E3) was replaced by some more technical condition.
Since the main ideas of the proof are the same as in \cite{FK16b}, we do not give a full proof.
Instead we outline the most important steps in the first part of section four.
\begin{remark}
 It is not difficult to adapt this result for a spatial birth-and-death process on $\Gamma^2$. This will be used, without proof, later on.
\end{remark}
Recently, we have studied in \cite{F16WR} exponential ergodicity for a two-component Glauber-type process 
where particles are allowed to change their type at certain multiplicative rates. 
In this work we give a general result applicable for a birth-and-death process with Markov operator \eqref{MCSL:21}.
\begin{theorem}\label{TH:ERGODICITY}
 Suppose that conditions (E1) -- (E3) are satisfied. Moreover, assume that
 \begin{align}\label{EQ:05}
  \inf \left\{  \sum \limits_{x \in \eta^-}d^-(x,\eta^- \backslash x) \ \bigg | \ \emptyset \neq \eta^- \in \Gamma_0^- \right\} > 0.
 \end{align}
 Then there exists  $\mu_{\mathrm{inv}} \in \mathcal{P}_{C_-}$ with correlation function $k_{\mathrm{inv}}$ such that:
 \begin{enumerate}
  \item[(a)] $\mu_{\mathrm{inv}}$ is the unique stationary solution to \eqref{EQ:06}.
  \item[(b)] There exist constants $A, \lambda > 0$ such that for any $\mu_0^- \in \mathcal{P}_{C_-}$ it holds that
  \begin{align*}
  \| k_{\mu_t^-} - k_{\mu_{\mathrm{inv}}} \|_{\K_{C_-}} \leq A e^{-\lambda t}\| k_{\mu_0^-} - k_{\mu_{\mathrm{inv}}}\|_{\K_{C_-}}, \ \ t \geq 0
  \end{align*}
  where $(\mu_t^-)_{t \geq 0} \subset \mathcal{P}_{C_-}$ is the unique solution to \eqref{EQ:06}.
 \end{enumerate}
\end{theorem}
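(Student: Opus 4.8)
The plan is to establish ergodicity by working on the level of correlation functions, where Theorem~\ref{TH:FPE} already provides a strongly continuous (or at least well-behaved) evolution on the Banach space $\K_{C_-}$. The key idea is to reformulate the Fokker-Planck equation \eqref{EQ:06} as an abstract Cauchy problem $\frac{d}{dt}k_t = L^{\Delta}k_t$ on $\K_{C_-}$, where $L^{\Delta}$ is the (dual) generator acting on correlation functions arising from the BBGKY-hierarchy \eqref{EQ:01}. Since Theorem~\ref{TH:FPE} gives existence, uniqueness, and the contraction-type bound (a2), it provides a semigroup $(T(t))_{t \geq 0}$ on $\K_{C_-}$ with $\|T(t)\| \leq 1$ and $k_{\mu_t^-} = T(t)k_{\mu_0^-}$. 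The strategy is then to show that this semigroup has a spectral gap.

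First I would use the additional hypothesis \eqref{EQ:05}, namely that the total death intensity $M_-(\eta^-) = \sum_{x \in \eta^-}d^-(x,\eta^-\backslash x)$ is bounded below by a positive constant $\delta_0$ on all nonempty configurations. Combined with condition (E2), which controls the off-diagonal part $c_-(\eta^-)$ by $a_- M_-(\eta^-)$ with $a_- < 2$, this should give a genuine gap: on the $L^1$-predual side (quasi-observables) the diagonal multiplication operator $-M_-$ dominates the perturbation with slack $(2-a_-)$, so after the standard substochastic perturbation argument one recovers a dissipativity estimate of the form $\langle (\text{generator})G, \text{sgn}\,G\rangle \leq -\lambda \|G\|$ on the complement of the constants. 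I would make this precise by decomposing the space according to whether $\eta^- = \emptyset$ or $\eta^- \neq \emptyset$: the constant function (corresponding to $k(\emptyset)=1$) is the fixed point, and on its complement the bound \eqref{EQ:05} forces strict contraction.

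Next I would construct $\mu_{\mathrm{inv}}$ as the stationary solution. Existence of an invariant correlation function $k_{\mathrm{inv}}$ follows by solving $L^{\Delta}k_{\mathrm{inv}} = 0$ recursively on $\Gamma_0^-$ graded by $|\eta^-|$: the hierarchy is triangular, so one determines $k_{\mathrm{inv}}$ level by level, using (E2)–(E3) and \eqref{EQ:05} to guarantee the Ruelle bound $k_{\mathrm{inv}}(\eta^-) \leq A\, C_-^{|\eta^-|}$ is propagated. The positive-definiteness in the sense of Lenard — which is needed to conclude that $k_{\mathrm{inv}}$ actually corresponds to a genuine state $\mu_{\mathrm{inv}} \in \mathcal{P}_{C_-}$ — is inherited as a $t\to\infty$ limit from the states $\mu_t^-$ produced by Theorem~\ref{TH:FPE}, since positive-definiteness is preserved under the pointwise limits guaranteed by the uniform bound (a2). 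Uniqueness of the stationary state is immediate from the spectral gap.

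The hard part will be converting the $L^1$-dissipativity estimate for quasi-observables into the claimed exponential bound on the $\K_{C_-}$ norm of $k_{\mu_t^-} - k_{\mathrm{inv}}$ by duality. One must be careful because, as the authors note via the Lotz theorem, the semigroup on $\K_{C_-}$ cannot be strongly continuous, so the decay estimate cannot be read off directly from a generator on $\K_{C_-}$; instead the rate $\lambda$ and constant $A$ must be transported from the predual semigroup through the duality pairing \eqref{EQ:14}. I would therefore prove the gap on the predual $L^1$-space first, obtaining $\|S(t)G\|_{L^1} \leq A e^{-\lambda t}\|G\|_{L^1}$ for quasi-observables $G$ with $\int G\,d\lambda = 0$, and then dualize: writing $k_{\mu_t^-} - k_{\mathrm{inv}} = T(t)(k_{\mu_0^-} - k_{\mathrm{inv}})$ and pairing against test quasi-observables, the operator-norm estimate $\|T(t)\|_{\K_{C_-}} = \|S(t)\|_{L^1} \leq A e^{-\lambda t}$ on the appropriate zero-mean subspace yields exactly the bound in (b). Verifying that the zero-mean subspace is exactly the annihilator of the constants (i.e. that $k-k_{\mathrm{inv}}$ lands in the correct dual subspace for all $\mu_0^- \in \mathcal{P}_{C_-}$) is the main technical point I would need to check carefully.
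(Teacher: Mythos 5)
Your overall architecture --- prove a spectral gap for the predual dynamics on the $L^1$-space $\Lb_{C_-}$, transport it to $\K_{C_-}$ by duality, and recover positive definiteness of $k_{\mathrm{inv}}$ as a $t \to \infty$ limit of the $k_{\mu_t^-}$ --- is exactly the paper's, and your dissipativity estimate (with slack $2-a_-$ from (E2) and the lower bound $M_* := \inf_{\emptyset \neq \eta^-} M_-(\eta^-) > 0$ from \eqref{EQ:05}) would indeed reproduce the paper's gap $\lambda_0 = (2-a_-)M_*$ in a more elementary way than the paper's resolvent and sectoriality computation. However, two of your steps fail as stated. First, the stationary hierarchy $L^{\Delta,-}k_{\mathrm{inv}} = 0$ is \emph{not} triangular in the grading by $|\eta^-|$: both terms of $L^{\Delta,-}$ integrate $k(\eta^- \cup \xi^-)$, respectively $k(\eta^- \cup \xi^- \backslash x)$, over \emph{all} $\xi^- \in \Gamma_0^-$, so the equation at order $n$ couples to every order $\geq n-1$, and a level-by-level recursion cannot determine $k_{\mathrm{inv}}$. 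The paper's substitute is a Kirkwood--Salsburg-type fixed-point equation: dividing by $M_-(\eta^-)$ (legitimate precisely because of \eqref{EQ:05}) one rewrites the stationary equation as $\widetilde{k} - S\widetilde{k} = \rho\,\1_{\Gamma_1^-}(\cdot)\sum_{x}B^-(x,\emptyset)/D^-(x,\emptyset)$ on $\K_{C_-}^{\geq 1}$, and (E2) gives $\| S \| < 1$, so existence and uniqueness of $k_{\mathrm{inv}}$ follow from the Neumann series. This is a missing idea, not a detail.

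Second, your duality step transports the decay from the wrong subspace. The fixed vector of the predual semigroup is the indicator $0^{|\eta^-|}$ of the empty configuration (since $\widehat{L}^-P_0 = 0$), not ``the constants'': the decay holds for the compression $P_{\geq 1}T^-(t)P_{\geq 1}$ on $\Lb_{C_-}^{\geq 1} = \{G : G(\emptyset) = 0\}$, and one must indeed work with the compression because $\widehat{L}^-$ has a nonzero off-diagonal block $P_0\widehat{L}^-P_{\geq 1}$, so $\Lb_{C_-}^{\geq 1}$ itself is not invariant. Dualizing this gives decay of $T^-(t)^*$ on $\{k \in \K_{C_-} : k(\emptyset) = 0\}$, and $k_{\mu_0^-} - k_{\mathrm{inv}}$ lies there automatically because correlation functions of states satisfy $k(\emptyset) = 1$ --- there is nothing to verify. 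By contrast, your zero-mean subspace $\{G : \int G\, d\lambda = 0\}$ is the pre-annihilator of the constant function $1 \in \K_{C_-}$, i.e.\ of the correlation function of the Poisson measure $\pi_1$; it is not invariant under $T^-(t)$, and the estimate you propose on it is false in general: by the ergodicity one actually proves, $T^-(t)G \to \widehat{P}G = \langle G, k_{\mathrm{inv}}\rangle\, 0^{|\cdot|}$ in $\Lb_{C_-}$, and this limit is nonzero for a typical $\lambda$-zero-mean $G$ unless $k_{\mathrm{inv}} \equiv 1$, i.e.\ unless $\mu_{\mathrm{inv}}$ happens to be $\pi_1$. So $\|T^-(t)G\|_{\Lb_{C_-}}$ does not decay on your subspace, and dualizing from it would at best give convergence modulo constants rather than assertion (b). The decomposition you must use in the duality step is the one you already wrote down in your dissipativity paragraph: $\eta^- = \emptyset$ versus $\eta^- \neq \emptyset$.
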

The proof uses some arguments similar to \cite{F16WR}, but since we work with more general conditions additional technical steps have to be done.
A proof of this statement is given in section four.
Existence of the invariant measure is obtained from some type of generalized Kirkwood-Salsburg equation (see also \cite{FKK12}).
Ergodicity is then deduced by classical spectral theory, i.e. we show that the generator for the dynmaics has a spectral gap.
\begin{remark}
 \begin{enumerate}
  \item[(i)] If $b^-(x,\emptyset) = 0$, then $\mu_{\mathrm{inv}} = \delta_{\emptyset}$, i.e. the population gets extinct with exponential speed.
  \item[(ii)] If $b^-(x,\emptyset) > 0$, then $\mu_{\mathrm{inv}} \neq \delta_{\emptyset}$.
 \end{enumerate}
\end{remark}
Particular examples such as the Sourgailis model with invariant measure $\pi_{z}$ and the Glauber dynamics with a Gibbs measure as the invariant
measure have been studied in \cite{F11}, \cite{KKM10}.
The aggregation model considered in \cite{FKKZ14} is a particular example where $d^-(x,\emptyset)$ is bounded away from zero, but condition \eqref{EQ:05} does not hold.

\subsection{The system}
Dynamics for the system is described by the birth-and-death rates $d^+(x,\gamma)$ and $d^+(x,\gamma)$
where the additional dependence on the parameter $\gamma^-$ takes interactions of the system with its environment into account.
The Markov operator is formally given by
\begin{align*}
 (L^+F)(\gamma) = &\ \sum \limits_{x \in \eta^+}d^+(x,\gamma^+ \backslash x, \gamma^-)(F(\gamma^+ \backslash x, \gamma^-) - F(\gamma^+, \gamma^-))
 \\ \notag &+ \int \limits_{\R^d}b^-(x,\gamma^+,\gamma^-)(F(\gamma^+ \cup x, \gamma^-) - F(\gamma^+, \gamma^-))d x
\end{align*}
where $F \in \mathcal{FP}(\Gamma^2) = \mathbb{K}(B_{bs}(\Gamma_0^2))$.
Since no confusion may arise we also denote by $\lambda$ the measure $\lambda \otimes \lambda$.
Similarly to (E1) -- (E3) we impose the following conditions:
\begin{enumerate}
 \item[(S1)] There exist measurable functions $D^+, B^+: \R^d \times \Gamma_0^2 \longrightarrow \R$ such that
 \[
  d^+(x,\gamma) = \sum \limits_{\eta \Subset \gamma}D^+(x,\eta), \qquad b^+(x,\gamma) = \sum \limits_{\eta \Subset \gamma}B^+(x,\eta).
 \]
 Moreover there exist constants $A > 0$, $N \in \N$ and $\nu \geq 0$ such that
 \begin{align*}
  b^+(x,\eta) + d^+(x,\eta) \leq A(1+|\eta|)^{N}e^{\nu|\eta|}, \ \ \eta \in \Gamma_0^2, \ x \in \R^d.
 \end{align*}
 \item[(S2)] There exists $C_+ >$ and $a_+ \in (0,2)$ such that
 \[
  c_{+}(\eta) \leq a_{+} M_{+}(\eta), \ \ \eta \in \Gamma_0^2
 \]
 holds where $M_{+}(\eta) = \sum_{x \in \eta^+}d^+(x, \eta^+ \backslash x, \eta^-)$ and
 \begin{align*}
 c_+(\eta) :=
     &\ \sum \limits_{x \in \eta^+}\int \limits_{\Gamma_0^2}\left | \sum \limits_{\bfrac{\zeta^+ \subset \eta^+ \backslash x}{\zeta^- \subset \eta^-}}D^+(x,\zeta^+ \cup \xi^+,\zeta^- \cup \xi^-)\right| C_+^{|\xi^+|}C_-^{|\xi^-|}d \lambda(\xi)
  \\ &+ \frac{1}{C_+}\sum \limits_{x \in \eta^+} \int \limits_{\Gamma_0^2}\left | \sum \limits_{\bfrac{\zeta^+ \subset \eta^+ \backslash x}{\zeta^- \subset \eta^-}}B^+(x,\zeta^+ \cup \xi^+,\zeta^- \cup \xi^-)\right | C_+^{|\xi^+|}C_-^{|\xi^-|}d \lambda(\xi).
 \end{align*} 
 \item[(S3)] Take $(R_{\delta})_{\delta > 0}$ as in (E3).
 For any $\delta > 0$ there exists a measurable function $V_{\delta}: \Gamma_0^2 \longrightarrow \R_+$ and constants $c_{\delta}, \e_{\delta} > 0$ such that
 \[
  (L_{\delta}^+V_{\delta})(\eta) \leq c_{\delta}(1 + V_{\delta}(\eta)) - \e_{\delta} D_{\delta}^+(\eta), \ \ \eta \in \Gamma_0^2
 \]
 where $L_{\delta}^+$ is given by $L^+$ with $b^+(x,\cdot)$ replaced by $R_{\delta}(x)b^+(x,\cdot)$ and
 \[
  D_{\delta}^+(\eta) = M_+(\eta) + \int \limits_{\R^d}R_{\delta}(x)b^+(x,\eta)dx.
 \]
\end{enumerate}
Let $\mathcal{P}_{C}$ be the space of all $\mu$ such that $\mu \in \mathcal{P}_C$ iff its correlation function $k_{\mu}$ exists and satisfies the Ruelle bound
\[
 k_{\mu}(\eta) \leq A(\mu)C_+^{|\eta^+|}C_-^{|\eta^-|}, \ \ \eta \in \Gamma_0^2.
\]
Arguing similarly to the proof of Lemma \ref{ENV:LEMMA00} we see that 
\[
 \sup \limits_{x \in \R^d}\int \limits_{\Gamma^2}\left( d^+(x,\gamma) + b^+(x,\gamma)\right)d\mu(\gamma) < \infty,
\]
$L^+F \in L^1(\Gamma^2,d\mu)$ for $\mu \in \mathcal{P}_{C}$ and $F \in \mathcal{FP}(\Gamma^2)$
and the following analogue of Theorem \ref{TH:FPE} holds.
\begin{theorem}\label{TH:FPESYSTEM}
 Suppose that (S1) -- (S3) are satisfied.
 Then for any $\mu_0 \in \mathcal{P}_{C}$ there exists a unique solution $(\mu_t)_{t \geq 0} \subset \mathcal{P}_{C}$ to
 \[
  \frac{d}{dt}\int \limits_{\Gamma^2}F(\gamma)d\mu_t(\gamma) = \int \limits_{\Gamma^2}L^+F(\gamma)d\mu_t(\gamma), \ \ \mu_t|_{t=0} = \mu_0, \ \ t \geq 0, \ \ F \in \mathcal{FP}(\Gamma^2).
 \]
\end{theorem}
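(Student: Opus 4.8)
Since conditions (S1)--(S3) are exactly the two-component counterparts of (E1)--(E3), the plan is to repeat the construction behind Theorem \ref{TH:FPE}, systematically replacing $\Gamma^-,\Gamma_0^-,K,\K_{C_-}$ and the weight $C_-^{|\eta^-|}$ by their two-component analogues $\Gamma^2,\Gamma_0^2,\mathbb{K},\K_{C}$ and the product weight $C_+^{|\eta^+|}C_-^{|\eta^-|}$. The point is that the environment configuration $\gamma^-$ enters $L^+$ only as a frozen parameter in the rates $d^+,b^+$, so every combinatorial identity and estimate from the one-component case carries over with this product weight in place of $C_-^{|\eta^-|}$.

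Concretely, I would first pass to the pre-dual picture of quasi-observables. Writing $\hat{L}^+ = \mathbb{K}^{-1}L^+\mathbb{K}$ for the symbol of $L^+$ acting on functions $G$ on $\Gamma_0^2$, a direct computation (as in the one-component case, now using the expansions of $d^+,b^+$ from (S1)) decomposes it as $\hat{L}^+ = A + B$, where $(AG)(\eta) = -M_+(\eta)G(\eta)$ is multiplication by minus the total death intensity and $B$ collects the remaining birth-and-death contributions, expressed through $D^+$ and $B^+$. On the weighted space $\Lb_C := L^1\!\left(\Gamma_0^2,\, C_+^{|\eta^+|}C_-^{|\eta^-|}\,d\lambda\right)$ the diagonal operator $(A,D(A))$ with its maximal domain generates a positive substochastic $C_0$-semigroup.

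The main work is to control the perturbation $B$, and this is precisely where (S2) enters. It yields, for $G \in D(A)$, the relative bound
\[
 \| B G \|_{\Lb_C} \;\leq\; \int_{\Gamma_0^2} c_+(\eta)\,|G(\eta)|\,C_+^{|\eta^+|}C_-^{|\eta^-|}\,d\lambda(\eta) \;\leq\; a_+ \,\| A G \|_{\Lb_C},
\]
with $a_+ < 2$. This constraint $a_+ < 2$ is exactly what places us in the regime where the perturbation theory for substochastic semigroups (in the spirit of \cite{KURTZ73}) applies, so that the closure of $A+B$ generates a positive $C_0$-semigroup $(\hat{T}_t)_{t\geq 0}$ on $\Lb_C$. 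Condition (S3), through the Lyapunov functions $V_\delta$ for the cut-off generators $L_\delta^+$, then guarantees that the limiting semigroup obtained as $\delta \to 0$ is honest (conservative), so that no mass is lost. Dualizing, $(\hat{T}_t)^\ast$ acts on the space $\K_C$ of Ruelle-bounded functions and produces, for every initial $k_{\mu_0}$, a weak-$\ast$ continuous solution $(k_t)_{t\geq 0}$ of the associated BBGKY-type hierarchy that satisfies the Ruelle bound with a constant uniform on compact time intervals.

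The hard part, just as for Theorem \ref{TH:FPE}, is to show that each $k_t$ is genuinely the correlation function of a state $\mu_t \in \mathcal{P}_C$, i.e. that $k_t$ is positive definite in the sense of Lenard (see \cite{L75a}); only then does the abstract hierarchy solution encode an evolution of states and give a solution of the stated Fokker--Planck equation. I expect this to follow by approximating $(\hat{T}_t)$ with the honest semigroups attached to the cut-off generators $L_\delta^+$ --- these describe genuine finite-activity birth-and-death dynamics on $\Gamma_0^2$ and therefore manifestly correspond to states --- and then passing to the limit $\delta \to 0$, using (S3) to propagate positive definiteness through the limit. Uniqueness within $\mathcal{P}_C$ is inherited from the uniqueness of the pre-dual semigroup together with the relative bound above, exactly as in the one-component argument. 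In short, the only genuinely new ingredient compared with Theorem \ref{TH:FPE} is the two-component bookkeeping in the symbol $\mathbb{K}^{-1}L^+\mathbb{K}$, the relative-bound estimate, and the Lenard condition; since $\gamma^-$ is merely a parameter in the system rates, these computations go through verbatim with the product weight.
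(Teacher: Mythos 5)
Your proposal is correct and takes essentially the same route as the paper: the paper disposes of Theorem \ref{TH:FPESYSTEM} by noting it is the two-component analogue of Theorem \ref{TH:FPE}, whose Section 4 machinery (decomposition $\widehat{L} = A + B$ on the weighted $L^1$-space, relative bound of the off-diagonal part coming from the condition $a < 2$, substochastic perturbation theory, duality to Ruelle-bounded correlation functions, and Lenard positivity via the cut-off dynamics made conservative by the Lyapunov condition) is exactly what you reproduce with the product weight $C_+^{|\eta^+|}C_-^{|\eta^-|}$. Two small corrections: the perturbation results actually invoked are those of \cite{TV06} and \cite{AR91} rather than \cite{KURTZ73} (the latter is Kurtz's semigroup-convergence theorem, used only for the averaging principle in Section 5), and the operative estimate is $\Vert BG \Vert_{\Lb_{C}} \leq (a_+ - 1)\Vert AG \Vert_{\Lb_{C}}$ for the off-diagonal part $B$, since $c_+ - M_+ \leq (a_+-1)M_+$; this relative bound being strictly less than $1$ is precisely why $a_+ < 2$ suffices.
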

This describes the evolution of the system in the presence of a stationary environment fixed with the choice of $\mu_0$.

\subsection{The averaged process}
Here and below we assume that conditions (E1) -- (E3), (S1) -- (S3) and \eqref{EQ:05} are satisfied.
Consider the averaged Markov operator given by
\begin{align*}
 (\overline{L}F)(\gamma^+) = \sum \limits_{x \in \gamma^+}\overline{d}(x,\gamma^+ \backslash x)(F(\gamma^+ \backslash x) - F(\gamma^+))
 + \int \limits_{\R^d}\overline{b}(x,\gamma^+)(F(\gamma^+ \cup x) - F(\gamma^+))d x
\end{align*}
where $F \in \mathcal{FP}(\Gamma^+)$. The birth-and-death rates are obtained by integration w.r.t. to the invariant measure of the environment, i.e.
\begin{align}
 \label{MCSL:23} \overline{d}(x,\gamma^+) &:= \int \limits_{\Gamma}d^+(x,\gamma)d \mu_{\mathrm{inv}}(\gamma^-) = \sum \limits_{\eta^+ \Subset \gamma^+}\overline{D}(x,\eta^+),
 \\ \label{MCSL:24} \overline{b}(x,\gamma^+) &:= \int \limits_{\Gamma}b^+(x,\gamma)d \mu_{\mathrm{inv}}(\gamma^-) = \sum \limits_{\eta^+ \Subset \gamma^+}\overline{B}(x,\eta^+),
\end{align}
where $\overline{D}(x,\eta^+) = \int_{\Gamma_0^-}D^+(x,\eta^+,\eta^-)k_{\mathrm{inv}}(\eta^-)d\lambda(\eta^-)$ and
$\overline{B}(x,\eta^+) = \int_{\Gamma_0^-}B^+(x,\eta^+,\eta^-)k_{\mathrm{inv}}(\eta^-)d\lambda(\eta^-)$.
In order to proceed it is necessary to assume that similar conditions to (E1) -- (E3) with $d^-,b^-$ replaced by $\overline{d},\overline{b}$ are satisfied,
i.e. we assume:
\begin{enumerate}
 \item[(AV1)] There exist $A > 0$, $N \in \N$ and $\nu \geq 0$ such that
  \[
   \overline{d}(x,\eta^+)\leq A(1 + |\eta^+|)^N e^{\nu|\eta^+|}
  \]
  holds for all $x \in \R^d$ and $\eta^+ \in \Gamma_0^+$.
 \item[(AV2)] There exists a constant $\overline{a} \in (0,2)$ such that
 \begin{align*}
  \overline{c}(\eta^+) \leq \overline{a} \overline{M}(\eta^+), \ \ \eta^+ \in \Gamma_0^+
 \end{align*}
 holds where $\overline{M}(\eta^+) := \sum_{x \in \eta^+}\overline{d}(x,\eta^+ \backslash x)$ and 
 $\overline{c}(\cdot)$ is defined as $c_-(\cdot)$ with $D^-,B^-$ replaced by $\overline{D}, \overline{B}$ and $C_-$ replaced by $C_+$.
 \item[(AV3)] Take $(R_{\delta})_{\delta > 0}$ as in (E3). For any $\delta > 0$ there exists a measurable function $V_{\delta}: \Gamma_0^+ \longrightarrow \R_+$
 and constants $c_{\delta},\e_{\delta} > 0$ such that
 \[
  (\overline{L}_{\delta}V_{\delta})(\eta^+) \leq c_{\delta}(1 + V_{\delta}(\eta^+)) - \e_{\delta} \overline{M}_{\delta}(\eta^+), \ \ \eta^+ \in \Gamma_0^+
 \]
 where $\overline{L}_{\delta}$ is given by $\overline{L}$ with $\overline{b}(x,\cdot)$ replaced by $R_{\delta}(x)\overline{b}(x,\cdot)$ and
 \[
  \overline{D}_{\delta}(\eta^+) = \overline{M}(\eta^+) + \int \limits_{\R^d}R_{\delta}(x)\overline{b}(x,\eta^+)dx.
 \]
\end{enumerate}
One would expect that (S2), (S3) together with \eqref{MCSL:23} and \eqref{MCSL:24} already imply (AV2) and (AV3).
Unfortunately we could not show that this is, indeed, the case.
Particular examples considered in the next section, however, show that conditions (AV1) -- (AV3) are merely a restriction.

\subsection{Stochastic averaging principle}
We consider the Fokker-Planck for the system and environment given by the Markov operator $L^+ + \frac{1}{\e}L^-$.
Here $L^-$ is extended to functions $F: \Gamma^2 \longrightarrow \R$ by its action only on the variable $\gamma^-$.
For a given state $\mu \in \mathcal{P}_{C}$ the marginal on the first component $\mu^+$ is defined by
\begin{align*}
 \int \limits_{\Gamma^+}F(\gamma^+)d \mu^+(\gamma^+) = \int \limits_{\Gamma^2}F(\gamma^+)d \mu(\gamma^+,\gamma^-).
\end{align*}
The following is our main result.
\begin{theorem}\label{TH:MAIN}
 Suppose that conditions (E1) -- (E3), (S1) -- (S3), (AV1) -- (AV3) and \eqref{EQ:05} are satisfied. Then the following assertions hold:
 \begin{enumerate}
  \item[(a)] For each $\mu_0 \in \mathcal{P}_{C}$ and each $\e > 0$ there exists a unique solution $(\mu_t^{\e})_{t \geq 0} \subset \mathcal{P}_{C}$ to 
  \begin{align*}
   \frac{d}{dt}\int \limits_{\Gamma^2}F(\gamma)d \mu_{t}^{\varepsilon}(\gamma) = \int \limits_{\Gamma^2}\left( L^+ + \frac{1}{\varepsilon}L^-\right)F(\gamma)d \mu_t^{\varepsilon}(\gamma), \ \ \mu_t^{\e}|_{t=0} = \mu_0, \ \ t \geq 0.
  \end{align*}
  \item[(b)] For each $\overline{\mu}_0 \in \mathcal{P}_{C_+}$ there exists a unique solution $(\overline{\mu}_t)_{t \geq 0} \subset \mathcal{P}_{C_+}$ to
  \begin{align}\label{FPE:AVERAGED}
   \frac{d}{dt}\int \limits_{\Gamma^+}F(\gamma^+)d\overline{\mu}_t(\gamma^+) = \int \limits_{\Gamma^+}(\overline{L}F)(\gamma^+)d\overline{\mu}_t(\gamma^+), \ \ \overline{\mu}_t|_{t=0} = \overline{\mu}_0, \ \ t \geq 0.
  \end{align}
  \item[(c)] For any $F \in \mathcal{FP}(\Gamma^+)$ we have
   \begin{align*}
    \int \limits_{\Gamma^2}F(\gamma^+)d \mu_t^{\varepsilon}(\gamma^+,\gamma^-) \longrightarrow \int \limits_{\Gamma^+}F(\gamma^+)d \overline{\mu}_t(\gamma^+), \ \ \varepsilon \to 0
   \end{align*}
   uniformly on compacts in $t \geq 0$ where $(\overline{\mu}_t)_{t \geq 0}$ is the unique solution to \eqref{FPE:AVERAGED} 
   with initial condition $\overline{\mu}_0 = \mu_0^+$.
 \end{enumerate}
\end{theorem}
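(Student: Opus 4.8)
The plan is to dispose of (a) and (b) quickly as applications of the well-posedness theory already established, and then to concentrate the effort on the averaging statement (c). For (a) I would apply the two-component version of Theorem \ref{TH:FPE} (announced in the Remark following it) to the generator $L^+ + \frac{1}{\e}L^-$ on $\Gamma^2$. One checks that this operator inherits conditions of type (E1)--(E3): combining (S1) and (E1) gives the kernels $D^+ + \frac{1}{\e}D^-$ and $B^+ + \frac{1}{\e}B^-$ on the respective components together with the required polynomial-times-exponential bound; the Kirkwood--Salsburg inequality of type (E2) survives because the factor $\frac{1}{\e}$ enters both the numerator $c$ and the denominator $M$, so the quotient stays bounded by $\max\{a_+,a_-\} < 2$; and a Lyapunov function for (E3) is obtained from the sum $V_\delta^+ + V_\delta^-$. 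For (b) I would apply Theorem \ref{TH:FPE} directly to $\overline{L}$, the hypotheses (AV1)--(AV3) being exactly the analogues of (E1)--(E3) for the averaged rates \eqref{MCSL:23}--\eqref{MCSL:24}.

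The heart of the matter is (c), where I would follow the duality strategy outlined in the introduction. Because correlation functions live in the weighted $L^\infty$-spaces $\K_{C}$ on which, by the Lotz theorem, the relevant semigroups cannot be strongly continuous, I would first pass to the pre-dual $L^1$-spaces of quasi-observables and work with the strongly continuous semigroups generated there by the operators $\hat L^+$ and $\hat L^-$ dual to $L^+$ and $L^-$. The point is to regard $\hat L^+ + \frac{1}{\e}\hat L^-$ as a singular perturbation in which $\hat L^-$ is the fast generator. By Theorem \ref{TH:ERGODICITY} the environment semigroup is exponentially ergodic, so on the pre-dual it converges to a projection $P$ (dual to averaging the $\gamma^-$-dependence against $k_{\mathrm{inv}}$), and its restriction to the range of $I-P$ is boundedly invertible; equivalently, the associated Poisson equation is uniquely solvable with a bound uniform in the data.

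This is precisely the input required by the abstract averaging theorem of \cite{KURTZ73}: one concludes that the perturbed semigroups converge, uniformly on compact time intervals, to the semigroup generated by the reduced operator $P\hat L^+ P$ on the range of $P$. The key algebraic fact is that $P\hat L^+ P$ coincides with the quasi-observable generator associated with $\overline{L}$, because applying $P$ replaces the system rates $d^+,b^+$ by their $\mu_{\mathrm{inv}}$-averages \eqref{MCSL:23}--\eqref{MCSL:24}. I would then transfer the conclusion back by duality: strong convergence of the pre-dual semigroups yields convergence of the dual semigroups on correlation functions when tested against quasi-observables $G \in B_{bs}(\Gamma_0^+)$. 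Writing $\int_{\Gamma^2} F(\gamma^+)d\mu_t^{\e}(\gamma^+,\gamma^-) = \int_{\Gamma_0^+} G(\eta^+)\, k_{\mu_t^{\e}}(\eta^+,\emptyset)\, d\lambda(\eta^+)$ for $F = KG$, this is exactly the asserted convergence of the marginals, and the Trotter--Kato character of the limit in \cite{KURTZ73} furnishes the uniformity on compacts in $t$.

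The main obstacle is the verification of the hypotheses of \cite{KURTZ73} on the pre-dual space: showing that the fast semigroup is strongly continuous with the spectral gap of Theorem \ref{TH:ERGODICITY} surviving on the pre-dual, that the slow generator $\hat L^+$ is suitably controlled relative to $\hat L^-$ on a common domain (this is where conditions (S2) and (AV2) enter, by bounding the off-diagonal parts against the multiplication operators $M_{\pm}$), and that the Poisson equation is solvable with the uniform bounds needed for the convergence. Identifying the reduced generator $P\hat L^+ P$ with the operator $\overline{L}$ of part (b) and controlling the resulting error terms is the conceptual and technical core of the argument.
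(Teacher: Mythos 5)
Your proposal follows essentially the same route as the paper: parts (a) and (b) are handled exactly as in the paper (including the key estimate $c_+ + \frac{1}{\e}c_- \le \max\{a_+,a_-\}\bigl(M_+ + \frac{1}{\e}M_-\bigr)$ for the scaled generator), and part (c) is proved, as in the paper, by passing to the pre-dual $L^1$-spaces of quasi-observables, establishing exponential ergodicity of the fast semigroup $\1 \otimes T^-(t)$ with projection $\1 \otimes \widehat{P}$, applying Kurtz's singular perturbation theorem \cite[Theorem 2.1]{KURTZ73}, identifying the reduced generator $(\1 \otimes \widehat{P})\widehat{L}^+$ with $\widehat{\overline{L}}$ via \eqref{MCSL:23}--\eqref{MCSL:24}, and transferring back to correlation functions by duality. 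The technical points you flag as the remaining work (spectral gap surviving on the coupled pre-dual space, cores and relative bounds, closability of the averaged operator) are precisely what the paper's Steps 1--4 verify.
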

Note that assertion (a) extends Theorem \ref{TH:FPESYSTEM} since here the environment does not need to be stationary.
This result also extends \cite{FK16} where the system was a birth-and-death process on $\Gamma_0^+$ and the environment was an equilibrium process on $\Gamma^-$.
Here the system and environment are both infinite particle systems and secondly the environment is only assumed to be ergodic; it does not need to be
in equilibrium.
In a forthcoming work we will use the results obtained in this work to extend the stochastic averaging principle 
for a model where conditions (E2), (S2) and (AV2) are relaxed.

\section{Examples}
Many models of interacting particle systems are based on translation invariant rates (see e.g. \cite{FKK15}). 
Such rates may result from an idealisation and simplification of the underlying physical model under which particular properties can be studied and observed.
However, biological systems related with the description of tumour growth should, due to their complex spatial structure, be modelled by 
space-inhomogeneous rates.
The particular choice of such rates is often based on ad-hoc assumptions and a deep understanding of the underlying nature of the dynamics involved.
Using the stochastic averaging principle, we show that such rates can be rigorously derived from the interaction with a Markovian environment.
The birth-and-death rates in the examples given below are build by relative energies
\[
 E_{\varphi}(x,\gamma^{\pm}) := \sum \limits_{y \in \gamma^{\pm}}\varphi(x-y), \ \ x \in \R^d, \ \ \gamma^{\pm} \in \Gamma,
\]
where $\varphi$ is a symmetric, non-negative and, integrable function. 

\subsection{Preliminaries}
We will frequently use the combinatorial relation
\begin{align}\label{IBP}
 \int \limits_{\Gamma_0^{\pm}}\sum \limits_{\xi^{\pm} \subset \eta^{\pm}}G(\xi^{\pm},\eta^{\pm} \backslash \xi^{\pm})d \lambda(\xi^{\pm}) = \int \limits_{\Gamma_0^{\pm}}\int \limits_{\Gamma_0^{\pm}}G(\xi^{\pm},\eta^{\pm})d \lambda(\xi^{\pm})d \lambda(\eta^{\pm})
\end{align}
provided one side of the equality is finite for $|G|$, cf. \cite{F16WR}. 
Moreover by \cite{F09} we have
\begin{align}\label{EQ:04}
 \lambda\left( \{ \eta^{\pm} \in \Gamma_0^{\pm} \ | \ \xi^{\pm} \cap \eta^{\pm} \neq \emptyset \} \right) = 0, \ \ \xi^{\pm} \in \Gamma_0^{\pm}
\end{align}
and $\lambda\otimes \lambda \left( \{ \eta \in \Gamma_0^2 \ | \ \eta^+ \cap \eta^- \neq \emptyset \} \right) = 0$.
For a given measurable function $g$ we have
\begin{align}\label{EQ:07}
 \sum \limits_{\xi^{\pm} \subset \eta^{\pm}}\prod \limits_{x \in \xi^{\pm}}g(x) = \prod \limits_{x \in \eta^{\pm}}(1+g(x))
\end{align}
and if $|g|$ is integrable, then 
\begin{align}\label{EQ:09}
 \int \limits_{\Gamma_0^{\pm}}\prod \limits_{x \in \eta^{\pm}}g(x)d\lambda(\eta^{\pm}) = \exp\left(\int \limits_{\R^d}g(x)dx\right).
\end{align}
Let us finally give a sufficient condition for (E3).
\begin{proposition}
 Suppose that for each $\delta > 0$ there exists $c_{\delta} > 0$ and $\e_{\delta} \in (0,1)$ such that
 \[
  \int \limits_{\R^d}R_{\delta}(x)b^-(x,\eta^-)dx \leq c_{\delta}(1+|\eta^-|) + \e_{\delta} \sum \limits_{x \in \eta^-}d^-(x,\eta^- \backslash x), \ \ \eta^- \in \Gamma_0^-.
 \]
 Then condition (E3) is satisfied.
 In particular, if $b^-(x,\eta^-) \leq A(1 + |\eta^-|)$ holds for some constant $A > 0$, then (E3) is satisfied.
\end{proposition}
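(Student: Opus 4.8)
The plan is to verify (E3) directly, using the simplest conceivable Lyapunov function: the particle-number function $V_{\delta}(\eta^-) := |\eta^-|$, which is independent of $\delta$ and is manifestly a measurable map $\Gamma_0^- \to \R_+$. First I would compute $(L_{\delta}^-V_{\delta})(\eta^-)$ straight from the definition of the generator. Since $|\eta^- \backslash x| - |\eta^-| = -1$ and, for $\lambda$-a.a.\ $x \in \R^d$, $|\eta^- \cup x| - |\eta^-| = +1$, the death part collapses to $-M_-(\eta^-)$ and the birth part to exactly the integral appearing in $D_{\delta}^-$, yielding the identity
\[
 (L_{\delta}^-V_{\delta})(\eta^-) = -M_-(\eta^-) + \int \limits_{\R^d}R_{\delta}(x)b^-(x,\eta^-)dx.
\]

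Next I would insert the hypothesis to control the birth integral by $c_{\delta}(1+|\eta^-|) + \e_{\delta}M_-(\eta^-)$, which gives
\[
 (L_{\delta}^-V_{\delta})(\eta^-) \leq c_{\delta}(1+|\eta^-|) - (1-\e_{\delta})M_-(\eta^-).
\]
Because $\e_{\delta} \in (0,1)$ the coefficient $1-\e_{\delta}$ is strictly positive, so the drift already carries the right sign; what remains is only to match the precise form of the subtracted term in (E3), where $D_{\delta}^-$ involves both $M_-$ and the birth integral rather than $M_-$ alone.

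The one genuine manoeuvre -- and the step I expect to be the crux -- is to use the same hypothesis a \emph{second} time, now to bound $D_{\delta}^-$ from above through $M_-$: from $\int_{\R^d}R_{\delta}(x)b^-(x,\eta^-)dx \leq c_{\delta}(1+|\eta^-|) + \e_{\delta}M_-(\eta^-)$ one obtains $D_{\delta}^-(\eta^-) \leq (1+\e_{\delta})M_-(\eta^-) + c_{\delta}(1+|\eta^-|)$, hence $M_-(\eta^-) \geq (1+\e_{\delta})^{-1}(D_{\delta}^-(\eta^-) - c_{\delta}(1+|\eta^-|))$. Substituting this lower bound for $M_-$ into the previous display and collecting the terms proportional to $(1+|\eta^-|)$, I would arrive at (E3) with $V_{\delta}(\eta^-) = |\eta^-|$ and the admissible constants $\e_{\delta}' = \frac{1-\e_{\delta}}{1+\e_{\delta}} > 0$ and $c_{\delta}' = c_{\delta}(1 + \e_{\delta}')$. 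For the final ``in particular'' assertion I would simply note that $b^-(x,\eta^-) \leq A(1+|\eta^-|)$, integrated against $R_{\delta} \in L^1(\R^d)$, yields $\int_{\R^d}R_{\delta}(x)b^-(x,\eta^-)dx \leq A\|R_{\delta}\|_{L^1(\R^d)}(1+|\eta^-|)$, which is exactly the hypothesis with $c_{\delta} = A\|R_{\delta}\|_{L^1(\R^d)}$ and any $\e_{\delta} \in (0,1)$, the death term being nonnegative. No deeper difficulty is anticipated; the only point requiring care is remembering that the birth integral sits inside $D_{\delta}^-$, so a single application of the hypothesis does not suffice and the inequality must be invoked twice.
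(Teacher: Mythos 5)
Your proof is correct and takes essentially the same route as the paper: both choose the Lyapunov function $V_{\delta}(\eta^-) = |\eta^-|$, compute $(L_{\delta}^-V_{\delta})(\eta^-) = -M_-(\eta^-) + \int_{\R^d}R_{\delta}(x)b^-(x,\eta^-)dx$, and absorb the birth integral into $D_{\delta}^-$ via the hypothesis. The only difference is bookkeeping — the paper re-parametrizes the constants ($c_{\delta}', \e_{\delta}'$) up front so that one substitution produces the $-\e_{\delta}'D_{\delta}^-$ term, while you invoke the hypothesis twice (once for the drift, once to bound $M_-$ from below by $D_{\delta}^-$) — and the resulting constants $\e_{\delta}' = \frac{1-\e_{\delta}}{1+\e_{\delta}}$ and $\frac{2c_{\delta}}{1+\e_{\delta}}$ coincide exactly with the paper's.
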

\begin{proof}
 Write $c_{\delta} = \frac{2 c_{\delta}'}{1 + \e_{\delta}'} > 0$ and $\e_{\delta} = \frac{1 - \e_{\delta}'}{1 + \e_{\delta}'} \in (0,1)$
 where $c_{\delta}' = \frac{c_{\delta}}{1+\e_{\delta}}$ and $\e_{\delta}' = \frac{1 - \e_{\delta}}{1 + \e_{\delta}} \in (0,1)$. Then
 \[
  \int \limits_{\R^d}R_{\delta}(x)b^-(x,\eta^-)dx
  \leq 2 c_{\delta}'(1 + |\eta^-|) + (1 - \e_{\delta}')\sum \limits_{x \in \eta^-}d^-(x,\eta^- \backslash x) - \e_{\delta}' \int \limits_{\R^d}R_{\delta}(x)b^-(x,\eta^-)dx
 \]
 and hence we obtain for $V_{\delta}(\eta^-) = |\eta^-|$
 \begin{align*}
  (L_{\delta}^-V_{\delta})(\eta^-) &= - \sum \limits_{x \in \eta^-}d^-(x,\eta^- \backslash x) + \int\limits_{\R^d}R_{\delta}(x)b^-(x,\eta^-)dx
  \\ &\leq 2c_{\delta}'(1 + |\eta^-|) - \e_{\delta}'\sum \limits_{x \in \eta^-}d^-(x,\eta^- \backslash x) - \e_{\delta}' \int\limits_{\R^d}R_{\delta}(x)b^-(x,\eta^-)dx.
 \end{align*}
 \qed
\end{proof}
A similar statement can be shown for (S3) and (AV3).

\subsection{Two interacting Glauber dynamics}
The environment process is assumed to be a Glauber dynamics in the continuum with the birth-and-death rates
\begin{align}\label{GLAUBER:ENV}
 \begin{cases} d^-(x,\gamma) &= 1, 
 \\ b^-(x, \gamma) &= z^- \exp\left(- E_{\psi}(x,\gamma^-)\right) \end{cases}.
\end{align}
Such dynamics has been studied in \cite{FKKZ12}. The system process is another Glauber dynamics given by
\begin{align*}
 d^+(x,\gamma) &= 1,
 \\ b^+(x, \gamma) &= z^+ \exp\left( - E_{\phi^-}(x, \gamma^-)\right) \exp\left(- E_{\phi^+}(x,\gamma^+)\right).
\end{align*}
A similar model has been studied in \cite{FKK15WRMODEL} and \cite{F16WR}.
For an integrable function $f: \R^d \longrightarrow \R_+$ let
\begin{align*}
 \beta(f) := \int \limits_{\R^d}|e^{-f(x)} - 1|d x \in [0,\infty].
\end{align*}
Suppose that the following conditions are satisfied:
\begin{enumerate}
 \item[(a)] $\psi, \phi^{\pm} \geq 0$ are symmetric with $\beta(\psi), \beta(\phi^{\pm}) < \infty$ and $z^{\pm} > 0$.
 \item[(b)] There exist $C_{\pm} > 0$ such that 
  \begin{align}\label{GLAUBER:ENV:COND}
     z^- \exp\left( C_- \beta(\psi)\right) &< C_-,
  \\ \notag z^+ \exp\left(C_- \beta(\phi^-)\right) \exp\left(C_+ \beta(\phi^+)\right) &< C_+.
  \end{align}
\end{enumerate}
Let us prove that conditions (E1) -- (E3), (S1) -- (S3) and (AV1) -- (AV3) hold.
Condition (E1) is obvious with the choice
\[
 D^-(x,\eta^-) = 0^{|\eta^-|}, \ \ B^-(x,\eta^-) = z^- \prod \limits_{y \in \eta^-}\left( e^{-\psi(x-y)} - 1\right).
\]
Condition (E3) immediately follows from $b^-(x,\eta^-) \leq z^-$. Concerning condition (E2) we first observe that for $\xi^- \cap \zeta^- = \emptyset$ 
\[
 D^-(x,\xi^- \cup \zeta^-) = 0^{|\xi^-|}0^{|\zeta^-|}, \ \ B^-(x,\xi^- \cup \zeta^-) = z^- \prod \limits_{y \in \xi^-}\left( e^{-\psi(x-y)}-1\right)\prod \limits_{w \in \zeta^-}\left( e^{-\psi(x-y)}-1\right).
\]
In view of \eqref{EQ:04}, \eqref{EQ:07}, \eqref{EQ:09} we obtain
\begin{align*}
 c_-(\eta^-) &= |\eta^-| + \frac{z^-}{C_-} e^{C_-\beta(\psi)}\sum \limits_{x \in \eta^-}e^{-E_{\psi}(x,\eta^- \backslash x)} \leq |\eta^-|\left( 1 + \frac{z^-}{C_-} e^{C_-\beta(\psi)}\right).
\end{align*}
Hence (E2) follows from assumption (b). Similarly we show that assumptions (S1) -- (S3) hold.
Namely, condition (S3) follows from $b^+(x,\eta)\leq z^+$ and (S1) holds with
\[
 D^+(x,\eta) = 0^{|\eta|}, \ \ B^+(x,\eta) = z^+ \prod \limits_{y \in \eta^+}\left( e^{-\phi^+(x-y)} - 1\right)\prod \limits_{w \in \eta^-}\left( e^{-\phi^-(x-w)} - 1\right).
\]
Hence we obtain
\begin{align*}
 c_+(\eta) &= |\eta^+| + \frac{z^+}{C_+} e^{C_+\beta(\phi^+)}e^{C_-\beta(\phi^-)}\sum \limits_{x \in \eta^+}e^{-E_{\phi^+}(x,\eta^+ \backslash x)}e^{-E_{\phi^-}(x,\eta^-)}
 \\ &\leq |\eta^+|\left( 1 + \frac{z^+}{C_+} e^{C_+\beta(\phi^+)}e^{C_-\beta(\phi^-)}\right).
\end{align*}
In view of (b) condition (S2) holds.
The unique invariant measure for the environment is given by the Gibbs measure $\mu_{\mathrm{inv}}$ with activity $z^-$ and potential $\psi$. 
Moreover we have
\begin{align*}
 \overline{d}(x,\gamma^+) &= \int \limits_{\Gamma^-}d^+(x,\gamma^+, \gamma^-)d \mu_{\mathrm{inv}}(\gamma^-) = 1,
 \\ \overline{b}(x,\gamma^+) &= \int \limits_{\Gamma^-}b^+(x,\gamma^+, \gamma^-)d \mu_{\mathrm{inv}}(\gamma^-) 
  =  z^+\overline{\lambda}(x) \exp\left(- E_{\phi^+}(x,\gamma^+)\right).
\end{align*}
where $\overline{\lambda}(x) = \int_{\Gamma^-}e^{-E_{\phi^-}(x,\gamma^-)}d \mu_{\mathrm{inv}}(\gamma^-)$.
Arguing as above we see that (AV1) and (AV3) hold with
\[
 \overline{D}(x,\eta^+) = 0^{|\eta^+|}, \ \ \overline{B}(x,\eta^+) = z^+ \overline{\lambda}(x)\prod \limits_{y \in \eta^+}\left( e^{-\phi^+(x-y)} - 1 \right).
\]
Hence $\overline{c}(\eta^+) \leq \sum \limits_{x \in \eta^+}\left( 1 + \frac{z^+}{C_+}\overline{\lambda}(x)e^{C_+\beta(\phi^+)}\right) \leq |\eta^+|\left( 1 + \frac{z^+}{C_+}e^{C_+\beta(\phi^+)}e^{C_-\beta(\phi^-)}\right)$
implies (AV2).

\subsection{BDLP-dynamics in Glauber environment}
The environment is, as before, assumed to be a Glauber dynamics with birth-and-death rates \eqref{GLAUBER:ENV}.
The system is assumed to be a BDLP-process and it is assumed that the environment influences the system
due to additional competition via the potential $b^-$ and particles from the environment may create new individuals within the system.
More precisely we consider the rates
\begin{align*}
 d^+(x,\gamma) &= m^+ + \sum \limits_{y \in \gamma^+}a^-(x-y) + \sum \limits_{y \in \gamma^-}b^-(x-y),
 \\ b^+(x, \gamma) &= \sum \limits_{y \in \gamma^+}a^+(x-y) + \sum \limits_{y \in \gamma^-}b^+(x-y).
\end{align*}
Without influence of the environment ($b^{\pm} = 0$) the system is simply an one-component BDLP-process studied in \cite{BCFKKO14},\cite{FM04},\cite{KK16}.
In order to apply our results to this coupled model we make the following assumptions:
\begin{enumerate}
 \item[(a)] $\psi \geq 0$ is symmetric with $\beta(\psi) < \infty$ and there exists $C_- > 0$ such that \eqref{GLAUBER:ENV:COND} holds.
 \item[(b)] $m^+, z^- > 0$ and $a^{\pm}, b^{\pm} \geq 0$ are symmetric and integrable and bounded.
 \item[(c)] There exist $C_{+} > 0$, $\theta \in (0, C_+)$ and $b \geq 0$ such that
 \begin{align}\label{MCSL:72}
  \sum \limits_{x \in \eta^+} \sum \limits_{y \in \eta^+ \backslash x}a^+(x-y) \leq \theta \sum \limits_{x \in \eta^+}\sum \limits_{y \in \eta^+ \backslash x}a^-(x-y) + b|\eta^+|
 \end{align}
 holds and for some $\vartheta \in (0, C_+)$ we have
 \begin{align}
  \label{MCSL:73} \vartheta b^- &\geq b^+
  \\ \label{MCSL:75} m^+ &> C_-\| b^- \|_{L^1} + C_+\| a^- \|_{L^1} + \| a^+ \|_{L^1} + \frac{C_-}{C_+}\| b^+ \|_{L^1} + \frac{b}{C_+}.
 \end{align}
\end{enumerate}
\begin{remark}
 Condition \eqref{MCSL:72} states that $\theta a^- - a^+$ is a stable potential in the sense of Ruelle.
 Some sufficient conditions are given e.g. in \cite{KK16}.
\end{remark}
Let us show that (E1) -- (E3), (S1) -- (S3) and (AV1) -- (AV3) are satisfied.
First of all, previous example  shows that (E1) -- (E3) are satisfied.
Since $b^+(x,\eta) \leq |\eta^+|\| a^+ \|_{L^{\infty}} + |\eta^-|\| b^+\|_{L^{\infty}}$ it follows that (S3) holds.
Condition (S1) is satisfied for the choice
\begin{align*}
 D^+(x,\eta) &= m^+ 0^{|\eta|} + 0^{|\eta^-|}\1_{\Gamma_1^+}(\eta^+)\sum \limits_{y \in \eta^+}a^-(x-y) + 0^{|\eta^+|}\1_{\Gamma_1^-}(\eta^-)\sum \limits_{y \in \eta^-}b^-(x-y),
 \\ B^+(x,\eta) &= 0^{|\eta^-|}\1_{\Gamma_1^+}(\eta^+)\sum \limits_{y \in \eta^+}a^+(x-y) + 0^{|\eta^+|}\1_{\Gamma_1^-}(\eta^-)\sum \limits_{y \in \eta^-}b^+(x-y),
\end{align*}
where $\Gamma_1^{\pm} := \{ \eta^{\pm} \in \Gamma_0^{\pm} \ | \ |\eta^{\pm}| = 1 \}$. This gives for $\zeta^{\pm} \cap \xi^{\pm} = \emptyset$
\begin{align*}
 D^+(x,\zeta^+ \cup \xi^+, \zeta^- \cup \xi^-) &= m^+ 0^{|\xi|}0^{|\zeta|} 
 \\ &\ \ \ + 0^{|\zeta|}0^{|\xi^+|}\1_{\Gamma_1^-}(\xi^-)\sum \limits_{y \in \xi^-}b^-(x-y) +  0^{|\xi|}0^{|\zeta^+|}\1_{\Gamma_1^-}(\zeta^-)\sum \limits_{y \in \zeta^-}b^-(x-y)
 \\ &\ \ \ + 0^{|\xi^-|}0^{|\zeta|}\1_{\Gamma_1^+}(\xi^+)\sum \limits_{y \in \xi^+}a^-(x-y) + 0^{|\xi|}0^{|\zeta^-|}\1_{\Gamma_1^+}(\zeta^+)\sum \limits_{y \in \zeta^+}a^-(x-y) ,
 \\ B^+(x,\zeta^+ \cup \xi^+, \zeta^- \cup \xi^-) &= 0^{|\zeta|}0^{|\xi^-|}\1_{\Gamma_1^+}(\xi^+)\sum \limits_{y \in \xi^+}a^+(x-y) + 0^{|\xi|}0^{|\zeta^-|}\1_{\Gamma_1^+}(\zeta^+)\sum \limits_{y \in \zeta^+}a^+(x-y) 
 \\ &\ \ \ + 0^{|\zeta|}0^{|\xi^+|}\1_{\Gamma_1^-}(\xi^-)\sum \limits_{y \in \xi^-}b^+(x-y) + 0^{|\xi|}0^{|\zeta^+|}\1_{\Gamma_1^-}(\zeta^-)\sum \limits_{y \in \zeta^-}b^+(x-y).
\end{align*}
Using the definition of $\lambda$ we get
\begin{align*}
 c_+(\eta) &= m^+ |\eta^+| + C_-\| b^- \|_{L^1}|\eta^+| + C_+\| a^- \|_{L^1}|\eta^+| + \sum \limits_{x \in \eta^+}\sum \limits_{y \in \eta^+ \backslash x}a^-(x-y) + \sum \limits_{x \in \eta^+}\sum \limits_{y \in \eta^-}b^-(x-y)
 \\ &+ |\eta^+| \| a^+ \|_{L^1} + \frac{C_-}{C_+}\| b^+ \|_{L^1}|\eta^+| + \frac{1}{C_+}\sum \limits_{x \in \eta^+}\sum \limits_{y \in \eta^+ \backslash x}a^+(x-y) + \frac{1}{C_+}\sum \limits_{x \in \eta^+}\sum \limits_{y \in \eta^-}b^+(x-y).
\end{align*}
and hence by \eqref{MCSL:72} and \eqref{MCSL:73}
\begin{align*}
 c_+(\eta) &\leq \left( m^+ + C_-\| b^- \|_{L^1} + C_+\| a^- \|_{L^1} + \| a^+ \|_{L^1} + \frac{C_-}{C_+}\| b^+ \|_{L^1} + \frac{b}{C_+}\right)|\eta^+|
 \\ &\ \ \ + \left(\frac{\theta}{C_+} + 1\right)\sum \limits_{x \in \eta^+}\sum \limits_{y \in \eta^+ \backslash x}a^-(x-y) + \left( \frac{\vartheta}{C_+} + 1\right)\sum \limits_{x \in \eta^+}\sum \limits_{y \in \eta^-}b^-(x-y)
 \\ &\leq (1+a_+) M_+(\eta),
\end{align*}
where $M_+(\eta) = m^+|\eta^+| + \sum_{x \in \eta^+}\sum_{y \in \eta^+ \backslash x}a^+(x-y) + \sum_{x \in \eta^+}\sum_{y \in \eta^-}b^-(x-y)$ and
\[
 a_+ = \max\left\{ \frac{C_-\| b^- \|_{L^1} + C_+\| a^- \|_{L^1} + \| a^+ \|_{L^1} + \frac{C_-}{C_+}\| b^+ \|_{L^1} + \frac{b}{C_+}}{m^+}, \frac{\theta}{C_+},\frac{\vartheta}{C_+} \right\}.
\]
Using \eqref{MCSL:75} we conclude that $a_+ \in (0,1)$ which proves (S2).

The unique invariant measure for the environment $\mu_{\mathrm{inv}}$ is the Gibbs measure with activity $z^-$ and potential $\psi$.
The averaged rates are given by
\begin{align*}
 \overline{d}(x,\gamma^+) &= m^+ + \overline{m}(x) + \sum \limits_{y \in \gamma^+ \backslash x}a^-(x-y),
 \\ \overline{b}(x,\gamma^+) &= \overline{\lambda}(x) + \sum \limits_{y \in \gamma^+}a^+(x-y),
\end{align*}
where $\overline{\lambda}(x) := \int_{\Gamma^-} \sum_{y \in \gamma^-}b^+(x-y)d \mu_{\mathrm{inv}}(\gamma^-)$ and
$\overline{m}(x) := \int_{\Gamma^-} \sum_{y \in \gamma^-}b^-(x-y)d \mu_{\mathrm{inv}}(\gamma^-)$.
It remains to show that (AV1) -- (AV3) are satisfied. First observe that (AV1) holds with
\begin{align*}
 \overline{D}(x,\eta^+) &= 0^{|\eta^+|}(m^+ + \overline{m}(x)) + \1_{\Gamma_1^+}(\eta^+)\sum \limits_{y \in \eta^+}a^-(x-y),
 \\ \overline{B}(x,\eta^+) &= \overline{\lambda}(x)0^{|\eta^+|} + \1_{\Gamma_0^+}(\eta^+)\sum \limits_{y \in \eta^+}a^+(x-y).
\end{align*}
Concerning condition (AV3) we first observe that
\[
 \overline{\lambda}(x) = \int \limits_{\R^d}b^+(x-y)k_{\mathrm{inv}}(\{y\})dy \leq C_-\| k_{\mathrm{inv}} \|_{\K_{C_-}}\| b^+ \|_{L^1}
\]
from which we obtain $\overline{b}(x,\eta^+) \leq |\eta^+|\| a^+\|_{L^{\infty}} + C_-\| k_{\mathrm{inv}} \|_{\K_{C_-}}\| b^+ \|_{L^1}$.
For the last condition we obtain similarly to (S2)
\begin{align*}
 \overline{c}(\eta^+) &= \sum \limits_{x \in \eta^+}\left(m^+ + C_+\| a^- \|_{L^1} + \| a^+ \|_{L^1} + \frac{b}{C_+}\right)
 \\ &\ \ \ + \sum \limits_{x \in \eta^+}\left( \overline{m}(x) + \frac{\overline{\lambda}(x)}{C_+}\right) + \left( 1 + \frac{\theta}{C_+}\right)\sum \limits_{x \in \eta^+}\sum \limits_{y \in \eta^+ \backslash x}a^-(x-y)
 \\ &\leq (1+\overline{a}) \sum \limits_{x \in \eta^+}\left( m^+ + \overline{m}(x)\right),
\end{align*}
where we have used \eqref{MCSL:73} to obtain $\overline{\lambda}(x) \leq \vartheta \overline{m}(x)$ and by \eqref{MCSL:75}
\[
 \overline{a} := \max\left\{ \frac{C_+\| a^- \|_{L^1} + \| a^+ \|_{L^1} + \frac{b}{C_+}}{m^+}, \frac{\vartheta}{C_+}, \frac{\theta}{C_+}\right\} \leq a_+ \in (0,1).
\]
Under the given conditions we can apply Theorem \ref{TH:ERGODICITY} for $\overline{L}$ instead of $L^-$.
Let $\overline{\mu}_{\mathrm{inv}}$ be the corresponding unique invariant measure. 
Without interactions with the environment, i.e. $\overline{m} = \overline{\lambda} = 0$, we have $\overline{\mu}_{\mathrm{inv}} = \delta_{\emptyset}$.
In the presence of interactions, however, the invariant measure is non-degenerated, i.e. $\overline{\mu}_{\mathrm{inv}} \neq \delta_{\emptyset}$.

\subsection{Density dependent branching in Glauber environment}
Suppose that the environment is, as before, a Glauber dynamics with parameters $z^-, \psi$ (see \eqref{GLAUBER:ENV}).
For the system we assume that
\begin{align*}
 d^+(x,\gamma) &= m^+\exp\left( E_{\kappa}(x,\gamma^+)\right),
 \\ b^+(x, \gamma) &= \sum \limits_{y \in \gamma^+}\exp\left(-E_{\phi}(y, \gamma^-)\right) a^+(x-y).
\end{align*}
This model describes a branching process with strong (exponential) killing rate where the branching rate, in addition,
can be slowed down by interaction with the environment. This model is a prototype of a branching process inside a ''delirious'' environment.
We make the following assumptions on the parameters of the system:
\begin{enumerate}
 \item[(a)] $z^- > 0$, $\psi \geq 0$ is symmetric with $\beta(\psi) < \infty$ and there exists $C_- > 0$ such that \eqref{GLAUBER:ENV:COND} holds.
 \item[(b)] $m^+ > 0$ and $a^+, \phi, \kappa \geq 0$ are symmetric, with $\beta(\phi) < \infty$ and $a^+$ is integrable.
 Moreover $\kappa, a^+$ are bounded.
 \item[(c)] There exist constants $\vartheta > 0$ and $b \geq 0$ such that for all $\eta^+ \in \Gamma_0$
 \begin{align*}
  \sum \limits_{x \in \eta^+}\sum \limits_{y \in \eta^+ \backslash x}a^+(x-y) \leq \vartheta \sum \limits_{x \in \eta^+}\sum \limits_{y \in \eta^+ \backslash x}\kappa(x-y) + b|\eta^+|.
 \end{align*}
 Finally we have $\beta(-\kappa) < \infty$ and there exists $C_+ > 0$ such that
 \begin{align*}
   e^{C_+\beta(-\kappa)} + \frac{ e^{C_-\beta(\phi)}}{m^+ C_+}\max\{ C_+\| a^+ \|_{L^1} + b, \vartheta\} < 2.
 \end{align*}
\end{enumerate}
Then conditions (E1) -- (E3), (S1) -- (S3) and (AV1) -- (AV3) are satisfied.
Conditions (E1) -- (E3) have been shown in the first example. As before one can show that (S3) holds and (S1) is satisfied where
\begin{align*}
 D^+(x,\eta) &= 0^{|\eta^-|}m^+ \prod \limits_{y \in \eta^+}\left( e^{\kappa(x-y)} - 1 \right),
 \\ B^+(x,\eta) &= \1_{\Gamma_1^+}(\eta^+)\sum \limits_{y \in \eta^+}\prod \limits_{z \in \eta^-}\left( e^{-\phi(x-z)} - 1 \right)a(x-y).
\end{align*}
Then a computation shows that 
\begin{align*}
 c_+(\eta) &\leq m^+ e^{C_+ \beta(-\kappa)}\sum \limits_{x \in \eta^+}e^{E_{\kappa}(x,\eta^+ \backslash x)}
 + \frac{e^{C_- \beta(\phi)}}{C_+}\sum \limits_{x \in \eta^+}\sum \limits_{y \in \eta^+ \backslash x}a^+(x-y) + \frac{C_+}{C_-}\| a^+ \|_{L^1} e^{C_- \beta(\phi)} |\eta^+|
 \\ &\leq m^+ e^{C_+ \beta(-\kappa)}\sum \limits_{x \in \eta^+}e^{E_{\kappa}(x,\eta^+ \backslash x)}
 + \vartheta \frac{e^{C_- \beta(\phi)}}{C_+}\sum \limits_{x \in \eta^+}\sum \limits_{y \in \eta^+ \backslash x}\kappa(x-y)
 \\ &\ \ \ + \left(\frac{e^{C_- \beta(\phi)}}{C_+}b +\frac{C_+}{C_-}\| a^+ \|_{L^1} e^{C_- \beta(\phi)} \right)|\eta^+|
 \\ &\leq \left( m^+ e^{C_+ \beta(-\kappa)} + \max\{ b + C_+\|a^+\|_{L^1}, \vartheta\}\frac{e^{C_- \beta(\phi)}}{C_+} \right)\sum \limits_{x \in \eta^+}e^{E_{\kappa}(x,\eta^+ \backslash x)},
\end{align*}
where we have used
\[
 \sum \limits_{x \in \eta^+}e^{E_{\kappa}(x,\eta^+ \backslash x)} \geq |\eta^+| + \sum \limits_{x \in \eta^+}\sum \limits_{y \in \eta^+ \backslash x}\kappa(x-y).
\]
This shows (S2). Let us show (AV1) -- (AV3). The averaged birth-and-death rates are given by
\begin{align*}
 \overline{d}(x,\gamma^+) &= m^+\exp\left( E_{\kappa}(x,\gamma^+ \backslash x)\right),
 \\ \overline{b}(x,\gamma^+) &= \sum \limits_{y \in \gamma^+} \overline{\lambda}(y) a^+(x-y)
\end{align*}
with $\overline{\lambda}(y) := \int_{\Gamma^-}\exp\left(-E_{\phi^-}(y, \gamma^-)\right)d \mu_{\mathrm{inv}}(\gamma^-) \leq 1$.
As before (AV1) and (AV3) hold with
\begin{align*}
 \overline{D}(x,\eta^+) &= m^+ \prod \limits_{y \in \eta^+}\left( e^{\kappa(x-y)} - 1\right),
 \\ \overline{B}(x,\eta^+) &= \1_{\Gamma_1^+}(\eta^+) \sum \limits_{y \in \eta^+}\overline{\lambda}(y)a^+(x-y).
\end{align*}
Then for $\xi^+ \cap \zeta^+ = \emptyset$ we get
\begin{align*}
 \overline{B}(x,\zeta^+ \cup \xi^+) &= 0^{|\xi^+|}\1_{\Gamma_1^+}(\zeta^+) \sum \limits_{y \in \zeta^+}\overline{\lambda}(y)a^+(x-y)
 + 0^{|\zeta^+|}\1_{\Gamma_1^+}(\xi^+)\sum \limits_{y \in \xi^+}\overline{\lambda}(y)a^+(x-y)
\end{align*}
and hence
\begin{align*}
 \left| \sum \limits_{\zeta^+ \subset \eta^+ \backslash x}\overline{B}(x,\zeta^+ \cup \xi^+)\right|
 \leq 0^{|\xi^+|}\sum \limits_{y \in \eta^+ \backslash x}\overline{\lambda}(y)a^+(x-y)
 + \1_{\Gamma_1^+}(\xi^+)\sum \limits_{y \in \xi^+}\overline{\lambda}(y)a^+(x-y).
\end{align*}
A similar computation for $\overline{D}$ gives, recall $\overline{\lambda} \leq 1$,
\begin{align*}
 \overline{c}(\eta^+) &\leq m^+ e^{C_+ \beta(-\kappa)}\sum \limits_{x \in \eta^+}e^{E_{\kappa}(x,\eta^+ \backslash x)}
 + \sum \limits_{x \in \eta^+}\int \limits_{\R^d}\overline{\lambda}(y)a^+(x-y)dy
 + \frac{1}{C_+} \sum \limits_{x \in \eta^+}\sum \limits_{y \in \eta^+ \backslash x}\overline{\lambda}(y)a^+(x-y)
 \\ &\leq m^+ e^{C_+ \beta(-\kappa)}\sum \limits_{x \in \eta^+}e^{E_{\kappa}(x,\eta^+ \backslash x)}
 + \left( \| a^+ \|_{L^1} + \frac{b}{C_+} \right)|\eta^+| + \frac{\vartheta}{C_+}\sum \limits_{x \in \eta^+}\sum \limits_{y \in \eta^+ \backslash x}\kappa(x-y)
 \\ &\leq \left( m^+ e^{C_+ \beta(-\kappa)} + \max\left\{ \| a^+ \|_{L^1} + \frac{b}{C_+}, \frac{\vartheta}{C_+} \right\}\right)\sum \limits_{x \in \eta^+}e^{E_{\kappa}(x,\eta^+ \backslash x)}.
\end{align*}
It follows from
\[
 m^+ e^{C_+ \beta(-\kappa)} + \max\left\{ \| a^+ \|_{L^1} + \frac{b}{C_+}, \frac{\vartheta}{C_+} \right\}
 \leq m^+ e^{C_+ \beta(-\kappa)} + \frac{e^{C_- \beta(\phi)}}{C_+}\max\left\{ C_+\| a^+ \|_{L^1} + b,\vartheta\right\}
\]
that (AV2) is satisfied.

\subsection{Two interacting BDLP-models}
Suppose that the environment is given by an BDLP model with immigration parameter $z > 0$, i.e.
\begin{align*}
 d^-(x,\gamma^-) &= m^- + \sum \limits_{y \in \gamma^-} a^-(x-y),
 \\  b^-(x,\gamma^-) &= \sum \limits_{y \in \gamma^-}a^+(x-y) + z.
\end{align*}
For the system we suppose that it is also an BDLP model with additional killing and branching caused by the environment at additive rates, i.e.
\begin{align*}
 d^+(x,\gamma) &= m^+ + \sum \limits_{y \in \gamma^+ } b^-(x-y) + \sum \limits_{y \in \gamma^-}\varphi^-(x-y),
 \\ b^+(x,\gamma) &= \sum \limits_{y \in \gamma^+}b^+(x-y) + \sum \limits_{y \in \gamma^-}\varphi^+(x-y).
\end{align*}
We make the following assumptions on the parameters of the model
\begin{enumerate}
 \item[(a)] $z,m^+, m^- > 0$ and $a^{\pm}, b^{\pm}, \varphi^{\pm}$ are non-negative, symmetric, integrable and bounded.
 \item[(b)] There exist constants $b_1, b_2 \geq 0$ and $\vartheta_1, \vartheta_2, \vartheta_3 > 0$ such that
 \begin{align*}
  \sum \limits_{x \in \eta^+}\sum \limits_{y \in \eta^+ \backslash x}b^+(x-y) &\leq \vartheta_1 \sum \limits_{x \in \eta^+}\sum \limits_{y \in \eta^+ \backslash x}b^-(x-y) + b_1|\eta^+|
  \\ \sum \limits_{x \in \eta^-}\sum \limits_{y \in \eta^- \backslash x}a^+(x-y) &\leq \vartheta_2 \sum \limits_{x \in \eta^-}\sum \limits_{y \in \eta^- \backslash x}a^-(x-y) + b_2|\eta^-|,
 \end{align*}
 and $\varphi^+ \leq \vartheta_3 \varphi^-$ hold. 
 Finally there exist $C_+ > \vartheta_1, \vartheta_3$ and $C_- > \vartheta_2$ such that
 \begin{align*}
   m^+ &> C_+\| b^- \|_{L^1} + C_-\| \varphi^- \|_{L^1} + \frac{b_1}{C_+} + \| b^+\|_{L^1} + \| \varphi^+ \|_{L^1},
  \\ m^- &> C_-\| a^- \|_{L^1} + \frac{b_2 + z}{C_-} + \| a^+\|_{L^1}.
 \end{align*}
\end{enumerate}
Then conditions (E1) -- (E3), (S1) -- (S3) and (AV1) -- (AV3) are satisfied. First it is clear that (E1), (S1) and (E3), (S3) are satisfied with
\begin{align*}
 D^-(x,\eta^-) &= m^- 0^{|\eta^-|} + \1_{\Gamma_1^-}(\eta^-)\sum \limits_{y \in \eta^-}a^-(x-y),
 \\ B^-(x,\eta^-) &= z 0^{|\eta^-|} + \1_{\Gamma_1^-}(\eta^-)\sum \limits_{y \in \eta^-}a^+(x-y),
 \\ D^+(x,\eta) &= m^+ 0^{|\eta|} + 0^{|\eta^-|}\1_{\Gamma_1^+}(\eta^+)\sum \limits_{y \in \eta^+}b^-(x-y) + 0^{|\eta^+|}\1_{\Gamma_1^-}(\eta^-)\sum \limits_{y \in \eta^-}\varphi^-(x-y),
 \\ B^+(x,\eta) &= 0^{|\eta^-|}\1_{\Gamma_1^+}(\eta^+)\sum \limits_{y \in \eta^+}b^+(x-y) + 0^{|\eta^+|}\1_{\Gamma_1^-}(\eta^-)\sum \limits_{y \in \eta^-}\varphi^+(x-y).
\end{align*}
Hence we obtain after some computations
\begin{align*}
 c_-(\eta^-) &\leq \left( m^- + C_- \| a^- \|_{L^1} + \frac{z}{C_-} + \| a^+ \|_{L^1}\right)|\eta^-| 
 + \sum \limits_{x \in \eta^-}\sum \limits_{y \in \eta^- \backslash x}a^-(x-y)
 + \frac{1}{C_-}\sum \limits_{x \in \eta^-}\sum \limits_{y \in \eta^- \backslash x}a^+(x-y)
 \\ &\leq \left( m^- + C_- \| a^- \|_{L^1} + \frac{z}{C_-} + \| a^+ \|_{L^1} + \frac{b_2}{C_-}\right)|\eta^-| 
 + \left( 1 + \frac{\vartheta_2}{C_-}\right)\sum \limits_{x \in \eta^-}\sum \limits_{y \in \eta^- \backslash x}a^-(x-y)
\end{align*}
and likewise
\begin{align*}
 c_+(\eta) &\leq \left( m^+ + C_+ \| b^- \|_{L^1} + C_- \| \varphi^- \|_{L^1} + \| b^+ \|_{L^1} + \frac{C_-}{C_+}\| \varphi^+ \|_{L^1}\right)|\eta^+|
 \\ &\ \ \ + \sum \limits_{x \in \eta^+}\sum \limits_{y \in \eta^+ \backslash x}b^-(x-y)
 + \frac{1}{C_+}\sum \limits_{x \in \eta^+}\sum \limits_{y \in \eta^+ \backslash x}b^+(x-y)
 \\ &\ \ \ + \sum \limits_{x \in \eta^+}\sum \limits_{y \in \eta^-}\varphi^-(x-y)
 + \frac{1}{C_+}\sum \limits_{x \in \eta^+}\sum \limits_{y \in \eta^-}\varphi^+(x-y)
 \\ &\leq \left( m^+ + C_+ \| b^- \|_{L^1} + C_- \| \varphi^- \|_{L^1} + \| b^+ \|_{L^1} + \frac{C_-}{C_+}\| \varphi^+ \|_{L^1} + \frac{b_1}{C_+}\right)|\eta^+|
 \\ &\ \ \ + \left( 1 + \frac{\varphi_1}{C_+}\right)\sum \limits_{x \in \eta^+}\sum \limits_{y \in \eta^+ \backslash x}b^-(x-y)
 + \left( 1 + \frac{\varphi_3}{C_+}\right)\sum \limits_{x \in \eta^+}\sum \limits_{y \in \eta^-}\varphi^-(x-y).
\end{align*}
In view of the assumptions made on the parameters it is easily seen that (E2) and (S2) hold.
Let us show that (AV1) -- (AV3) hold. The averaged birth-and-death rates are given by
\begin{align*}
 \overline{d}(x,\gamma^+) &= m^+  + \overline{\varphi}^-(x) + \sum \limits_{y \in \gamma^+ \backslash x}b^-(x-y),
 \\ \overline{b}(x,\gamma^+) &= \sum \limits_{y \in \gamma^+}b^+(x-y) + \overline{\varphi}^+(x),
\end{align*}
where $\overline{\varphi}^{\pm}(x) = \int_{\Gamma^-}\sum_{y \in \gamma^-}\varphi^{\pm}(x-y)d \mu_{\mathrm{inv}}(\gamma^-)$.
Clearly (AV1) and (AV3) hold with
\begin{align*}
 \overline{D}(x,\eta^+) &= \left( m^+ + \overline{\varphi}^-(x)\right)0^{|\eta^+|} + \1_{\Gamma_1^+}(\eta^+)\sum \limits_{y \in \eta^+}b^-(x-y),
 \\ \overline{B}(x,\eta^+) &= \overline{\varphi}^+(x) 0^{|\eta^+|} + \1_{\Gamma_1^+}(\eta^+)\sum \limits_{y \in \eta^+}b^+(x-y).
\end{align*}
Then, as before, we get by assumption (c)
\begin{align*}
 \overline{c}(\eta^+) &\leq \left( m^+ + \overline{\varphi}^-(x) + C_+ \| b^- \|_{L^1} + \frac{\overline{\varphi}^+(x)}{C_+} + \| b^+ \|_{L^1}\right)|\eta^+|
 \\ &\ \ \ + \sum \limits_{x \in \eta^+}\sum \limits_{y \in \eta^+ \backslash x}b^-(x-y)
 + \frac{1}{C_+}\sum \limits_{x \in \eta^+}\sum \limits_{y \in \eta^+ \backslash x}b^+(x-y)
 \\ &\leq \left( m^+ + \overline{\varphi}^-(x) + C_+ \| b^- \|_{L^1} + \frac{\overline{\varphi}^+(x)}{C_+} + \| b^+ \|_{L^1} + \frac{b_1}{C_+}\right)|\eta^+|
 + \left( 1 + \frac{\vartheta_1}{C_+}\right)\sum \limits_{x \in \eta^+}\sum \limits_{y \in \eta^+ \backslash x}b^-(x-y)
\end{align*}
which shows condition (AV2).

\section{Proof of Theorem \ref{TH:FPE} and Theorem \ref{TH:ERGODICITY}}

\subsection{Proof of Theorem \ref{TH:FPE}}
In this section we closely follow the arguments in \cite{FK16b} (see also \cite{F16WR}).
Since the necessary computations are very similar to the latter works, we give only the main steps of proof.
Let $\Lb_{C_-}$ be the Banach space of integrable functions with norm
\[
 \| G \|_{\Lb_{C_-}} = \int \limits_{\Gamma_0^-}|G(\eta^-)|C_-^{|\eta^-|}d\lambda(\eta^-).
\]
Define an operator $\widehat{L}^-$ on the domain $D(\widehat{L}^-) := \{ G \in \Lb_{C_-} \ | \ M_- \cdot G \in \Lb_{C_-} \}$ by
\begin{align*}
 (\widehat{L}^-G)(\eta^-) &= - \sum \limits_{\xi^- \subset \eta^-}G(\xi^-) \sum \limits_{x \in \xi^-}\sum \limits_{\zeta^- \subset \xi^- \backslash x}D^-(x,\eta^- \backslash \xi^- \cup \zeta^-)
 \\ &+ \sum \limits_{\xi^- \subset \eta^-}\int \limits_{\R^d}G(\xi^- \cup x)\sum \limits_{\zeta^- \subset \xi^-}B^-(x,\eta^- \backslash \xi^- \cup \zeta^-)d x.
\end{align*}
We will see that this operator is related to $L^-$ via the $K$-transform.
\begin{proposition}\label{PROP:03}
 Suppose that (E1), (E2) are satisfied. Then $(\widehat{L}^-, D(\widehat{L}^-))$ is the generator of an analytic semigroup of contractions 
 $(T^-(t))_{t \geq 0}$ on $\Lb_{C_-}$. Moreover $B_{bs}(\Gamma_0^-)$ is a core for $(\widehat{L}^-, D(\widehat{L}^-))$.
\end{proposition}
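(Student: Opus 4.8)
The plan is to realise $\widehat{L}^-$ as a relatively bounded perturbation of its diagonal part and then to invoke sectorial perturbation theory on the weighted $L^1$-space $\Lb_{C_-}$. Write $\widehat{L}^- = L_0 + B$, where $L_0$ is the multiplication operator $(L_0 G)(\eta^-) = -M_-(\eta^-)G(\eta^-)$ on the domain $D(L_0) = D(\widehat{L}^-)$, and $B := \widehat{L}^- - L_0$ collects the remaining terms, i.e. the full birth part together with the off-diagonal death terms ($\xi^- \subsetneq \eta^-$ in the first sum). Since $M_-\ge 0$ takes values in $[0,\infty)$, $L_0$ is a sectorial multiplication operator: its spectrum lies in $(-\infty,0]$, and for $\lambda$ in a sector $\Sigma_{\pi/2+\delta}$ about the positive real axis one has $\|(\lambda-L_0)^{-1}\|_{\Lb_{C_-}} = \esssup_{\eta^-}|\lambda+M_-(\eta^-)|^{-1} \le (|\lambda|\cos\delta)^{-1}$. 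Hence $L_0$ generates the positive analytic contraction semigroup $(T_0(t)G)(\eta^-) = e^{-tM_-(\eta^-)}G(\eta^-)$.

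The key step is the relative-bound estimate. First I would show, for $G \in D(L_0)$,
\[
 \|BG\|_{\Lb_{C_-}} \le \int \limits_{\Gamma_0^-}|G(\eta^-)|\big(c_-(\eta^-)-M_-(\eta^-)\big)C_-^{|\eta^-|}d\lambda(\eta^-).
\]
The bookkeeping is as follows: one expands the death and birth parts of $B$, applies the combinatorial identity \eqref{IBP} to the death part (splitting $\eta^-=\xi^-\sqcup(\eta^-\setminus\xi^-)$), uses the corresponding change of variables $\sigma^-=\xi^-\cup x$ for the birth part (which produces the factor $C_-^{-1}$), and applies the triangle inequality in $x$ and $\zeta^-$. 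This reproduces exactly the two sums defining $c_-$, \emph{except} that the $\xi^-=\emptyset$ contribution of the death part is absent, since it is precisely the diagonal piece already absorbed into $L_0$. This missing term equals $\sum_{x\in\eta^-}d^-(x,\eta^-\setminus x)=M_-(\eta^-)$ by (E1) (using $d^-\ge 0$). Consequently $c_-$ minus its $\xi^-=\emptyset$ term is $c_--M_-$, and (E2) gives $c_--M_- \le (a_--1)M_-$. Therefore $\|BG\|_{\Lb_{C_-}} \le (a_--1)\|L_0G\|_{\Lb_{C_-}}$, an $L_0$-bound equal to $q_0:=a_--1$, which is strictly below $1$ precisely because $a_-<2$.

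Generation then follows from $q_0<1$. For contractivity I would verify dissipativity of $\widehat{L}^-$ directly: pairing $\widehat{L}^-G$ with the $L^1$-duality functional $\mathrm{sgn}(G)\,C_-^{|\cdot|}$ gives $-\|L_0G\|_{\Lb_{C_-}}+\langle BG,\cdot\rangle \le (-1+q_0)\|L_0G\|_{\Lb_{C_-}}=(a_--2)\|L_0G\|_{\Lb_{C_-}}\le 0$. For the range condition, $\|B(\lambda-L_0)^{-1}\|\le q_0\,\esssup_{\eta^-}\tfrac{M_-(\eta^-)}{|\lambda+M_-(\eta^-)|}$, which on the positive real axis is $\le q_0<1$, so $I-B(\lambda-L_0)^{-1}$ is invertible by Neumann series and $\lambda-\widehat{L}^-$ is surjective; Lumer--Phillips yields the contraction semigroup. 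For analyticity I would run the same Neumann argument inside $\Sigma_{\pi/2+\delta}$, where $\tfrac{M_-}{|\lambda+M_-|}\le(\cos\delta)^{-1}$; choosing $\delta$ small enough that $q_0/\cos\delta<1$ makes $I-B(\lambda-L_0)^{-1}$ invertible throughout the sector, and $(\lambda-\widehat{L}^-)^{-1}=(\lambda-L_0)^{-1}(I-B(\lambda-L_0)^{-1})^{-1}$ inherits the resolvent bound $\lesssim|\lambda|^{-1}$, which is the analytic-semigroup criterion.

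For the core statement I would first check that $B_{bs}(\Gamma_0^-)$ is a core for $L_0$: any $G\in D(L_0)$ is approximated in the graph norm by the truncations $G_n$ equal to $G$ on $A_n=\{\,|\eta^-|\le n,\ \eta^-\subset\Lambda_n,\ |G(\eta^-)|\le n\,\}$ (with $\Lambda_n\uparrow\R^d$) and $0$ elsewhere; these lie in $B_{bs}(\Gamma_0^-)\cap D(L_0)$ and converge, with $M_-G_n\to M_-G$, by dominated convergence. Since $B$ has $L_0$-bound $q_0<1$, the graph norms of $L_0$ and of $\widehat{L}^-=L_0+B$ are equivalent on $D(L_0)$, so a core for $L_0$ is a core for $\widehat{L}^-$. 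I expect the main obstacle to be the second paragraph, namely the careful combinatorial reduction of $\|BG\|_{\Lb_{C_-}}$ to the functional $c_-$ via \eqref{IBP} and the birth-part change of variables, together with the crucial observation that the diagonal death term contributes exactly $M_-$ so that the threshold $a_-<2$ in (E2) is what turns into an admissible relative bound $a_--1<1$. The remaining arguments are then a routine application of perturbation theory for sectorial generators.
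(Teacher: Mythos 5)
Your proof is correct, but it routes the functional analysis differently from the paper. The skeleton coincides: your splitting $\widehat{L}^- = L_0 + B$ is exactly the paper's decomposition $\widehat{L}^- = A + B$ with $A$ the multiplication by $-M_-$, and your key estimate
\[
 \| BG \|_{\Lb_{C_-}} \leq \int \limits_{\Gamma_0^-}\left( c_-(\eta^-) - M_-(\eta^-)\right)|G(\eta^-)|C_-^{|\eta^-|}d\lambda(\eta^-) \leq (a_- - 1)\| L_0 G \|_{\Lb_{C_-}},
\]
obtained from \eqref{IBP} plus the observation that the empty-configuration contribution to the death part of $c_-$ is exactly $M_-$, is literally the inequality at the heart of the paper's proof. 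Where you diverge is in how generation is deduced. The paper introduces a positive dominating operator $B_*$ with $|BG| \leq B_*|G|$, applies the Thieme--Voigt theorem \cite{TV06} to show that $A + B_*$ generates a positive contraction semigroup $V(t)$, and then cites the domination theorems of \cite{AR91} to conclude that $A+B$ generates an analytic semigroup satisfying $|T^-(t)G| \leq V(t)|G|$, whence contractivity. You instead argue by hand: relative bound strictly below one, Neumann series for $(1 - B R(\lambda;L_0))^{-1}$ in a sector, dissipativity of $L_0 + B$ in the weighted $L^1$-duality, and Lumer--Phillips. Both arguments are complete; yours is self-contained and makes the sectorial resolvent estimate explicit (essentially the computation the paper repeats later in Proposition \ref{TH:00} when establishing the spectral gap), while the paper's route additionally delivers positivity and the pointwise domination $|T^-(t)G| \leq V(t)|G|$, which is the natural structure for the substochastic-semigroup toolkit the authors keep using afterwards (e.g. condition (E3) together with \cite{TV06} in the Lenard-positivity step of Theorem \ref{TH:FPE}); your dissipativity argument yields no positivity information. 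The core statement is handled the same way (truncation), your extra step via equivalence of the graph norms of $L_0$ and $\widehat{L}^-$ being a clean way to transfer the core. One cosmetic remark: state the relative bound as $\max\{a_- - 1, 0\}$, since (E2) does not formally exclude $a_- < 1$; only $a_- - 1 < 1$ is ever used, so nothing in your argument changes.
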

\begin{proof}
 Consider the decomposition $\widehat{L}^- = A + B$ where $AG(\eta^-) = - M_-(\eta)G(\eta^-)$ and
 \begin{align*}
  (BG)(\eta^-) &= - \sum \limits_{\xi^- \subsetneq \eta^-}G(\xi^-) \sum \limits_{x \in \xi^-}\sum \limits_{\zeta^- \subset \xi^- \backslash x}D^-(x,\eta^- \backslash \xi^- \cup \zeta^-)
  \\ &+ \sum \limits_{\xi^- \subset \eta^-}\int \limits_{\R^d}G(\xi^- \cup x)\sum \limits_{\zeta^- \subset \xi^-}B^-(x,\eta^- \backslash \xi^- \cup \zeta^-)d x.
 \end{align*}
 Observe that $(A, D(\widehat{L}^-))$ is the generator of a positive, analytic semigroup of contractions on $\Lb_{C_-}$.
 Define another positive operator on $D(\widehat{L}^-)$ by
 \begin{align*}
  (B_*G)(\eta^-) &= - \sum \limits_{\xi^- \subsetneq \eta^-}G(\xi^-) \sum \limits_{x \in \xi^-}\left|\sum \limits_{\zeta^- \subset \xi^- \backslash x}D^-(x,\eta^- \backslash \xi^- \cup \zeta^-)\right|
 \\ &+ \sum \limits_{\xi^- \subset \eta^-}\int \limits_{\R^d}G(\xi^- \cup x)\left|\sum \limits_{\zeta^- \subset \xi^-}B^-(x,\eta^- \backslash \xi^- \cup \zeta^-)\right|d x.
 \end{align*}
 By (E2) we can find $r \in (0,1)$ such that $a_- < 1+r < 2$. 
 Then, by \eqref{IBP}, a short computation shows that for $0 \leq G \in D(\widehat{L}^-)$ we have
 \begin{align*}
  \int \limits_{\Gamma_0^-} B_*G(\eta^-)C_-^{|\eta^-|}d\lambda(\eta^-) 
  \leq \int \limits_{\Gamma_0^-}\left( c_-(\eta^-) - M_-(\eta^-)\right)|G(\eta^-)|C_-^{|\eta^-|}d\lambda(\eta^-)
  \leq (a_- - 1)\| AG \|_{\Lb_{C_-}}
 \end{align*}
 which yields $\int_{\Gamma_0^-}( A + \frac{1}{r}B_*)G(\eta^-)C_-^{|\eta^-|}d\lambda(\eta^-) \leq 0$.
 Hence by \cite[Theorem 2.2]{TV06} $(A + B_*, D(\widehat{L}^-))$ is the generator of a positive semigroup $V(t)$ of contractions on $\Lb_{C_-}$.
 Applying \cite[Theorem 1.1]{AR91} together with $|BG| \leq B_*|G|$ it follows that $(A + B, D(\widehat{L}^-))$
 is the generator of an analytic semigroup $T^-(t)$ on $\Lb_{C_-}$. By \cite[Theorem 1.2]{AR91} we get $|T^-(t)G| \leq V(t)|G|$
 and since $V(t)$ is a semigroup of contractions, the same holds true for $T(t)$.
 For the last assertion let $G \in D(\widehat{L}^-)$ and set $G_n(\eta^-) = \1_{|\eta^-| \leq n}\1_{\eta^- \subset B_n}G(\eta^-)$
 where $B_n$ denotes the ball of diameter $n$ around zero. 
 Then it is easily seen that $G_n \longrightarrow G$ and $\widehat{L}^-G_n \longrightarrow \widehat{L}^-G$ in $\Lb_{C_-}$.
 \qed
\end{proof}
Let us now prove Lemma \ref{ENV:LEMMA00}.
\begin{proof}(Lemma \ref{ENV:LEMMA00})
 (a) Fix $\mu \in \mathcal{P}_{C_-}$. First note that the K-transform has an unique extension to a bounded linear operator
 $K: L^1(\Gamma_0^-, k_{\mu}d\lambda) \longrightarrow L^1(\Gamma^-, d\mu)$ such that \eqref{EQ:12} is absolutely convergent for $G \in L^1(\Gamma_0^-, k_{\mu}d\lambda)$
 and $\mu$-a.a. $\gamma^-$ (see \cite{KK02}). The assertion now follows from the following estimates
 \begin{align*}
  \int \limits_{\Gamma^-}\left( d^-(x,\gamma^-) + b^-(x,\gamma^-)\right)d\mu(\gamma^-)
  &= \int \limits_{\Gamma_0^-}\left( D^-(x,\eta^-) + B^-(x,\eta^-)\right)k_{\mu}(\eta^-)d\lambda(\eta^-)
  \\ &\leq \int \limits_{\Gamma_0^-}\left( |D^-(x,\eta^-)| + |B^-(x,\eta^-)|\right)k_{\mu}(\eta^-)d\lambda(\eta^-)
  \\ &\leq \| k_{\mu} \|_{\K_{C_-}} \max\{1, C_-\}c_-(\{x\}) \leq \| k_{\mu} \|_{\K_{C_-}} \max\{1, C_-\}a_- d^-(x,\emptyset).
 \end{align*}
 (b) Since for $\mu \in \mathcal{P}_{C_-}$ and $G \in B_{bs}(\Gamma_0^-)$ we have 
 $\widehat{L}G \in \Lb_{C_-} \subset L^1(\Gamma_0^-, k_{\mu}d\lambda)$ it follows that $K\widehat{L}G \in L^1(\Gamma^-,d\mu)$.
 Let $(K^{-1}F)(\eta^-) = \sum_{\xi^- \subset \eta^-}(-1)^{|\eta^- \backslash \xi^-|}F(\xi^-), \eta^- \in \Gamma_0^-$
 be the inverse transformation to \eqref{EQ:12}. Using the properties of $K^{-1}$ together with (E1) we obtain
 \begin{align*}
  (K^{-1}d^-(x,\cdot \cup \xi^- \backslash x))(\eta^- \backslash \xi^-) &= \sum \limits_{\zeta^- \subset \eta^- \backslash \xi^-}(-1)^{|(\eta^- \backslash \xi^-) \backslash \zeta^-|}\sum \limits_{\alpha \subset \zeta^- \cup \xi^- \backslash x}D^-(x,\alpha)
  \\ &= \sum \limits_{\zeta^- \subset \eta^- \backslash \xi^-}(-1)^{|(\eta^- \backslash \xi^-) \backslash \zeta^-|}\sum \limits_{\alpha_1 \subset \zeta^-}\sum \limits_{\alpha_2 \subset \xi^- \backslash x}D^-(x,\alpha_1 \cup \alpha_2)
  \\ &= \sum \limits_{\alpha_2 \subset \xi^- \backslash x}D^-(x,\eta^- \backslash \xi^- \cup \alpha_2)
 \end{align*}
 and similarly $(K^{-1}b^-(x,\cdot \cup \xi^-))(\eta^- \backslash \xi^-) = \sum_{\alpha_2 \subset \xi^-}B^-(x,\eta^- \backslash \xi^- \cup \alpha_2)$.
 Hence we have shown that 
 \begin{align*}
  (\widehat{L}^-G)(\eta^-) &= - \sum \limits_{\xi^- \subset \eta^-}G(\xi^-) \sum \limits_{x \in \xi^-}(K^{-1}d^-(x,\cdot \cup \xi^- \backslash x))(\eta^- \backslash \xi^-)
 \\ &+ \sum \limits_{\xi^- \subset \eta^-}\int \limits_{\R^d}G(\xi^- \cup x)(K^{-1}b(x,\cdot \cup \xi^-))(\eta^- \backslash \xi^-)d x.
 \end{align*}
 Now we may deduce from \cite[Proposition 3.1]{FKK12} that $K\widehat{L}^- = L^-KG$ for $G \in B_{bs}(\Gamma_0^-)$.
 \qed
\end{proof}
Let $(\Lb_{C_-})^*$ be the dual Banach space to $\Lb_{C_-}$. 
Using the duality $\langle G, k \rangle := \int_{\Gamma_0^-}G(\eta^-)k(\eta^-)d\lambda(\eta^-)$ it can be identified with $\K_{C_-}$. 
For $k \in \K_{C_-}$ let 
\begin{align*}
 (L^{\Delta,-}k)(\eta^-) &= - \sum \limits_{x \in \eta^-}\int \limits_{\Gamma_0^-}k(\eta^- \cup \xi^-)\sum \limits_{\zeta^- \subset \eta^- \backslash x}D^-(x,\zeta^- \cup \xi^-)d\lambda(\xi^-)
 \\ &\ \ \ + \sum \limits_{x \in \eta^-}\int \limits_{\Gamma_0^-}k(\eta^- \cup \xi^- \backslash x)\sum \limits_{\zeta^- \subset \eta^- \backslash x}B^-(x,\zeta^- \cup \xi^-)d\lambda(\xi^-).
\end{align*}
Note that $L^{\Delta,-}k$ is $\lambda$-a.e. well-defined, satisfies for any $k \in \K_{C_-}$ 
\begin{align*}
 |L^{\Delta,-}k(\eta^-)| \leq \| k \|_{\K_{C_-}}C_-^{|\eta^-|}c_-(\eta^-), \ \ \eta^- \in \Gamma_0^-,
\end{align*}
but, in general, $L^{\Delta,^-}k \not \in \K_{C_-}$.
\begin{lemma}\label{ENV:LEMMA01}
 Suppose that conditions (E1) and (E2) are satisfied. 
 Let $(\widehat{L}^{*,-}, D(\widehat{L}^{*,-}))$ be the adjoint operator to $(\widehat{L}^-, D(\widehat{L}^-))$ on $\K_{C_-}$.
 Then $L^{\Delta,-}k = \widehat{L}^{*,-}k$ for any $k \in D(\widehat{L}^{*,-})$ and
 \[
  D(\widehat{L}^{*,-}) = \{ k \in \K_{C_-} \ | \ L^{\Delta,-}k \in \K_{C_-} \}.
 \]
\end{lemma}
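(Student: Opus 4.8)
The plan is to reduce everything to a single \emph{pairing identity} on the core $B_{bs}(\Gamma_0^-)$, namely
\[
 \langle \widehat{L}^- G, k \rangle = \langle G, L^{\Delta,-}k \rangle, \qquad G \in B_{bs}(\Gamma_0^-), \ k \in \K_{C_-},
\]
and then to transfer this identity to the abstract adjoint by a core argument. First I would check that the right-hand side is well-defined for every $G \in B_{bs}(\Gamma_0^-)$: by the a priori estimate $|L^{\Delta,-}k(\eta^-)| \le \|k\|_{\K_{C_-}} C_-^{|\eta^-|} c_-(\eta^-)$ recorded above, together with (E2) which gives $c_-(\eta^-) \le a_- M_-(\eta^-)$, the function $L^{\Delta,-}k$ is $\lambda$-a.e.\ finite and locally dominated, so that $\int_{\Gamma_0^-} G(\eta^-) (L^{\Delta,-}k)(\eta^-)\, d\lambda(\eta^-)$ converges absolutely whenever $G$ has bounded support and bounded cardinality (here (E1) guarantees that $M_-$ is bounded on configurations of bounded cardinality).

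The heart of the argument is the computation of $\langle \widehat{L}^- G, k \rangle$, which I would carry out on the death and birth parts of $\widehat{L}^-$ separately. For the death part I insert the definition and then apply the combinatorial relation \eqref{IBP} to the decomposition of $\eta^-$ into $\xi^-$ and $\eta^- \setminus \xi^-$; this turns the inner sum over subsets into a second $\lambda$-integral, and after relabelling the two integration variables one recovers exactly the death part of $L^{\Delta,-}k$ tested against $G$. For the birth part I proceed analogously with \eqref{IBP}, and then invoke the Mecke-type identity for the Lebesgue--Poisson measure $\lambda$ (which follows directly from the defining series of $\lambda$) to shift the extra point $x$ from the argument $G(\xi^- \cup x)$ onto a sum $\sum_{x \in \eta^-}$; this produces precisely the birth part of $L^{\Delta,-}k$. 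Every interchange of summation and integration is justified by the domination described above, so Fubini applies and the pairing identity is established.

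With the pairing identity available, both assertions follow from the fact, established in Proposition \ref{PROP:03}, that $B_{bs}(\Gamma_0^-)$ is a core for $(\widehat{L}^-, D(\widehat{L}^-))$. Suppose first that $k \in \K_{C_-}$ satisfies $L^{\Delta,-}k \in \K_{C_-}$. Then the right-hand side is a bounded functional, $|\langle G, L^{\Delta,-}k \rangle| \le \|G\|_{\Lb_{C_-}} \|L^{\Delta,-}k\|_{\K_{C_-}}$; approximating an arbitrary $G \in D(\widehat{L}^-)$ by $G_n \in B_{bs}(\Gamma_0^-)$ with $G_n \to G$ and $\widehat{L}^- G_n \to \widehat{L}^- G$ in $\Lb_{C_-}$ and passing to the limit on both sides (using continuity of the pairing) gives $\langle \widehat{L}^- G, k \rangle = \langle G, L^{\Delta,-}k \rangle$ for all $G \in D(\widehat{L}^-)$, which by definition of the adjoint means $k \in D(\widehat{L}^{*,-})$ and $\widehat{L}^{*,-}k = L^{\Delta,-}k$. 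Conversely, if $k \in D(\widehat{L}^{*,-})$, then $\langle G, \widehat{L}^{*,-}k \rangle = \langle \widehat{L}^- G, k \rangle = \langle G, L^{\Delta,-}k \rangle$ for all $G \in B_{bs}(\Gamma_0^-)$; testing against indicators $G = \1_A$ of bounded measurable sets of bounded cardinality forces $L^{\Delta,-}k = \widehat{L}^{*,-}k$ $\lambda$-a.e., and in particular $L^{\Delta,-}k \in \K_{C_-}$. This gives simultaneously the identification $\widehat{L}^{*,-}k = L^{\Delta,-}k$ and the claimed description of $D(\widehat{L}^{*,-})$.

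I expect the genuine difficulty to lie entirely in the first step, namely in the bookkeeping of the combinatorial identities and in verifying absolute convergence so that \eqref{IBP}, the Mecke identity, and Fubini may legitimately be applied; the a priori bound on $L^{\Delta,-}k$ together with (E2) is exactly what supplies the required domination. Once the pairing identity is secured, the passage to the adjoint is a soft functional-analytic argument that rests only on the core property from Proposition \ref{PROP:03} and the continuity of the $\K_{C_-}$--$\Lb_{C_-}$ duality.
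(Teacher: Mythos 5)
Your proposal is correct and takes essentially the same route as the paper: the paper's proof also rests entirely on the duality identity $\langle \widehat{L}^-G, k \rangle = \langle G, L^{\Delta,-}k \rangle$ (established by the combinatorial/Fubini computation you describe, quoted there from \cite{FK16b}), followed by the same standard functional-analytic deduction of the adjoint identification. The only difference is that the paper asserts the identity directly for all $G \in D(\widehat{L}^-)$ rather than only on $B_{bs}(\Gamma_0^-)$ --- which your own domination $|L^{\Delta,-}k| \leq \| k \|_{\K_{C_-}} C_-^{|\cdot|} c_- \leq a_- \| k \|_{\K_{C_-}} C_-^{|\cdot|} M_-$ already yields for free, since $M_- \cdot G \in \Lb_{C_-}$ by definition of the domain --- so your extra core-approximation step, while correct, is not actually needed.
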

\begin{proof}
 Arguing similarly to \cite[Lemma 3.5]{FK16b} one can show that 
 \[
  \langle \widehat{L}^-G, k \rangle = \langle G, L^{\Delta,-}k \rangle, \ \ G \in D(\widehat{L}^-), \ \ k \in \K_{C_-}
 \]
 from which one can readily deduce the assertion.
 \qed
\end{proof}
The Cauchy problem
\begin{align}\label{EQ:01}
 \frac{d}{dt}k_t^- = L^{\Delta,^-}k_t^-, \ \ k_t^-|_{t=0} = k_0^-
\end{align}
is a Markov analogue of the BBGKY-hierarchy and describes the evolution of correlation functions corresponding to the Fokker-Planck equation.
Denote by $\widehat{T}^-(t)^*$ the adjoint semigroup to $T^-(t)$.
The next proposition gives existence and uniqueness of weak solutions to this hierarchy.
\begin{proposition}\label{PROP:04}
 Suppose that (E1) and (E2) are satisfied.
 \begin{enumerate}
  \item[(a)] For any $k_0^- \in \K_{C_-}$ the function $k_t^- = T^-(t)^*k_0^-$ satisfies
  \begin{align}\label{CORR}
    \langle G, k_t^-\rangle = \langle G, k_0^- \rangle + \int \limits_{0}^{t}\langle \widehat{L}^-G, k_s^- \rangle ds, \ \ t \geq 0
  \end{align}
   for any $G \in B_{bs}(\Gamma_0^-)$.
  \item[(b)] Let $(r_t)_{t \geq 0} \subset \K_{C_-}$ with $r_0 = k_0^-$ satisfy \eqref{CORR} and
   suppose that 
   \begin{align}\label{EQ:18}
    \sup \limits_{t \in [0,T]}\| r_t \|_{\K_{C_-}} < \infty, \ \ \forall T > 0.
   \end{align}
   Then $r_t = T^-(t)^*k_0$.
 \end{enumerate}
\end{proposition}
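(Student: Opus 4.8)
The plan is to derive part (a) directly from the fundamental identity for the strongly continuous semigroup $(T^-(t))_{t\ge0}$, and to obtain part (b) by a backward-in-time duality argument against that same semigroup. First I would record that $B_{bs}(\Gamma_0^-)\subset D(\widehat L^-)$: on the support of such a $G$ both the cardinality and the location of the points are bounded, so by (E1) the function $M_-\cdot G$ is bounded with bounded support and hence lies in $\Lb_{C_-}$. For (a), fix $G\in B_{bs}(\Gamma_0^-)$ and use the Bochner identity $T^-(t)G-G=\int_0^t T^-(s)\widehat L^-G\,ds$ in $\Lb_{C_-}$. Pairing with $k_0^-\in\K_{C_-}=(\Lb_{C_-})^*$ and pulling the bounded functional inside the integral gives
\[
\langle G,k_t^-\rangle=\langle T^-(t)G,k_0^-\rangle=\langle G,k_0^-\rangle+\int_0^t\langle T^-(s)\widehat L^-G,k_0^-\rangle\,ds=\langle G,k_0^-\rangle+\int_0^t\langle\widehat L^-G,k_s^-\rangle\,ds,
\]
where the last equality uses $\langle T^-(s)H,k_0^-\rangle=\langle H,T^-(s)^*k_0^-\rangle=\langle H,k_s^-\rangle$. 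This is precisely \eqref{CORR}.

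For (b), set $u_t:=r_t-T^-(t)^*k_0^-$. By (a) and linearity $u_0=0$, the difference satisfies the homogeneous identity $\langle G,u_t\rangle=\int_0^t\langle\widehat L^-G,u_s\rangle\,ds$ for $G\in B_{bs}(\Gamma_0^-)$, and $\sup_{t\in[0,T]}\|u_t\|_{\K_{C_-}}<\infty$ by \eqref{EQ:18} together with the contractivity of $T^-(t)^*$. The first preparatory step is to extend this identity from $B_{bs}(\Gamma_0^-)$ to all of $D(\widehat L^-)$: since $B_{bs}(\Gamma_0^-)$ is a core (Proposition \ref{PROP:03}), I approximate $G\in D(\widehat L^-)$ by $G_n\in B_{bs}(\Gamma_0^-)$ with $G_n\to G$ and $\widehat L^-G_n\to\widehat L^-G$ in $\Lb_{C_-}$, and pass to the limit using the uniform bound on $\|u_s\|_{\K_{C_-}}$ and dominated convergence in the time integral. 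Then, fixing $t_0>0$ and $G\in D(\widehat L^-)$, I would study $g(s):=\langle T^-(t_0-s)G,u_s\rangle$ on $[0,t_0]$ and show it is constant; evaluating at the endpoints yields $\langle G,u_{t_0}\rangle=\langle T^-(t_0)G,u_0\rangle=0$, and since $D(\widehat L^-)$ is dense in $\Lb_{C_-}$ and $u_{t_0}\in(\Lb_{C_-})^*$ this forces $u_{t_0}=0$, hence $r_t=T^-(t)^*k_0^-$ for every $t$.

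The hard part will be making $g$ differentiable in the presence of the time-dependent test function $T^-(t_0-s)G$ and the fact that $r_s$ is only weak-$*$ bounded. The key computation splits $g(s+h)-g(s)$ into $\langle T^-(t_0-s-h)G,u_{s+h}-u_s\rangle$ and $\langle[T^-(t_0-s-h)-T^-(t_0-s)]G,u_s\rangle$. Applying the extended identity with the test function $T^-(t_0-s-h)G$ (frozen in $\tau$) to the first term, and the Bochner identity $T^-(t_0-s-h)G-T^-(t_0-s)G=-\int_s^{s+h}\widehat L^-T^-(t_0-\tau)G\,d\tau$ to the second, one obtains
\[
g(s+h)-g(s)=\int_s^{s+h}\Big(\langle\widehat L^-T^-(t_0-s-h)G,u_\tau\rangle-\langle\widehat L^-T^-(t_0-\tau)G,u_s\rangle\Big)\,d\tau.
\]
Because $\|\widehat L^-T^-(r)G\|_{\Lb_{C_-}}=\|T^-(r)\widehat L^-G\|_{\Lb_{C_-}}\le\|\widehat L^-G\|_{\Lb_{C_-}}$ is bounded uniformly in $r$, the integrand is bounded, so $g$ is Lipschitz; dividing by $h$ and letting $h\to0^+$, the continuity of $r\mapsto\widehat L^-T^-(r)G=T^-(r)\widehat L^-G$ in $\Lb_{C_-}$ and of $\tau\mapsto\langle H,u_\tau\rangle$ (which is even Lipschitz by the integral identity) make both terms converge to $\langle\widehat L^-T^-(t_0-s)G,u_s\rangle$, so the right derivative of $g$ vanishes everywhere. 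An absolutely continuous function with vanishing derivative is constant, which closes the argument. I expect the only genuinely delicate points to be the domain and core bookkeeping and the uniform-in-$r$ control of $\widehat L^-T^-(r)G$, both of which are furnished by Proposition \ref{PROP:03} and the contractivity established there.
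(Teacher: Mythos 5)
Your proof is correct, but it takes a genuinely different route from the paper's. For part (a) you both do the same thing: pair the Bochner identity $T^-(t)G - G = \int_0^t T^-(s)\widehat L^- G\,ds$ with $k_0^-$ and use contractivity/adjointness (the paper compresses this to one sentence). For part (b), however, the paper does not run a duality computation at all: it invokes the abstract uniqueness machinery of Wu and Zhang \cite{WZ06}, namely that there is at most one solution of \eqref{CORR} that is continuous for the topology $\mathcal{C}$ of uniform convergence on norm-compact subsets of $\Lb_{C_-}$, and then upgrades the weak-$*$ continuity of $t \mapsto r_t$ (which follows from \eqref{CORR}, \eqref{EQ:18} and density of $B_{bs}(\Gamma_0^-)$) to $\mathcal{C}$-continuity via \cite[Lemma 1.10]{WZ06}. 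You instead give the classical self-contained backward-duality argument: set $u_t = r_t - T^-(t)^*k_0^-$, extend the homogeneous identity from the core $B_{bs}(\Gamma_0^-)$ to $D(\widehat L^-)$, and show $s \mapsto \langle T^-(t_0-s)G, u_s\rangle$ is Lipschitz with vanishing right derivative, hence constant. Your route buys independence from the external reference, using only what Proposition \ref{PROP:03} already provides (contractivity, analyticity, the core property); the paper's route buys brevity and fits into a general topological framework for $L^\infty$-type uniqueness that applies even when the dual-semigroup structure is less explicit. One point you should tighten: in the limit $h \to 0^+$ you need continuity of $\tau \mapsto \langle H, u_\tau\rangle$ for the frozen test function $H = T^-(t_0-s)\widehat L^- G$, which need not lie in $D(\widehat L^-)$ for a general $C_0$-semigroup; here it does because the semigroup is analytic (so $T^-(r)$ maps into the domain for $r>0$), or alternatively one gets plain continuity for every $H \in \Lb_{C_-}$ by approximating $H$ in norm from $D(\widehat L^-)$ and using the uniform bound on $\|u_\tau\|_{\K_{C_-}}$ --- either patch suffices, but the justification should be stated.
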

\begin{proof}
 Note that assertion (a) readily follows from the properties of the adjoint semigroup $T^-(t)^*$.
 Let us prove (b). It follows from \cite[Theorem 2.1]{WZ06} that there exists at most one solution to \eqref{CORR} 
 such that $t \longmapsto r_t$ is continuous w.r.t. the topology $\mathcal{C}$.
 Here $\mathcal{C}$ is the topology of uniform convergence on compact sets of $\Lb_{C_-}$
 given by a basis of neighbourhoods with
 \[
  \left \{ r \in \K_{C_-} \ : \ |\langle G, r \rangle - \langle G, k \rangle| < \e, \ \ \forall G \in K \right\}
 \]
 where $\e > 0$, $k \in \K_{C_-}$ and $K$ is a compact subset of $\Lb_{C_-}$.
 Using $\widehat{L}^-G \in \Lb_{C_-}$, \eqref{EQ:18} and \eqref{CORR} we see that $t \longmapsto \langle G, r_t \rangle$ 
 is continuous for any $G \in B_{bs}(\Gamma_0^-)$. Since $B_{bs}(\Gamma_0^-)$ is dense in $\Lb_{C_-}$, by \eqref{EQ:18} we can show that
 $t \longmapsto r_t$ is continuous w.r.t. $\sigma(\K_{C_-}, \Lb_{C_-})$.
 By \cite[Lemma 1.10]{WZ06} it follows that $t \longmapsto r_t$ is also continuous w.r.t. $\mathcal{C}$ which proves the assertion.
 \qed
\end{proof}
In the next step we prove the equivalence between solutions to \eqref{EQ:06} and \eqref{CORR}.
\begin{lemma}
 Suppose that conditions (E1) -- (E2) are satisfied.
 Let $(\mu_t)_{t \geq 0} \subset \mathcal{P}_{C_-}$, denote by $(k_t)_{t \geq 0}$ the corresponding correlation functions and assume that
 \[
    \sup \limits_{t \in [0,T]}\| k_t \|_{\K_{C_-}} < \infty, \ \ \forall T > 0.
 \]
 Then $(\mu_t)_{t \geq 0}$ satisfies \eqref{EQ:06} if and only if $(k_t)_{t \geq 0}$ satisfies \eqref{CORR}.
\end{lemma}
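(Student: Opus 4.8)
The plan is to translate both equations through the duality pairing $\langle G, k\rangle = \int_{\Gamma_0^-} G(\eta^-) k(\eta^-) d\lambda(\eta^-)$ between $\Lb_{C_-}$ and $\K_{C_-}$, exploiting that $K$ maps $B_{bs}(\Gamma_0^-)$ bijectively onto $\mathcal{FP}(\Gamma^-)$. The first step is to record the dictionary
\[
 \int_{\Gamma^-} F \, d\mu_t = \langle G, k_t\rangle, \qquad \int_{\Gamma^-} (L^- F)\, d\mu_t = \langle \widehat{L}^- G, k_t\rangle,
\]
valid for every $F = KG$ with $G \in B_{bs}(\Gamma_0^-)$. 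The first identity is exactly the defining relation \eqref{EQ:14} of the correlation function $k_t = k_{\mu_t}$. For the second I would use $L^- KG = K \widehat{L}^- G$, established in the proof of Lemma \ref{ENV:LEMMA00}(b), together with the extension of \eqref{EQ:14} to the bounded operator $K : L^1(\Gamma_0^-, k_{\mu_t} d\lambda) \to L^1(\Gamma^-, d\mu_t)$ from the proof of Lemma \ref{ENV:LEMMA00}(a). Since $G \in B_{bs}(\Gamma_0^-) \subset D(\widehat{L}^-)$ yields $\widehat{L}^- G \in \Lb_{C_-}$ by Proposition \ref{PROP:03}, and the Ruelle bound gives $\Lb_{C_-} \subset L^1(\Gamma_0^-, k_{\mu_t} d\lambda)$, both sides are absolutely convergent and the identity $\int_{\Gamma^-}(K\widehat{L}^- G)\, d\mu_t = \langle \widehat{L}^- G, k_t\rangle$ applies.

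Second, I would verify that the integral in \eqref{CORR} is well defined. From $\widehat{L}^- G \in \Lb_{C_-}$ one gets
\[
 |\langle \widehat{L}^- G, k_s\rangle| \leq \|\widehat{L}^- G\|_{\Lb_{C_-}} \, \|k_s\|_{\K_{C_-}} \leq \|\widehat{L}^- G\|_{\Lb_{C_-}} \sup_{s \in [0,T]}\|k_s\|_{\K_{C_-}},
\]
so the standing hypothesis $\sup_{s \in [0,T]}\|k_s\|_{\K_{C_-}} < \infty$ makes $s \mapsto \langle \widehat{L}^- G, k_s\rangle$ bounded on each compact time interval, its measurability following from whichever of \eqref{EQ:06} or \eqref{CORR} is assumed.

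Third, the equivalence follows by integrating, respectively differentiating. Fix $F = KG$ with $G \in B_{bs}(\Gamma_0^-)$. If $(\mu_t)_{t\ge 0}$ satisfies \eqref{EQ:06}, then $t \mapsto \int_{\Gamma^-} F\, d\mu_t$ is absolutely continuous with a.e.\ derivative $\int_{\Gamma^-} (L^- F)\, d\mu_t$; integrating from $0$ to $t$ and inserting the dictionary turns this into \eqref{CORR} for $(k_t)$. Conversely, if $(k_t)$ satisfies \eqref{CORR}, then $t \mapsto \langle G, k_t\rangle$ is the time-integral of the locally bounded map $s \mapsto \langle \widehat{L}^- G, k_s\rangle$, hence absolutely continuous with a.e.\ derivative $\langle \widehat{L}^- G, k_t\rangle$; translating back via the dictionary recovers \eqref{EQ:06}. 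Because $K$ is a bijection from $B_{bs}(\Gamma_0^-)$ onto $\mathcal{FP}(\Gamma^-)$, the quantifier over all $F \in \mathcal{FP}(\Gamma^-)$ in \eqref{EQ:06} matches the quantifier over all $G \in B_{bs}(\Gamma_0^-)$ in \eqref{CORR}, closing both directions.

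The routine part is the measure-theoretic bookkeeping; the step requiring genuine care — and the main obstacle — is the second dictionary identity, i.e.\ the interchange of $L^-$ and $K$ together with the absolute convergence of the resulting pairing. This is precisely where Lemma \ref{ENV:LEMMA00} and condition (E2) enter, and it is the uniform bound $\sup_{t\in[0,T]}\|k_t\|_{\K_{C_-}} < \infty$ that supplies the time-integrability needed to pass between the differential form \eqref{EQ:06} and the integral form \eqref{CORR}.
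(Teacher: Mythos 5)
Your proof is correct and follows essentially the same route as the paper: the dictionary identities $\int_{\Gamma^-} F\, d\mu_t = \langle G, k_t\rangle$ and $\int_{\Gamma^-} L^-F\, d\mu_t = \langle \widehat{L}^-G, k_t\rangle$ via the bounded extension of $K$ and the intertwining relation $L^-KG = K\widehat{L}^-G$ from Lemma \ref{ENV:LEMMA00}. The only difference is that you spell out the integration/differentiation bookkeeping between the differential form \eqref{EQ:06} and the integral form \eqref{CORR}, which the paper leaves implicit in ``which then implies the assertion.''
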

\begin{proof}
 Recall that $K: L^1(\Gamma_0, k_{\mu_t}d\lambda) \longrightarrow L^1(\Gamma^-,d\mu_t)$ is a bounded linear operator.
 Hence for $F = KG$ with $G \in B_{bs}(\Gamma_0^-)$ we get $L^-F = K\widehat{L}^-G \in L^1(\Gamma^-,d\mu_t)$.
 Moreover, by \eqref{EQ:14} we get
 \begin{align*}
  \int \limits_{\Gamma_0^-}G(\eta^-)k_t(\eta^-)d\lambda(\eta^-) &= \int \limits_{\Gamma^-}F(\gamma^-)d\mu_t(\gamma^-),
  \\ \int \limits_{\Gamma_0^-}(\widehat{L}^-G)(\eta^-)k_t(\eta^-)d\lambda(\eta^-) &= \int \limits_{\Gamma^-}(L^-F)(\gamma^-)d\mu_t(\gamma^-)
 \end{align*}
 which then implies the assertion.
 \qed
\end{proof}
Let $\mu_0^- \in \mathcal{P}_{C_-}$ with correlation function $k_0^-$ and let $k_t^-$ be the unique solution to \eqref{CORR}.
It remains to show that there exist $\mu_t^- \in \mathcal{P}_{C_-}$ such that $k_{\mu_t^-} = k_t^-$.
For this purpose we show that $k_t^-$ is positive definite in the sense of Lenard, i.e.
\begin{align}\label{EQ:19}
 \int \limits_{\Gamma_0^-}G(\eta^-)k_t^-(\eta^-)d\lambda(\eta^-) \geq 0, \ \ \forall G \in B_{bs}(\Gamma_0^-) \text{ with } KG \geq 0.
\end{align}
First we prove the following lemma.
\begin{lemma}
 Let $(u_t^{\delta})_{t \geq 0}$ be such that
 \[
  |u_t^{\delta}(\eta^-)| \leq A C_-^{|\eta^-|} \prod \limits_{x \in \eta^-}R_{\delta}(x), \ \ \eta \in \Gamma_0^-, \ \ t \geq 0
 \]
 and suppose that for any bounded measurable function $F: \Gamma_0^- \longrightarrow \R$
 \[
  \langle F, \mathcal{H}u_t^{\delta} \rangle = \langle F, \mathcal{H}u_0^{\delta} \rangle + \int \limits_{0}^{t}\langle L_{\delta}^-F, \mathcal{H}u_s^{\delta}\rangle ds, \ \ t \geq 0
 \]
 holds, where $L_{\delta}^-F$ is given by \eqref{MCSL:21} with $b^-$ replaced by $R_{\delta}(x)b^-$ and
 \[
  (\mathcal{H}u_s^{\delta})(\eta^-) = \int \limits_{\Gamma_0^-}(-1)^{|\xi^-|}u_{s}^{\delta}(\eta^- \cup \xi^-)d\lambda(\xi^-), \ \ \eta^- \in \Gamma_0^-.
 \]
 Then $t \longmapsto u_t \in L^1(\Gamma_0^-, d\lambda)$ is continuous w.r.t. the norm.
\end{lemma}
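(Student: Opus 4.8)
The plan is to move the hypothesis from the correlation side to the quasi-observable side through the transform $\mathcal{H}$, and then to read it as the forward (Fokker--Planck) equation of a genuine, conservative birth-and-death process on $\Gamma_0^-$. First I set $v_t^{\delta}:=\mathcal{H}u_t^{\delta}$. Substituting $v^{\delta}=\mathcal{H}u^{\delta}$ directly into the assumed identity gives, for every bounded measurable $F$,
\[
 \langle F, v_t^{\delta}\rangle = \langle F, v_0^{\delta}\rangle + \int \limits_0^t \langle L_{\delta}^-F, v_s^{\delta}\rangle\, ds, \qquad t \geq 0,
\]
so that $v^{\delta}$ is a weak solution of $\tfrac{d}{dt}v_t^{\delta}=(L_{\delta}^-)^*v_t^{\delta}$, the evolution of densities for the jump process generated by $L_{\delta}^-$. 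I also record that $\mathcal{H}$ is invertible with a sign-free inverse: using \eqref{IBP} and $\sum_{\zeta^-\subset\omega^-}(-1)^{|\zeta^-|}=0^{|\omega^-|}$ (a case of \eqref{EQ:07}) one checks $(\mathcal{H}^{-1}w)(\eta^-)=\int_{\Gamma_0^-}w(\eta^-\cup\xi^-)d\lambda(\xi^-)$, whence $u_t^{\delta}=\mathcal{H}^{-1}v_t^{\delta}$.

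Second I set up the correct domination. From the assumed bound and \eqref{EQ:09}, the function $\Phi(\eta^-):=AC_-^{|\eta^-|}\prod_{x\in\eta^-}R_{\delta}(x)$ lies in $L^1(\Gamma_0^-,d\lambda)$ with $\int_{\Gamma_0^-}\Phi\,d\lambda=Ae^{C_-\|R_{\delta}\|_{L^1}}$, and moreover in $L^1(\Gamma_0^-,2^{|\cdot|}d\lambda)$ since $\int_{\Gamma_0^-}2^{|\eta^-|}\Phi(\eta^-)d\lambda(\eta^-)=Ae^{2C_-\|R_{\delta}\|_{L^1}}<\infty$; here the cut-off factor $\prod_{x}R_{\delta}(x)$, with $R_{\delta}\in L^1(\R^d)$, is exactly what renders both integrals finite. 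The same computation yields $|u_t^{\delta}|\leq\Phi$ and $|v_t^{\delta}|\leq e^{C_-\|R_{\delta}\|_{L^1}}\Phi$ uniformly in $t$, so that both $u^{\delta}$ and $v^{\delta}$ lie, uniformly in $t\geq0$, in $L^1(\Gamma_0^-,d\lambda)$ and in the weighted space $L^1(\Gamma_0^-,2^{|\cdot|}d\lambda)$.

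Third comes the generation step, which I expect to be the main obstacle. Because $R_{\delta}\in L^1(\R^d)$, the total birth rate $\int_{\R^d}R_{\delta}(x)b^-(x,\eta^-)dx$ is finite, so $L_{\delta}^-$ is the generator of a birth-and-death jump process on $\Gamma_0^-$, which is conservative by the Lyapunov bound (E3). Splitting $(L_{\delta}^-)^*$ into its multiplicative loss part and its positive gain part and arguing with substochastic semigroup theory exactly as in the proof of Proposition \ref{PROP:03} (i.e.\ \cite{TV06}), one obtains that $(L_{\delta}^-)^*$ generates a positive, strongly continuous, stochastic semigroup on $L^1(\Gamma_0^-,d\lambda)$; the relative-bound estimate behind this, carried out against the weight $2^{|\cdot|}$ by means of \eqref{IBP}, gives the same generation on $L^1(\Gamma_0^-,2^{|\cdot|}d\lambda)$. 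A uniqueness statement for weak solutions in the dominated class, in the spirit of \cite{WZ06} and of Proposition \ref{PROP:04}(b), then identifies $v_t^{\delta}$ with the corresponding semigroup orbit, so that $t\longmapsto v_t^{\delta}$ is continuous in $L^1(\Gamma_0^-,2^{|\cdot|}d\lambda)$. (Alternatively, the Duhamel form of the forward equation yields $\lambda$-a.e.\ continuity of $t\mapsto v_t^{\delta}(\eta^-)$, which together with the domination by $\Phi$ suffices.)

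Finally I transfer continuity back to $u^{\delta}$. Since $u_t^{\delta}(\eta^-)=\int_{\Gamma_0^-}v_t^{\delta}(\eta^-\cup\xi^-)d\lambda(\xi^-)$, Fubini together with \eqref{IBP} gives
\[
 \|u_t^{\delta}-u_s^{\delta}\|_{L^1(\Gamma_0^-,d\lambda)} \leq \int \limits_{\Gamma_0^-}\big|v_t^{\delta}(\omega^-)-v_s^{\delta}(\omega^-)\big|\,2^{|\omega^-|}\,d\lambda(\omega^-),
\]
and the right-hand side tends to $0$ as $t\to s$ by the weighted continuity established above, proving norm continuity of $t\mapsto u_t^{\delta}$ in $L^1(\Gamma_0^-,d\lambda)$. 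The delicate point throughout is that the intertwining transform $\mathcal{H}$ is unbounded on $L^1(\Gamma_0^-,d\lambda)$, so continuity of $u^{\delta}$ cannot simply be inherited from that of $v^{\delta}$; it is precisely the cut-off $R_{\delta}$ that supplies the extra summability $\Phi\in L^1(2^{|\cdot|}d\lambda)$ needed to absorb the weight $2^{|\cdot|}$ produced by $\mathcal{H}^{-1}$.
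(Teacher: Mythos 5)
Your strategy --- realize $v_t^{\delta}:=\mathcal{H}u_t^{\delta}$ as the orbit of a stochastic semigroup generated by $(L_{\delta}^-)^*$ and then pull continuity back through $\mathcal{H}^{-1}$ --- is genuinely different from the paper's argument, but two of its load-bearing steps are not justified. First, the claim that the substochastic-perturbation argument ``carried out against the weight $2^{|\cdot|}$'' gives generation on $L^1(\Gamma_0^-,2^{|\cdot|}d\lambda)$ does not follow: absorbing the weight into the reference measure, the Kato/Thieme--Voigt positivity condition requires, pointwise in $\eta^-$, an estimate of the form $\int_{\R^d}R_{\delta}(x)b^-(x,\eta^-)dx\le\tfrac12 M_-(\eta^-)+c$, because each birth multiplies the weight $2^{|\cdot|}$ by $2$ while each death halves it. Nothing in (E1)--(E3) provides such a pointwise domination of the cut-off birth rate by the death rate: (E3) is a Lyapunov condition yielding conservativity on the \emph{unweighted} space, and (E2) is a relative bound with respect to the weight $C_-^{|\cdot|}$, not $2^{|\cdot|}$ (it holds in the Glauber example, but the lemma is stated under the general hypotheses). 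Second, the identification of $v_t^{\delta}$ with the semigroup orbit is asserted ``in the spirit of'' \cite{WZ06} and Proposition \ref{PROP:04}(b), but those results concern uniqueness of weak-$*$ solutions of the \emph{dual} Cauchy problem on $\K_{C_-}$; they do not apply to $L^1$-valued weak solutions tested against all bounded measurable $F$. To close this you would need a Ball-type uniqueness theorem together with a consistency lemma identifying the formal expression $L_{\delta}^-F$ with the adjoint of the $L^1$-generator on its domain --- a substantive argument, not a citation.

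All of this machinery is unnecessary, and the paper bypasses it with a direct duality estimate: for bounded measurable $F$ with $\|F\|_{L^{\infty}}\le 1$, the assumed identity gives $|\langle F,\mathcal{H}u_t^{\delta}-\mathcal{H}u_s^{\delta}\rangle|\le\int_s^t\langle|L_{\delta}^-F|,|\mathcal{H}u_r^{\delta}|\rangle\,dr$; condition (E1) and $R_{\delta}\in L^1(\R^d)$ give $|L_{\delta}^-F|\le 2c_0e^{c_1|\cdot|}$; the assumed domination together with \eqref{IBP} and \eqref{EQ:09} gives $\sup_{r}\int_{\Gamma_0^-}e^{c_1|\eta^-|}|\mathcal{H}u_r^{\delta}(\eta^-)|d\lambda(\eta^-)<\infty$; taking the supremum over such $F$ yields $\|\mathcal{H}u_t^{\delta}-\mathcal{H}u_s^{\delta}\|_{L^1}\le A(t-s)$, i.e.\ Lipschitz continuity, with no generation or uniqueness theorem at all. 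Two of your observations are nevertheless worth keeping: the inverse formula $(\mathcal{H}^{-1}w)(\eta^-)=\int_{\Gamma_0^-}w(\eta^-\cup\xi^-)d\lambda(\xi^-)$ and the remark that passing from $\mathcal{H}u^{\delta}$ to $u^{\delta}$ itself costs exactly a weight $2^{|\cdot|}$ (a point the paper's own proof glosses over, since it stops at the continuity of $\mathcal{H}u_t^{\delta}$). That weighted continuity, however, should be obtained not from weighted generation but either by running the paper's estimate on the truncated test functions $\1_{\{|\cdot|\le n\}}2^{|\cdot|}F$, or from unweighted $L^1$-continuity plus your domination by $\Phi$ via dominated convergence along subsequences.
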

\begin{proof}
 Take $0 \leq s < t$ and let $F$ be bounded and measurable with $\| F\|_{L^{\infty}} \leq 1$. Then
 \begin{align*}
  &\ \left| \int \limits_{\Gamma_0^-}F(\eta^-)\mathcal{H}u_t^{\delta}(\eta^-)d\lambda(\eta^-) - \int \limits_{\Gamma_0^-}F(\eta^-)\mathcal{H}u_s^{\delta}(\eta^-)d\lambda(\eta^-) \right|
  \\ &\leq \int \limits_{s}^{t} \int \limits_{\Gamma_0^-}|(L_{\delta}^-F)(\eta^-)||\mathcal{H}u_r^{\delta}(\eta^-)|d\lambda(\eta^-)dr
  \leq 2 (t-s)c_0 \sup \limits_{r \in [s,t]} \int \limits_{\Gamma_0^-}e^{c_1|\eta^-|}|\mathcal{H}u_r^{\delta}(\eta^-)|d\lambda(\eta^-)
 \end{align*}
 for some constants $c_0,c_1 > 0$ where we have used (E1). Moreover we have by \eqref{IBP}
 \begin{align*}
  &\ \int \limits_{\Gamma_0^-}e^{c_1|\eta^-|}|\mathcal{H}u_r^{\delta}(\eta^-)|d\lambda(\eta^-)
  \leq \int \limits_{\Gamma_0^-}\int \limits_{\Gamma_0^-}e^{c_1 |\eta^-|}|u_r^{\delta}(\eta^- \cup \xi^-)|d\lambda(\eta^-)d\lambda(\xi^-)
  \\ &= \int \limits_{\Gamma_0^-}\sum \limits_{\eta^- \subset \xi^-}e^{c_1 |\eta^-|}|u_r^{\delta}(\xi^-)|d\lambda(\xi^-)
     \leq C \int \limits_{\Gamma_0^-}\left(1 + e^{c_1}\right)^{|\xi^-|} C_-^{|\xi^-|} \prod \limits_{x \in \xi^-}R_{\delta}(x) d\lambda(\xi^-)
  \\ &\leq C \sum \limits_{k=0}^{\infty}\frac{C_-^k (1 + e^{c_1})^k \| R_{\delta} \|_{L^1}^k}{k!} < \infty.
 \end{align*}
 Taking the supremum over all such $F$ gives $\| \mathcal{H}u_t^{\delta} - \mathcal{H}u_s^{\delta}\|_{L^1} \leq A(t-s)$ for some constant $A > 0$.
 \qed
\end{proof}
Property \eqref{EQ:19} can be shown by following the arguments given in \cite[Lemma 3.18]{FK16b} or \cite{F16WR}.
The only difference is that in \cite[Lemma 3.18, p. 363]{FK16b} a stronger condition then (E2) was used to prove the assertion of previous lemma.
Secondly it is worth to mention that in \cite[section 3]{FK16b} a more technical condition then (E3) was used. 
This condition, however, can be deduced from (E3) by applying \cite[Theorem 2.2]{TV06} together with \cite[Proposition 5.1]{TV06}.

\subsection{Proof of Theorem \ref{TH:ERGODICITY}}
Using \eqref{EQ:14} together with Lemma \ref{ENV:LEMMA01} we see that any invariant measure $\mu_{\mathrm{inv}} \in \mathcal{P}_{C_-}$ satisfies
\[
 0 = \int \limits_{\Gamma_0^-}(KG)(\gamma^-)d\mu_{\mathrm{inv}}(\gamma^-) 
 = \int \limits_{\Gamma_0^-}(\widehat{L}^-G)(\eta^-)k_{\mathrm{inv}}(\eta^-)d\lambda(\eta^-) 
 = \int \limits_{\Gamma_0^-}G(\eta^-)(L^{\Delta,-}k_{\mathrm{inv}})(\eta^-)d\lambda(\eta^-)
\]
for any $G \in B_{bs}(\Gamma_0^-)$ and hence $L^{\Delta,-}k_{\mathrm{inv}} = 0$. 
The next lemma states that this equation has, indeed, exactly one solution.
\begin{lemma}
 The equation
 \begin{align*}
  L^{\Delta,-}k_{\mathrm{inv}} = 0, \ \ k_{\mathrm{inv}}(\emptyset) = 1
 \end{align*}
 has a unique solution $k_{\mathrm{inv}} \in \K_{C_-}$. 
 Moreover, for any $\rho \in \R$ the function $k_{\rho}(\eta^-) = \rho 0^{|\eta^-|} + \rho k_{\mathrm{inv}}(\eta^-)$ is the unique solution to
 \begin{align}\label{EQ:21}
  L^{\Delta,-}k_{\rho} = 0, \ \ k_{\rho}(\emptyset) = \rho.
 \end{align}
\end{lemma}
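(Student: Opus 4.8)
The plan is to recast $L^{\Delta,-}k=0$ as a generalized Kirkwood--Salsburg equation and solve it by a contraction argument in $\K_{C_-}$. First I would split off the diagonal part of $L^{\Delta,-}$: in the death term the contribution of $\xi^-=\emptyset$ is $-\sum_{x\in\eta^-}k(\eta^-)\sum_{\zeta^-\subset\eta^-\backslash x}D^-(x,\zeta^-)=-M_-(\eta^-)k(\eta^-)$, where I use that the death rate $d^-(x,\eta^-\backslash x)=\sum_{\zeta^-\subset\eta^-\backslash x}D^-(x,\zeta^-)$ is non-negative. Splitting $L^{\Delta,-}$ into the multiplication operator $k\mapsto-M_-k$ and a remainder $\mathcal{R}$ collecting the off-diagonal death terms (those with $\xi^-\neq\emptyset$) and all birth terms, the problem $L^{\Delta,-}k=0$, $k(\emptyset)=1$ is equivalent, for $\eta^-\neq\emptyset$, to $k(\eta^-)=M_-(\eta^-)^{-1}(\mathcal{R}k)(\eta^-)$. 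Condition \eqref{EQ:05} is exactly what guarantees $M_-(\eta^-)>0$ for every $\emptyset\neq\eta^-\in\Gamma_0^-$, so the diagonal can be inverted; at $\eta^-=\emptyset$ the identity $L^{\Delta,-}k(\emptyset)=0$ holds trivially, which is why $k(\emptyset)$ must be prescribed separately. Setting $(\Phi k)(\eta^-):=M_-(\eta^-)^{-1}(\mathcal{R}k)(\eta^-)$ for $\eta^-\neq\emptyset$ and $(\Phi k)(\emptyset):=0$, the problem becomes the fixed point equation $k=0^{|\cdot|}+\Phi k$.

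The crucial step is to show that $\Phi$ is a contraction on $\K_{C_-}$. Repeating the estimate that gives $|(L^{\Delta,-}k)(\eta^-)|\leq\|k\|_{\K_{C_-}}C_-^{|\eta^-|}c_-(\eta^-)$ but keeping only the off-diagonal part yields $|(\mathcal{R}k)(\eta^-)|\leq\|k\|_{\K_{C_-}}C_-^{|\eta^-|}\big(c_-(\eta^-)-M_-(\eta^-)\big)$; indeed the $\xi^-=\emptyset$ death contribution to $c_-(\eta^-)$ equals precisely $M_-(\eta^-)$ (again by $d^-\geq 0$), and the remaining death and birth contributions, weighted by $C_-^{|\xi^-|}$ and $C_-^{|\xi^-|-1}$ respectively, reproduce exactly the two pieces of $c_-(\eta^-)-M_-(\eta^-)$. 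Dividing by $M_-(\eta^-)$ and using (E2) gives, for $\eta^-\neq\emptyset$, $|(\Phi k)(\eta^-)|C_-^{-|\eta^-|}\leq\big(c_-(\eta^-)/M_-(\eta^-)-1\big)\|k\|_{\K_{C_-}}\leq(a_--1)\|k\|_{\K_{C_-}}$. Since $c_-\geq M_-$ everywhere, (E2) forces $a_-\in[1,2)$, so $\Phi$ maps $\K_{C_-}$ into itself with $\|\Phi\|\leq a_--1<1$. By Banach's fixed point theorem, equivalently through the convergent Neumann series $k_{\mathrm{inv}}=\sum_{n\geq0}\Phi^n 0^{|\cdot|}$, there is a unique $k_{\mathrm{inv}}\in\K_{C_-}$ with $k_{\mathrm{inv}}=0^{|\cdot|}+\Phi k_{\mathrm{inv}}$, and by the equivalence of the first paragraph this is the unique solution of $L^{\Delta,-}k_{\mathrm{inv}}=0$, $k_{\mathrm{inv}}(\emptyset)=1$ in $\K_{C_-}$.

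For the statement about $k_\rho$ I would invoke linearity: $L^{\Delta,-}$, hence $\Phi$, is linear, so prescribing $k(\emptyset)=\rho$ turns the fixed point equation into $k=\rho\,0^{|\cdot|}+\Phi k$, whose unique solution in $\K_{C_-}$ is $\rho\,(I-\Phi)^{-1}0^{|\cdot|}=\rho\,k_{\mathrm{inv}}$; this settles \eqref{EQ:21}. The main obstacle is the norm estimate for $\mathcal{R}$ in the second paragraph: one has to match the off-diagonal operator against $c_-(\eta^-)-M_-(\eta^-)$ term by term, which relies on the sign information $d^-\geq 0$ (so that the extracted diagonal is exactly $M_-$ and $c_-\geq M_-$) and on the absolute convergence of the integrals defining $\mathcal{R}$, itself guaranteed by (E2). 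A secondary point worth stressing is that \eqref{EQ:05} is indispensable for \emph{uniqueness} and not merely for inverting the diagonal: at any nonempty $\eta^-$ with $M_-(\eta^-)=0$ one would also have $c_-(\eta^-)=0$, so $L^{\Delta,-}k(\eta^-)=0$ would impose no constraint on $k(\eta^-)$ and the solution would fail to be unique.
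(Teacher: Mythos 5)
Your proof is correct and follows essentially the same route as the paper: both invert the diagonal part $-M_-$ (which is possible by \eqref{EQ:05}) and solve the resulting generalized Kirkwood--Salsburg equation by a contraction/Neumann-series argument, with the operator norm bound $a_- - 1 < 1$ extracted from (E2) exactly as in the paper's estimate $\| S \|_{L(\K_{C_-})} < 1$. The only difference is organizational: the paper decomposes $k = \rho\, 0^{|\cdot|} + \widetilde{k}$ with $\widetilde{k} \in \K_{C_-}^{\geq 1}$ and solves the inhomogeneous equation \eqref{EQ:03} on that subspace, whereas you keep the homogeneous fixed-point form $k = 0^{|\cdot|} + \Phi k$ on all of $\K_{C_-}$, which by linearity also gives the consistent form $k_{\rho} = \rho\, k_{\mathrm{inv}}$ of the second assertion.
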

The following proof is based on the ideas from \cite{FKK12}.
\begin{proof}
 Consider the decomposition $\K_{C_-} = \K_{C_-}^0 \oplus \K_{C_-}^{\geq 1}$ with
 \begin{align}\label{EQ:22}
  \K_{C_-}^0 = \{ k \in \K_{C_-} \ | \ k(\eta^-) = k(\emptyset) 0^{|\eta^-|} \}, \quad \K_{C_-}^{\geq 1} = \{ k \in \K_{C_-} \ | \ k(\emptyset) = 0 \}.
 \end{align}
 For $k \in \K_{C_-}$ write $k = 0^{|\eta^-|}\rho + \widetilde{k}$ with $\widetilde{k} \in \K_{C_-}^{\geq 1}$ and $\rho = k(\emptyset)$. Hence obtain
 \begin{align*}
  (L^{\Delta,-}k)(\eta^-) &= - M_-(\eta^-)\widetilde{k}(\eta^-) - \sum \limits_{x \in \eta^-}\int \limits_{\Gamma_0^- \backslash \emptyset}\widetilde{k}(\eta^- \cup \xi^-)\sum \limits_{\zeta^- \subset \eta^- \backslash x}D^-(x,\zeta^- \cup \xi^-)d\lambda(\xi^-)
 \\ &\ \ \ + \sum \limits_{x \in \eta^-}\int \limits_{\Gamma_0^-}\widetilde{k}(\eta^- \cup \xi^- \backslash x)\sum \limits_{\zeta^- \subset \eta^- \backslash x}B^-(x,\zeta^- \cup \xi^-)d\lambda(\xi^-)
 + \rho \1_{\Gamma_1^-}(\eta^-)\sum \limits_{x \in \eta^-}B^-(x,\emptyset).
 \end{align*}
 We let $S\widetilde{k}(\emptyset) = 0$ and for $\eta^- \neq \emptyset$
 \begin{align*}
  (S\widetilde{k})(\eta^-) &= - \frac{1}{M_-(\eta^-)}\sum \limits_{x \in \eta^-}\int \limits_{\Gamma_0^- \backslash \emptyset}\widetilde{k}(\eta^- \cup \xi^-)\sum \limits_{\zeta^- \subset \eta^- \backslash x}D^-(x,\zeta^- \cup \xi^-)d\lambda(\xi^-)
  \\ &\ \ \ + \frac{1}{M_-(\eta^-)}\sum \limits_{x \in \eta^-}\int \limits_{\Gamma_0^-}\widetilde{k}(\eta^- \cup \xi^- \backslash x)\sum \limits_{\zeta^- \subset \eta^- \backslash x}B^-(x,\zeta^- \cup \xi^-)d\lambda(\xi^-) 
 \end{align*}
 where the latter expression is well-defined due to $M_-(\eta^-) > 0$ for $\eta^- \neq \emptyset$.
 Using $(L^{\Delta,-}k)(\emptyset) = 0$, it is easily seen that $k$ satisfies \eqref{EQ:21} if and only if 
 \begin{align}\label{EQ:03}
  \widetilde{k}(\eta^-) - (S\widetilde{k})(\eta^-) = \rho \1_{\Gamma_1^-}(\eta^-)\sum \limits_{x \in \eta^-}\frac{B^-(x,\emptyset)}{D^-(x,\emptyset)}.
 \end{align}
 It is not difficult to see that $S$ is a bounded linear operator such that $\| S \|_{L(\K_{C_-})} < 1$.
 \qed
\end{proof}
\begin{remark}
 Equation \eqref{EQ:03} is an analogue of the Kirkwood-Salsburg equation.
\end{remark}
Next we prove that $\widehat{L}^-$ has a spectral gap. For this purpose we introduce the same decomposition as \eqref{EQ:22} for $\Lb_{C_-}$, i.e.
$\Lb_{C_-} = \Lb_{C_-}^{0} \oplus \Lb_{C_-}^{\geq 1}$ with projection operators
\begin{align*}
 P_0G(\eta^-) = 0^{|\eta^-|}G(\emptyset), \quad \quad P_{\geq 1}G(\eta^-) = (1 - 0^{|\eta^-|})G(\eta^-).
\end{align*}
\begin{proposition}\label{TH:00}
 Suppose that conditions (E1), (E2) and \eqref{EQ:05} are satisfied and let
 \begin{align}\label{GMCS:37}
  \omega_0 := \sup \left\{ \omega \in \left[0, \frac{\pi}{4}\right] \ \bigg | \ a_- < 1+ \cos(\omega) \right\}.
 \end{align}
 Then the following statements hold
 \begin{enumerate}
  \item[(a)] The point $0$ is an eigenvalue for $(\widehat{L}^-, D(\widehat{L}^-))$ with eigenspace $\Lb_{C_-}^0$.
  \item[(b)] Let $\lambda_0 := (2 - a_-)M_* > 0$ where $M_* := \inf_{\emptyset \neq \eta^- \in \Gamma_0^-}M_-(\eta)$, then
  \begin{align*}
   I_1 &:= \{ \lambda \in \C \ | \ \mathrm{Re}(\lambda) > - \lambda_0 \} \backslash \{0\},
   \\ I_2 &:= \left\{ \lambda \in \C \ \bigg| \ |\mathrm{arg}(\lambda)| < \frac{\pi}{2} + \omega_0 \right\} \backslash \{0\}
  \end{align*}
  both belong to the resolvent set $\rho(\widehat{L}^-)$ of $\widehat{L}^-$ on $\Lb_{C_-}$.
 \end{enumerate}
\end{proposition}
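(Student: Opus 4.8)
The plan is to exploit a block upper-triangular structure of $\widehat{L}^-$ with respect to the splitting $\Lb_{C_-} = \Lb_{C_-}^0 \oplus \Lb_{C_-}^{\geq 1}$ and then to reduce everything to a single dissipativity estimate on the off-vacuum part. First I would observe that $\widehat{L}^-$ annihilates $\Lb_{C_-}^0$: inspecting the two sums defining $\widehat{L}^-$ on $G = 0^{|\cdot|}$, in the death part a nonzero contribution needs $\xi^- = \emptyset$ together with some $x \in \xi^-$, which is impossible, while in the birth part the factor $0^{|\xi^- \cup x|}$ always vanishes. Hence $\widehat{L}^- 0^{|\cdot|} = 0$, so $P_0 \widehat{L}^- P_0 = 0$ and $P_{\geq 1}\widehat{L}^- P_0 = 0$; that is, $\widehat{L}^-$ is block upper triangular with diagonal blocks $0$ on $\Lb_{C_-}^0$ and $D := P_{\geq 1}\widehat{L}^- P_{\geq 1}$ on $\Lb_{C_-}^{\geq 1}$. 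This already gives $\Lb_{C_-}^0 \subseteq \ker \widehat{L}^-$, proves (a) up to the reverse inclusion, and shows that for $\lambda \neq 0$ one has $\lambda \in \rho(\widehat{L}^-)$ iff $\lambda \in \rho(D)$. Since $\Lb_{C_-}^0$ is a one-dimensional $T^-(t)$-invariant subspace on which the semigroup acts as the identity, $D$ is the generator of the induced analytic semigroup on the quotient $\Lb_{C_-}/\Lb_{C_-}^0$, which is isometric to $\Lb_{C_-}^{\geq 1}$.

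The heart of the proof is a dissipativity computation for $D$ using the $L^1$-duality of the excerpt, with normalised duality map $j(G) = \overline{\mathrm{sgn}(G)}\,C_-^{|\cdot|} \in \K_{C_-}$. Writing $\widehat{L}^- = A + B$ as in Proposition \ref{PROP:03}, I would compute $\langle A G, j(G)\rangle = -R$, where $R := \int_{\Gamma_0^-} M_-(\eta^-)|G(\eta^-)|C_-^{|\eta^-|}\,d\lambda(\eta^-) \geq 0$, while $|BG| \leq B_*|G|$ combined with the estimate $\int B_*|G|\,C_-^{|\cdot|}d\lambda \leq \int (c_- - M_-)|G|\,C_-^{|\cdot|}d\lambda$ established inside the proof of Proposition \ref{PROP:03} and the hypothesis $c_- \leq a_- M_-$ give $|\langle B G, j(G)\rangle| \leq (a_- - 1)R$. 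Therefore, for every angle $\psi$,
\begin{align*}
 \mathrm{Re}\left( e^{i\psi}\langle D G, j(G)\rangle \right) \leq - R\cos\psi + (a_- - 1)R = -\left( 1 + \cos\psi - a_-\right) R .
\end{align*}
Specialising to $\psi = 0$ and using $M_-(\eta^-) \geq M_*$ on $\Lb_{C_-}^{\geq 1}$ yields $\langle -D G, j(G)\rangle \geq (2 - a_-) R \geq \lambda_0 \|G\|_{\Lb_{C_-}}$, so $D + \lambda_0$ is dissipative; taking $|\psi| \leq \omega_0$ and using $a_- < 1 + \cos\omega_0 \leq 1 + \cos\psi$ shows that $e^{i\psi}D$ is dissipative for all such $\psi$, i.e. $D$ generates a bounded analytic semigroup of angle at least $\omega_0$. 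Since $\lambda - D$ is invertible for large real $\lambda$ by the $A$-dominated Neumann series, the range condition holds and the Lumer--Phillips theorem promotes these bounds to $\sigma(D) \subseteq \{\mathrm{Re}\,\lambda \leq -\lambda_0\}$ and $\{\lambda \neq 0 : |\mathrm{arg}\,\lambda| < \tfrac{\pi}{2} + \omega_0\} \subseteq \rho(D)$; by the triangular structure $I_1, I_2 \subseteq \rho(\widehat{L}^-)$, which is (b). Finally, $\sigma(D) \subseteq \{\mathrm{Re}\,\lambda \leq -\lambda_0\}$ with $\lambda_0 > 0$ forces $0 \in \rho(D)$, so $\widehat{L}^- G = 0$ implies $D P_{\geq 1}G = 0$, hence $P_{\geq 1}G = 0$ and $G \in \Lb_{C_-}^0$, completing (a).

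I expect the main obstacle to be the sharp relative estimate $\int B_*|G|\,C_-^{|\cdot|}d\lambda \leq \int (c_- - M_-)|G|\,C_-^{|\cdot|}d\lambda$, which produces the full factor $(2 - a_-)$ rather than the naive $(1 - a_-)$. The point is that the diagonal death mass $M_-$ belongs to $A$ and not to $B_*$ (whose death part runs only over proper subsets $\xi^- \subsetneq \eta^-$), so that passing through the combinatorial identity \eqref{IBP} one must track precisely the cancellation between the death part of $B$ and $A$; it is exactly this cancellation that simultaneously fixes the gap $\lambda_0 = (2 - a_-)M_*$ and the admissible angle $\omega_0$ through the quantity $1 + \cos\psi - a_-$. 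A secondary, more routine point is checking that $D$ really generates the quotient semigroup and that dissipativity may be upgraded to membership in $\rho(D)$ via the range condition.
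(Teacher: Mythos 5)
Your proof is correct, and while it shares the paper's skeleton --- the splitting $\Lb_{C_-} = \Lb_{C_-}^0 \oplus \Lb_{C_-}^{\geq 1}$, the observation $\widehat{L}^-P_0 = 0$, the reduction of both (a) and (b) to invertibility of the diagonal block $D = L_{11} = A_{11} + B_{11}$ on $\Lb_{C_-}^{\geq 1}$, and the key relative bound $\Vert B_{11}G \Vert_{\Lb_{C_-}} \leq (a_- - 1)\Vert A_{11}G \Vert_{\Lb_{C_-}}$ coming from $c_- \leq a_- M_-$ --- it establishes the resolvent containments by a genuinely different mechanism. The paper works entirely on the resolvent side: it proves the explicit bound \eqref{GMCS:35} for the multiplication operator $A_{11}$, inverts $1 - B_{11}R(\lambda;A_{11})$ by a Neumann series through the factorization \eqref{GMCS:36} (for the sector $I_2$ this requires the geometric minimization showing $|\lambda + M_-(\eta^-)| \geq M_-(\eta^-)\cos(\omega)$), and then reaches $\mathrm{Re}\,\lambda \in (-\lambda_0, 0)$ via the second factorization $(u + iw - L_{11}) = (1 + uR(iw;L_{11}))(iw - L_{11})$. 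You instead compute the numerical range of $D$ against the $L^1$ duality map $j(G) = \overline{\mathrm{sgn}(G)}\,C_-^{|\cdot|}$, obtaining $\mathrm{Re}\left( e^{i\psi}\langle DG, j(G)\rangle\right) \leq -(1 + \cos\psi - a_-)\int_{\Gamma_0^-} M_-|G|C_-^{|\cdot|}d\lambda$, and then let dissipativity plus one point of invertibility (the Neumann series at large real $\lambda$) plus connectedness of the half-plane sweep out $I_1$ and $I_2$. This is legitimate and in fact shorter: the rotation by $e^{i\psi}$ replaces the paper's sector geometry, and since dissipativity of the shifted and rotated operators also yields quantitative resolvent bounds and sectoriality, nothing is lost downstream where Proposition \ref{PROP:02} needs these estimates for the contour-integral construction of $\widetilde{T}^-(t)$ and the exponential rate. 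Two small repairs: (i) your strict inequality ``$a_- < 1 + \cos\omega_0$'' can fail --- whenever $\omega_0 < \frac{\pi}{4}$ one has $a_- = 1 + \cos\omega_0$ by continuity of $\cos$ --- but your estimate only needs the non-strict $a_- \leq 1 + \cos\psi$ for $|\psi| \leq \omega_0$, and the open sector $I_2$ is still the union of the open half-planes $\{\mathrm{Re}(e^{i\psi}\lambda) > 0\}$, so this is cosmetic; (ii) in the block-solving step one should note (the paper is equally silent here) that the off-diagonal block $P_0\widehat{L}^-$ composed with $R(\lambda;L_{11})$ is bounded, which follows since $P_0\widehat{L}^-$ is $A_{11}$-bounded and $B_{11}$ is $A_{11}$-bounded with bound $a_- - 1 < 1$, so that $G_0 = \lambda^{-1}\left(H_0 + P_0\widehat{L}^-R(\lambda;L_{11})H_1\right)$ indeed depends continuously on $H$.
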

\begin{proof}
 Observe that $\widehat{L}^-P_0 = 0$ and hence $\widehat{L}^- P_{\geq 1} = \widehat{L}^-$. Thus we obtain the decomposition
 $\widehat{L}^- = L_{10}P_{\geq 1} + L_{11}P_{\geq 1}$, where $D(L_{11}) = D(\widehat{L}^-) \cap \Lb_{C_-}^{\geq 1}$ and
 \begin{align*}
  &L_{10}: D(L_{11}) \longrightarrow \Lb_{C_-}^{0}, \ \ L_{10}G = P_0 \widehat{L}^-G
  \\ &L_{11}: D(L_{11}) \longrightarrow \Lb_{C_-}^{\geq 1}, \ \ L_{11}G = P_{\geq 1}\widehat{L}^-G.
 \end{align*}
 Using the definition of $\widehat{L}^-$ we see that
 $(L_{10}G)(\eta^-) = 0^{|\eta^-|}\int_{\R^d}G(x)B^-(x,\emptyset)dx$ and $L_{11} = A_{11} + B_{11}$ with
 \begin{align*}
   (A_{11}G)(\eta^-) &= - M_-(\eta^-)G(\eta^-) 
 \\(B_{11}G)(\eta^-) &= - \sum \limits_{\xi^- \subsetneq \eta^-}G(\xi^-) \sum \limits_{x \in \xi^-}\sum \limits_{\zeta^- \subset \xi^- \backslash x}D^-(x,\eta^- \backslash \xi^- \cup \zeta^-)
 \\ &+ \sum \limits_{\xi^- \subset \eta^-}\int \limits_{\R^d}G(\xi^- \cup x)\sum \limits_{\zeta^- \subset \xi^-}B^-(x,\eta^- \backslash \xi^- \cup \zeta^-)d x.
 \end{align*}
 Let us first show that $(L_{11},D(L_{11}))$ is invertible on $\Lb_{C_-}^{\geq 1}$. Denote by $\Vert \cdot \Vert_{\Lb_{C_-}}$
 the norm on $\Lb_{C_-}^{\geq 1}$. Since $M_-(\eta^-) \geq M_*$ for all $|\eta^-| \geq 1$,
 we obtain for any $\lambda = u + i w$, $u \geq 0$, $w \in \R$
 \[
  \left| \frac{G}{\lambda + M_-(\eta^-)} \right| \leq \frac{|G|}{\sqrt{ (u + M_*)^2 + w^2 }} \leq |G| \min\left( \frac{1}{|\lambda|}, \frac{1}{\sqrt{M_*^2 + w^2}}\right).
 \]
 This implies $\lambda \in \rho(A_{11})$ and
 \begin{align}\label{GMCS:35}
  \Vert R(\lambda;A_{11})G \Vert_{\Lb_{C_-}} \leq \min\left( \frac{1}{|\lambda|}, \frac{1}{\sqrt{M_*^2 + w^2}}\right)\Vert G \Vert_{\Lb_{C_-}}.
 \end{align} 
 A simple computation shows that for any $G \in \Lb_{C_-}^{\geq 1}$
 \begin{align*}
  \Vert B_{11}G \Vert_{\Lb_{C_-}} 
  \leq \int \limits_{\Gamma_0^-} |B_{11}G(\eta^-)|C_-^{|\eta^-|}\dm \lambda(\eta^-)
  \leq (a_- - 1) \Vert A_{11} G \Vert_{\Lb_{C_-}}.
 \end{align*}
 Hence we see that $(1 - B_{11}R(\lambda;A_{11}))$ is invertible on $\Lb_{C_-}^{\geq 1}$ and using
 \begin{align}\label{GMCS:36}
  (\lambda - L_{11}) = (1 - B_{11}R(\lambda; A_{11}))(\lambda - A_{11}).
 \end{align}
 we obtain $\lambda \in \rho(L_{11})$ with
 \begin{align}\label{GMCS:34}
  R(\lambda; L_{11}) = R(\lambda;A_{11}) (1 - B_{11}R(\lambda; A_{11}))^{-1}.
 \end{align}
 In particular, we obtain for $\lambda = u + iw$, $u \geq 0$, $w \in \R$ by \eqref{GMCS:35} and \eqref{GMCS:34}
 \[
  \Vert R(\lambda;L_{11})G \Vert_{\Lb_{C_-}} \leq \frac{ \min\left( \frac{1}{|\lambda|}, \frac{1}{\sqrt{M_*^2 + w^2}}\right)}{2 - a_-}\Vert G \Vert_{\Lb_{C_-}}
 \]
 and for $\lambda = iw$, $w \in \R$
 \[
  \Vert R(iw, L_{11})G \Vert_{\Lb_{C_-}} \leq \frac{\sqrt{M_*^2 + w^2}^{-1}}{2 - a_-}\Vert G \Vert_{\Lb_{C_-}}.
 \]
 For $\lambda = u + iw$, $0 > u > - \lambda_0$ and $w \in \R$ write
 \[
  ( u +iw - L_{11})= (1 + u R(iw;L_{11}))(iw - L_{11}).
 \]
 Then, by $|u| < \lambda_0$ and $\frac{|u|}{\sqrt{M_*^2 + w^2}}\frac{1}{2-a_-} \leq \frac{|u|}{\lambda_0} < 1$ we obtain $\lambda \in \rho(L_{11})$ and
 \[
  \Vert R(\lambda; L_{11}) G \Vert_{\Lb_{C_-}} \leq \frac{\sqrt{M_*^2 + w^2}^{-1}}{2 - a_-} \left(1 - \frac{|u|}{\lambda_0}\right)^{-1}\Vert G \Vert_{\Lb_{C_-}}.
 \]
 Therefore, $I_1$ belongs to the resolvent set of $L_{11}$. 
 For $I_2$ let $\lambda = u + iw \in I_2$ and $u < 0$. Then, there exists $\omega \in (0,\omega_0)$ such that $|\mathrm{arg}(\lambda)| < \frac{\pi}{2} + \omega$ and hence
 \[
  |w| = |\cot(\arg(\lambda)- \frac{\pi}{2})| |u| \geq \cot(\omega) |u|.
 \]
 This implies for $\eta^- \neq \emptyset$
 \[
  |\lambda + M_-(\eta^-)|^2 = (u + M_-(\eta^-))^2 + w^2 \geq  (u + M_-(\eta^-))^2 + \cot(\omega)^2 u^2.
 \]
 The right-hand side is minimal for the choice $u = - \frac{M_-(\eta^-)}{1 + \cot(\omega)^2}$ which yields
 \begin{align*}
  |\lambda + M_-(\eta^-)|^2 &\geq M_-(\eta^-)^2\left( \left( \frac{\cot(\omega)^2}{ 1 + \cot(\omega)^2}\right)^2 + \frac{\cot(\omega)^2}{(1+\cot(\omega)^2)^2}\right)
  \\ &= M_-(\eta^-)^2 \frac{\cot(\omega)^2}{1+\cot(\omega)^2} = M_-(\eta^-)^2 \cos(\omega)^2.
 \end{align*}
 Then
 \[
  \Vert B_{11}R(\lambda;A_{11})G\Vert_{\Lb_{C_-}} \leq (a_- - 1) \Vert A_{11} R(\lambda;A_{11})G\Vert_{\Lb_{C_-}} \leq \frac{a_- - 1}{\cos(\omega)}\Vert G \Vert_{\Lb_{C_-}}.
 \]
 Finally by \eqref{GMCS:36} together with $a_- - 1 < \cos(\omega)$ (see \eqref{GMCS:37}) we obtain $I_2 \subset \rho(L_{11})$. 
 Moreover, for each $\lambda = u + iw$ such that $\frac{\pi}{2} < |\mathrm{arg}(\lambda)| < \frac{\pi}{2} + \omega$ and, for some $\omega \in (0,\omega_0)$,
 \begin{align*}
  \Vert R(\lambda;L_{11})G \Vert_{\Lb_{C_-}} &\leq \frac{\sqrt{(u^2 + M_*^2)^2 + w^2}^{-1}}{1 - \frac{a_- - 1}{\cos(\omega)}} \Vert G \Vert_{\Lb_{C_-}}
  \\ &\leq \frac{(1 - \frac{a_- - 1}{\cos(\omega)})^{-1}}{|w|} \Vert G \Vert_{\Lb_{C_-}}
  \leq \sqrt{2}\frac{(1 - \frac{a_- - 1}{\cos(\omega)})^{-1}}{|\lambda|} \Vert G \Vert_{\Lb_{C_-}},
 \end{align*}
 where we have used $|w| \geq \frac{|\lambda|}{\sqrt{2}}$.
 
 Let us prove (a). Take $\psi \in D(\widehat{L}^-)$ and consider the decomposition $\psi = P_0\psi + P_{\geq 1}\psi = \psi_0 + \psi_1$ with
 $\psi_0 \in \Lb_{C_-}^{0}$ and $\psi_1 \in D(L_{11})$. Then
 \[
  0 = \widehat{L}^-\psi = L_{01}\psi_1 + L_{11}\psi_1 \in \Lb_{C_-}^0 \oplus \Lb_{C_-}^{\geq 1}
 \]
 and hence $L_{11}\psi_1 = 0$. Since $0 \in \rho(L_{11})$ we obtain $\psi_1 = 0$.
 
 Let us prove (b). Let $\lambda \in I_1 \cup I_2$ and $H = H_0 + H_1 \in \Lb_{C_-}^0 \oplus \Lb_{C_-}^{\geq 1}$. 
 Then, we have to find $G \in D(\widehat{L}^-)$ such that $(\lambda - \widehat{L})G = H$.
 Using again the decomposition $\widehat{L}^- = L_{01}P_{\geq 1} + L_{11}P_{\geq 1}$, above equation is equivalent to the system of equations
 \begin{align*}
  \lambda G_0 - L_{01}G_1 &= H_0
  \\ (\lambda - L_{11})G_1 &= H_1.
 \end{align*}
 Since $\lambda \in I_1 \cup I_2 \subset \rho(L_{11})$ the second equation has a unique solution on $\Lb_{C_-}^{\geq 1}$ given by $G_1 = R(\lambda;L_{11})H_1$.
 Therefore, $G_0$ is given by
 \[
  G_0 = \frac{1}{\lambda}\left( H_0 + L_{01}R(\lambda; L_{11})H_1\right).
 \]
 \qed
\end{proof}
Define a projection operator $\widehat{P}: \Lb_{C_-} \longrightarrow \Lb_{C_-}^0$ by
\begin{align*}
 \widehat{P}G(\eta^-) = \int \limits_{\Gamma_0^-}G(\xi^-)k_{\mathrm{inv}}(\xi^-)d \lambda(\xi^-) 0^{|\eta^-|}
\end{align*}
Then $\langle \widehat{P}G, k \rangle = \langle G, \widehat{P}^*k\rangle$ where $\widehat{P}^*k(\eta^-) = k_{\mathrm{inv}}(\eta^-)k(\emptyset)$.
The next proposition completes the proof of Theorem \ref{TH:ERGODICITY}.
\begin{proposition}\label{PROP:02}
 There exists a unique invariant measure $\mu_{\mathrm{inv}} \in \mathcal{P}_{C_-}$ with correlation function $k_{\mathrm{inv}}$.
 Moreover, the following holds.
 \begin{enumerate}
  \item[(a)] $T^-(t)$ is uniformly ergodic with exponential rate and projection operator $\widehat{P}$.
  \item[(b)] $T^-(t)^*$ is uniformly ergodic with exponential rate and projection operator $\widehat{P}^*$.
 \end{enumerate}
\end{proposition}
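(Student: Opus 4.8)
The plan is to read everything off the spectral gap established in Proposition \ref{TH:00} and to identify the operator $\widehat{P}$ with the Riesz spectral projection of $\widehat{L}^-$ at the eigenvalue $0$. Proposition \ref{TH:00} tells us that $0$ is an eigenvalue of $(\widehat{L}^-, D(\widehat{L}^-))$ with eigenspace $\Lb_{C_-}^0$ and that $\sigma(\widehat{L}^-)\setminus\{0\}$ lies in $\{\mathrm{Re}(\lambda)\le -\lambda_0\}$, so $0$ is isolated in the spectrum. Moreover, the resolvent formula obtained in its proof, namely $G_1 = R(\lambda;L_{11})H_1$ and $G_0 = \lambda^{-1}(H_0 + L_{01}R(\lambda;L_{11})H_1)$ with $R(\cdot;L_{11})$ bounded near $0$, exhibits $\lambda = 0$ as a \emph{first-order} pole of $\lambda\mapsto R(\lambda;\widehat{L}^-)$. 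Hence $0$ is semisimple and the Riesz projection $P = \frac{1}{2\pi i}\oint_{|\lambda|=r}R(\lambda;\widehat{L}^-)\,d\lambda$ has range exactly $\ker\widehat{L}^- = \Lb_{C_-}^0$.

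Next I would show $P = \widehat{P}$. Using $k_{\mathrm{inv}}(\emptyset)=1$ one checks that $\widehat{P}$ is a bounded projection with range $\Lb_{C_-}^0$. It also commutes with $\widehat{L}^-$: indeed $\widehat{L}^-\widehat{P} = 0$ since $\mathrm{Ran}\,\widehat{P} = \ker\widehat{L}^-$, while for $G\in D(\widehat{L}^-)$ we have $\langle \widehat{P}\widehat{L}^- G, k\rangle = k(\emptyset)\langle\widehat{L}^- G, k_{\mathrm{inv}}\rangle = k(\emptyset)\langle G, \widehat{L}^{*,-}k_{\mathrm{inv}}\rangle = 0$, where I use $k_{\mathrm{inv}}\in D(\widehat{L}^{*,-})$ with $\widehat{L}^{*,-}k_{\mathrm{inv}} = L^{\Delta,-}k_{\mathrm{inv}} = 0$ (Lemma \ref{ENV:LEMMA01} together with the Kirkwood--Salsburg lemma). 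Consequently $\widehat{P}$ commutes with the resolvent, hence with $P$; as $P$ and $\widehat{P}$ are commuting projections with the same range $\Lb_{C_-}^0$, we get $P\widehat{P} = \widehat{P}$ and $\widehat{P}P = P$, whence $\widehat{P} = P$.

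Assertion (a) then follows. Since $\widehat{P}=P$ is the spectral projection, $(I-\widehat{P})\Lb_{C_-}$ is a closed invariant subspace on which $T^-(t)$ restricts to an analytic semigroup whose generator has spectrum $\sigma(\widehat{L}^-)\setminus\{0\}\subseteq\{\mathrm{Re}(\lambda)\le -\lambda_0\}$. For analytic (hence immediately norm-continuous) semigroups the growth bound equals the spectral bound, so for any $\lambda\in(0,\lambda_0)$ there is $A>0$ with $\|T^-(t)(I-\widehat{P})\|\le Ae^{-\lambda t}$. Because $\widehat{L}^-\widehat{P}=0$ gives $T^-(t)\widehat{P}=\widehat{P}$, this reads $\|T^-(t)-\widehat{P}\|\le Ae^{-\lambda t}$, which is uniform ergodicity with exponential rate and projection $\widehat{P}$. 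Assertion (b) is then immediate by duality, since $(T^-(t)-\widehat{P})^* = T^-(t)^* - \widehat{P}^*$ and passing to operator norms yields $\|T^-(t)^* - \widehat{P}^*\|\le Ae^{-\lambda t}$, with $\widehat{P}^*k = k_{\mathrm{inv}}\,k(\emptyset)$ as stated.

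For the invariant measure, the Kirkwood--Salsburg lemma already supplies $k_{\mathrm{inv}}\in\K_{C_-}$ with $L^{\Delta,-}k_{\mathrm{inv}}=0$ and $k_{\mathrm{inv}}(\emptyset)=1$; it remains to see it is a correlation function. Fixing any $\mu_0^-\in\mathcal{P}_{C_-}$ with $k_0=k_{\mu_0^-}$, Theorem \ref{TH:FPE} gives states $\mu_t^-\in\mathcal{P}_{C_-}$ with $k_{\mu_t^-}=T^-(t)^*k_0$, and by (b) these converge in $\K_{C_-}$ to $\widehat{P}^*k_0 = k_{\mathrm{inv}}$ (using $k_0(\emptyset)=1$). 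Each $k_{\mu_t^-}$ obeys the Lenard positivity \eqref{EQ:19}, and since $|\langle G, k_{\mu_t^-}-k_{\mathrm{inv}}\rangle|\le \|G\|_{\Lb_{C_-}}\|k_{\mu_t^-}-k_{\mathrm{inv}}\|_{\K_{C_-}}\to 0$, property \eqref{EQ:19} passes to the limit, so $k_{\mathrm{inv}}$ is the correlation function of a state $\mu_{\mathrm{inv}}\in\mathcal{P}_{C_-}$. Uniqueness follows because any invariant $\nu\in\mathcal{P}_{C_-}$ satisfies $L^{\Delta,-}k_\nu=0$ and $k_\nu(\emptyset)=1$, hence $k_\nu=k_{\mathrm{inv}}$ by the Kirkwood--Salsburg lemma, and under the Ruelle bound the correlation function determines the state (\cite{L75a}). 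I expect the main obstacle to be the rigorous identification $\widehat{P}=P$ — in particular confirming semisimplicity of $0$ from the first-order pole of the resolvent and invoking the spectral-bound-equals-growth-bound principle on $(I-\widehat{P})\Lb_{C_-}$ — after which the limit transition for Lenard positivity is routine.
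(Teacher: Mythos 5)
Your proposal is correct, but it implements the key step --- converting the spectral gap of Proposition \ref{TH:00} into exponential decay --- by a genuinely different mechanism than the paper, even though both proofs rest on Proposition \ref{TH:00} and end with the same positivity argument. The paper never forms a Riesz projection: it works with the non-spectral splitting $\Lb_{C_-} = \Lb_{C_-}^0 \oplus \Lb_{C_-}^{\geq 1}$, writes $T^-(t) = P_0 + P_0T^-(t)P_{\geq 1} + P_{\geq 1}T^-(t)P_{\geq 1}$ as in \eqref{GMCS:67}, identifies $\widetilde{T}^-(t) = P_{\geq 1}T^-(t)P_{\geq 1}$ with the analytic semigroup generated by $L_{11}$ through the Dunford integral \eqref{GMCS:53}, and reads the bound $\Vert \widetilde{T}^-(t)G \Vert_{\Lb_{C_-}} \leq C(\e)e^{-(\lambda_0-\e)t}\Vert G \Vert_{\Lb_{C_-}}$ off that contour representation; the cross term $P_0T^-(t)P_{\geq 1}$ is then disposed of on the adjoint side using $T^-(t)^*\widehat{P}^* = \widehat{P}^*T^-(t)^* = \widehat{P}^*$, so the paper proves assertion (b) first and deduces (a) by duality. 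You instead show that $0$ is a semisimple isolated eigenvalue (a first-order pole, read off the resolvent formula in the proof of Proposition \ref{TH:00}), identify $\widehat{P}$ with the Riesz projection by a commutation argument --- which is sound, since $\widehat{P}\widehat{L}^- = 0$ follows from $\widehat{L}^{*,-}k_{\mathrm{inv}} = L^{\Delta,-}k_{\mathrm{inv}} = 0$ via Lemma \ref{ENV:LEMMA01}, exactly the point where the Kirkwood--Salsburg lemma enters the paper's proof as well --- and then invoke the spectral-bound-equals-growth-bound theorem for analytic semigroups on $\ker\widehat{P}$, obtaining (a) first and (b) by duality. Your route buys conceptual economy: once $P = \widehat{P}$ is established, the relations $T^-(t)\widehat{P} = \widehat{P}T^-(t) = \widehat{P}$ and the invariance of $\ker\widehat{P}$ are automatic, and no cross term ever appears; the price is reliance on two standard but nontrivial pieces of abstract machinery that the paper avoids (the range of a Riesz projection at a first-order pole equals the eigenspace; restrictions of analytic semigroups to closed invariant subspaces remain analytic, so that growth bound equals spectral bound), whereas the paper's hands-on contour estimate keeps the rate $\lambda_0 - \e$ and its constants explicitly tied to the resolvent bounds. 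Your closing step --- norm convergence $T^-(t)^*k_0 \to \widehat{P}^*k_0 = k_{\mathrm{inv}}$ preserves the Lenard positivity \eqref{EQ:19}, hence $k_{\mathrm{inv}}$ is the correlation function of a unique $\mu_{\mathrm{inv}} \in \mathcal{P}_{C_-}$, and uniqueness of the invariant state follows from the Kirkwood--Salsburg lemma --- coincides with the paper's own final argument.
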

\begin{proof}
Using $\widehat{L}^-P_0 = 0$ we obtain $T^-(t)P_0 = P_0$ and hence
\begin{align}\label{GMCS:67}
 T^-(t) = P_0 + P_0T^-(t)P_{\geq 1} + P_{\geq 1}T^-(t)P_{\geq 1}, \ \ t \geq 0.
\end{align}
The $\Lb_{C_-}^{\geq 1}$ part of $T^-(t)$ is given by $P_{\geq 1}T^-(t)P_{\geq 1}$ and has the generator $L_{11}$.
The proof of previous proposition shows that for any $\e > 0$ there exists $\omega = \omega(\e) \in (0, \frac{\pi}{2})$ such that
\[
 \Sigma(\e) := \left\{ \lambda \in \C \ \bigg| \ |\mathrm{arg}(\lambda + \lambda_0 - \e)| \leq \frac{\pi}{2} + \omega \right\} \subset I_1 \cup I_2 \cup \{0\}
\]
and there exists $M(\e) > 0$ such that
\[
 \Vert R(\lambda; L_{11})G\Vert_{\Lb_{C}^{\geq 1}} \leq \frac{M(\e)}{|\lambda|} \Vert G \Vert_{\Lb_{C_-}}
\]
for all $\lambda \in \Sigma(\e) \backslash \{0\}$. Moreover, $(L_{11}, D(L_{11}))$ is a sectorial operator of angle $\omega_0$ on $\Lb_{C_-}^{\geq 1}$.
Denote by $\widetilde{T}^-(t)$ the bounded analytic semigroup on $\Lb_{C_-}^{\geq 1}$ given by
\begin{align}\label{GMCS:53}
 \widetilde{T}^-(t) = \frac{1}{2\pi i} \int \limits_{\sigma} e^{\zeta t}R(\zeta;L_{11}) d \zeta, \ \ t > 0,
\end{align}
where the integral converges in the uniform operator topology, see \cite{PAZ83}. Here $\sigma$ denotes any piecewise smooth curve in
\[
 \left\{ \lambda \in \C \ \bigg| \ |\mathrm{arg}(\lambda)| < \frac{\pi}{2}+\omega_0 \right\} \backslash \{0\}
\]
running from $\infty e^{-i\theta}$ to $\infty e^{i \theta}$ for $\theta \in (\frac{\pi}{2}, \frac{\pi}{2}+ \omega_0)$.
Then $\widetilde{T}^-(t) = P_{\geq 1}T^-(t)P_{\geq 1}$.

The spectral properties stated above and \eqref{GMCS:53}
imply that for any $\e > 0$ there exists $C(\e) > 0$ such that for any $t \geq 0$ and $G \in \Lb_{C_-}^{\geq 1}$
\[
 \Vert \widetilde{T}^-(t)G\Vert_{\Lb_{C_-}} \leq C(\e)e^{-(\lambda_0 - \e)t}\Vert G \Vert_{\Lb_{C_-}}.
\]
Repeat, e.g., the arguments in \cite{KKM10}. 
By duality and \eqref{GMCS:67}, we see that the adjoint semigroup $(T^-(t)^*)_{t \geq 0}$ admits the decomposition
\begin{align*}
 T^-(t)^* = P_0 + P_{\geq 1}T^-(t)^*P_0 + \widetilde{T}^-(t)^*, \ \ t \geq 0,
\end{align*}
where $\widetilde{T}^-(t)^* \in L(\K_{C_-}^{\geq 1})$ is the adjoint semigroup to $(\widetilde{T}^-(t))_{t \geq 0}$. Hence
\[
 \Vert T^-(t)^*k \Vert_{\K_{C_-}} \leq C(\e)e^{-(\lambda_0 - \e)t}\Vert k \Vert_{\K_{C_-}}, \ \ k \in \K_{C_-}^{\geq 1}.
\]
Let $k \in \K_{C_-}$, then $k - \widehat{P}^*k \in \K_{C_-}^{\geq 1}$.
Using $T^-(t)^*\widehat{P}^* = \widehat{P}^*T^-(t)^* = \widehat{P}^*$ we obtain
\begin{align*}
 \Vert T^-(t)^*k - \widehat{P}^*k \Vert_{\K_{C_-}} = \Vert T^-(t)^*(k - \widehat{P}^*k)\Vert_{\K_{C_-}}
 \leq C(\e)e^{-(\lambda_0 - \e)t}\Vert k - \widehat{P}^*k \Vert_{\K_{C_-}}.
\end{align*}
This shows that $T^-(t)^*$ is uniformly ergodic with exponential rate. Duality implies that $T^-(t)$ is uniformly ergodic with exponential rate.
Let $\mu_0 \in \mathcal{P}_{C_-}$, $\mu_t \in \mathcal{P}_{C_-}$ the associated evolution of states and 
$k_{\mu_t} \in \K_{C_-}$ its correlation function for $t \geq 0$. By 
\[
 \langle G, k_{\mu_t}\rangle = \langle G, T^-(t)^*k_{\mu_0} \rangle = \langle T^-(t)G, k_{\mu_0}\rangle \geq 0
\]
for any $G \in B_{bs}(\Gamma_0^-)$ with $KG \geq 0$ and the ergodicity for $T^-(t)$ we see that $k_{\mathrm{inv}}$ is positive definite.
Thus, there exists a unique measure $\mu_{\mathrm{inv}} \in \mathcal{P}_{C_-}$ having $k_{\mathrm{inv}}$ as its correlation function. 
\qed
\end{proof}

\section{Proof: Stochastic averaging principle}
In this section we suppose that (E1) -- (E3), (S1) -- (S3), (AV1) -- (AV3) and \eqref{EQ:05} are satisfied.
Introduce the Banach space $\Lb_{C}$ of equivalence classes of integrable functions on $\Gamma_0^2$ equipped with the norm
\[
 \| G \|_{\Lb_{C}} = \int \limits_{\Gamma_0^2}|G(\eta)|C_+^{|\eta^+|}C_-^{|\eta^-|}d\lambda(\eta).
\]
Then $\Lb_{C} \cong \Lb_{C_+} \widehat{\otimes}_{\pi} \Lb_{C_-}$ where $\widehat{\otimes}_{\pi}$ denotes the projective tensor product of Banach spaces.
Given bounded linear operators $A_1$ on $\Lb_{C_+}$ and $A_2$ on $\Lb_{C_-}$,
the product $A_1 \otimes A_2$ on $\Lb_{C}$ is defined as the unique linear extension of the operator
\[
 (A_1 \otimes A_2)G(\eta) = A_1G_1(\eta^+) A_2G_2(\eta^-), \ \ G \in \mathcal{X}.
\]
This definition satisfies $\Vert (A_1 \otimes A_2) G \Vert_{\Lb_{C}} = \Vert A_1 G_1 \Vert_{\Lb_{C_+}}\Vert A_2 G_2 \Vert_{\Lb_{C_-}}$.
Since $\Lb_{C} \cong \Lb_{C_+} \widehat{\otimes}_{\pi} \Lb_{C_-}$, one can show that such extension exists (see \cite{R02}).
For $A_2$ being the identity operator we use the notation $A_1 \otimes \1$ and for $A_1$ being the identity we use the notation $\1 \otimes A_2$ respectively.

\subsection*{Step 1. Construction of isolated, ergodic environment}
In this step we study, in contrast to Theorem \ref{TH:FPE} and Theorem \ref{TH:ERGODICITY}, the environment process on $\Gamma^2$.
We study the extension of $\widehat{L}^-$ (introduced in previous section) onto $\Lb_C$.
Namely, let
\begin{align}\label{EQ:15}
 (\widehat{L}^- \otimes \1 )G(\eta) := &- \sum \limits_{\xi^- \subset \eta^-}G(\eta^+,\xi^-) \sum \limits_{x \in \xi^-}\sum \limits_{\zeta^- \subset \xi^- \backslash x}D^-(x,\eta^- \backslash \xi^- \cup \zeta^-)
 \\ \notag &+ \sum \limits_{\xi^- \subset \eta^-}\int \limits_{\R^d}G(\eta^+,\xi^- \cup x)\sum \limits_{\zeta^- \subset \xi^-}B^-(x,\eta^- \backslash \xi^- \cup \zeta^-)d x.
\end{align}
be defined on the domain $D(\1 \otimes \widehat{L}^-) = \{ G \in \Lb_{C} \ | \ M_- \cdot G \in \Lb_{C} \}$.
Since $\mathbb{K}$ is defined component-wise, it is easily seen that $\mathbbm{K}(\1 \otimes \widehat{L}^-) = L^-\mathbbm{K}$ holds on $B_{bs}(\Gamma_0^2)$,
where $L^-$ is extended onto $\mathcal{FP}(\Gamma^2)$ in the obvious way.
Let $\mathcal{X} := \{ G^1 \otimes G^2 \ | \ G^1 \in \Lb_{C_+}, \ G^2 \in \Lb_{C_-} \} \subset \Lb_{C}$
where $(G^1 \otimes G^2)(\eta) := G^1(\eta^+)G^2(\eta^-)$. Then, $\mathrm{lin}(\mathcal{X}) \subset \Lb_{C}$ is dense,
where $\mathrm{lin}$ denotes the linear span of a given subset of $\Lb_{C}$.
\begin{lemma}\label{MCSLLEMMA:00}
 The following assertions hold
 \begin{enumerate}
  \item[(a)] $\1 \otimes T^-(t)$ is an analytic semigroup of contractions on $\Lb_C$ with generator $\1 \otimes \widehat{L}^-$
  and core $B_{bs}(\Gamma_0^2)$.
  \item[(b)] Let $\mathcal{D} := \{ G^1 \otimes G^2 \in \mathcal{X} \ | \ G^2 \in D(\widehat{L}^-) \}$. 
  Then $\mathrm{lin}(\mathcal{D})$ is a core for the generator $(\1 \otimes \widehat{L}^-, D(\1 \otimes \widehat{L}^-))$.
 \end{enumerate}
\end{lemma}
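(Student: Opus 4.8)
The plan is to obtain part (a) by transporting the perturbation argument behind Proposition \ref{PROP:03} to the two-component space $\Lb_C$, and then to read off part (b) from a standard invariance criterion for cores. For (a) I would first decompose $\1 \otimes \widehat{L}^- = (\1 \otimes A) + (\1 \otimes B)$ exactly as in the one-component case, where $((\1 \otimes A)G)(\eta) = -M_-(\eta^-)G(\eta)$ is multiplication by $-M_-$ (acting only on the environment variable) and $\1 \otimes B$ is the associated perturbation read off from \eqref{EQ:15}. The multiplication operator $\1 \otimes A$ generates a positive analytic contraction semigroup on $\Lb_C$ with domain $\{G : M_- \cdot G \in \Lb_C\} = D(\1 \otimes \widehat{L}^-)$. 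The key point is that the relative bound from (E2) survives the tensorisation: writing $\1 \otimes B_*$ for the dominating operator obtained by replacing $D^-,B^-$ by their absolute values, Fubini's theorem lets me carry out the $\eta^+$-integration first and apply the one-component estimate fiberwise, so that for $0 \leq G \in D(\1 \otimes \widehat{L}^-)$
\[
 \int \limits_{\Gamma_0^2}(\1 \otimes B_*)G(\eta)\,C_+^{|\eta^+|}C_-^{|\eta^-|}\,d\lambda(\eta) \leq (a_- - 1)\,\Vert (\1 \otimes A)G\Vert_{\Lb_C}.
\]
With this in hand, \cite[Theorem 2.2]{TV06} together with \cite[Theorems 1.1 and 1.2]{AR91} apply verbatim and show that $(\1 \otimes \widehat{L}^-, D(\1 \otimes \widehat{L}^-))$ is the generator of an analytic contraction semigroup $S(t)$ on $\Lb_C$; the truncation $G_n(\eta) = \1_{|\eta| \leq n}\1_{\eta \subset B_n}G(\eta)$ used in Proposition \ref{PROP:03} again shows that $B_{bs}(\Gamma_0^2)$ is a core.

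Next I would identify $S(t)$ with $\1 \otimes T^-(t)$. For $G^2 \in D(\widehat{L}^-)$ the curve $t \mapsto G^1 \otimes T^-(t)G^2$ is continuously differentiable with derivative $G^1 \otimes \widehat{L}^- T^-(t)G^2 = (\1 \otimes \widehat{L}^-)(G^1 \otimes T^-(t)G^2)$, so it solves the Cauchy problem for $\1 \otimes \widehat{L}^-$ with datum $G^1 \otimes G^2$ and therefore coincides with $S(t)(G^1 \otimes G^2)$. Approximating an arbitrary $G^2 \in \Lb_{C_-}$ by elements of $D(\widehat{L}^-)$ and using the contractivity of both semigroups extends this identity to all of $\mathcal{X}$, and then to $\Lb_C$ by density of $\mathrm{lin}(\mathcal{X})$, which proves $S(t) = \1 \otimes T^-(t)$ and completes (a).

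For (b) I would note that $\mathrm{lin}(\mathcal{D})$ is exactly the algebraic tensor product $\Lb_{C_+} \otimes D(\widehat{L}^-)$. Since $D(\widehat{L}^-)$ is dense in $\Lb_{C_-}$ (being the domain of a generator) and the algebraic tensor product of dense subspaces is dense in the projective tensor product (cf. \cite{R02}), $\mathrm{lin}(\mathcal{D})$ is dense in $\Lb_C$. It is also invariant under the semigroup, since $(\1 \otimes T^-(t))(G^1 \otimes G^2) = G^1 \otimes T^-(t)G^2$ and $T^-(t)$ maps $D(\widehat{L}^-)$ into itself (every $C_0$-semigroup leaves the domain of its generator invariant). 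A dense, semigroup-invariant subspace of the domain is always a core, so $\mathrm{lin}(\mathcal{D})$ is a core for $(\1 \otimes \widehat{L}^-, D(\1 \otimes \widehat{L}^-))$.

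The step I expect to be the main obstacle is the fiberwise relative bound in (a): one must verify that the (E2)-estimate, originally an integral identity over $\Gamma_0^-$, is preserved once the passive system variable $\eta^+$ and its weight $C_+^{|\eta^+|}$ are present. This is precisely where the product structure of the $\Lb_C$-norm and Fubini's theorem are essential; once the bound is secured, both the analyticity and the identification of the generator follow from the one-component machinery without further work.
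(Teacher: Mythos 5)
Your proposal is correct and follows essentially the same route as the paper: part (a) is obtained by repeating the perturbation argument of Proposition \ref{PROP:03} on $\Lb_C$ (with the $\eta^+$-variable passive, exactly as you note), the identification of the resulting semigroup with $\1 \otimes T^-(t)$ is done by solving the Cauchy problem for data $G^1 \otimes G^2 \in \mathcal{D}$ and invoking uniqueness, and part (b) follows from density and semigroup-invariance of $\mathrm{lin}(\mathcal{D})$. Your write-up is in fact somewhat more explicit than the paper's (the fiberwise Fubini estimate and the density extension are only implicit there), but no step differs in substance.
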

\begin{proof}
 Following the same arguments as given in Proposition \ref{PROP:03} we easily deduce that $\1 \otimes \widehat{L}^-$ is the 
 generator of an analytic semigroup $U^-(t)$ of contractions on $\Lb_C$ and that $B_{bs}(\Gamma_0^2)$ is a core.
 It follows from the definition of $\1 \otimes \widehat{L}^-$ that 
 \begin{align*}
  (\1 \otimes \widehat{L}^-)(G^1 \otimes G^2) = G^1 \otimes (\widehat{L}^-G^2), \ \ G^1 \otimes G^2 \in \mathcal{D}.
 \end{align*}
 Hence $G_t := G^1 \otimes T^-(t)G^2$ is a solution to the Cauchy problem
 \[
  \frac{d}{d t}G_t = (\1 \otimes \widehat{L}^-)G_t, \ \ G_t|_{t=0} = G^1 \otimes G^2 \in \mathcal{D}
 \]
 on $\Lb_{C}$. Since for $G^1 \otimes G^2 \in \mathcal{D} \subset D(\1 \otimes \widehat{L}^-)$ this Cauchy problem has the 
 unique solution on $\Lb_{C}$ given by $U^-(t)(G^1 \otimes G^1)$, it follows that $G^1 \otimes T^-(t)G^2 = U^-(t)(G^1 \otimes G^2)$, 
 which gives $U^-(t) = \1 \otimes T^-(t)$.
 From this we easily deduce that $\mathcal{D}$, and hence $\mathrm{lin}(\mathcal{D})$, is invariant for $\1 \otimes T^-(t)$.
 Thus it is a core for the generator $\1 \otimes \widehat{L}^-$.
 \qed
\end{proof}
The following is the main estimate for the first step.
\begin{proposition}
 There exist constants $C, \lambda > 0$ such that for any $G \in \Lb_{C}$
 \begin{align}\label{MCSL:20}
  \Vert (\1 \otimes T^-(t))G - \widehat{P}G \Vert_{\Lb_{C}} \leq C e^{-\lambda t}\Vert G \Vert_{\Lb_{C}}, \ \ t \geq 0,
 \end{align}
 where $k_{inv}$ is the correlation function for $\mu_{\mathrm{inv}}$ and
 \begin{align}\label{MCSL:25}
  (\1 \otimes \widehat{P})G(\eta) = \int \limits_{\Gamma_0^-}G(\eta^+, \xi^-)k_{\mathrm{inv}}(\xi^-)d \lambda(\xi^-)0^{|\eta^-|}.
 \end{align}
\end{proposition}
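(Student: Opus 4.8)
The plan is to deduce \eqref{MCSL:20} directly from the exponential ergodicity of the environment semigroup established in Proposition \ref{PROP:02}, transporting it through the tensor factor $\Lb_{C_+}$ via the structure $\Lb_{C} \cong \Lb_{C_+} \widehat{\otimes}_{\pi} \Lb_{C_-}$. First I would record that, by Proposition \ref{PROP:02}(a), the operator $A_t := T^-(t) - \widehat{P}$ is bounded on $\Lb_{C_-}$ with $\Vert A_t \Vert_{L(\Lb_{C_-})} \leq C e^{-\lambda t}$ for suitable $C,\lambda > 0$. By Lemma \ref{MCSLLEMMA:00} we already know that $\1 \otimes T^-(t)$ is precisely the analytic semigroup generated by $\1 \otimes \widehat{L}^-$, and the operator in \eqref{MCSL:25} is the natural tensor extension of $\widehat{P}$, since on a simple tensor $G^1 \otimes G^2$ it acts as $G^1 \otimes \widehat{P}G^2$ (so the $\widehat{P}$ appearing in \eqref{MCSL:20} is to be read as $\1 \otimes \widehat{P}$). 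Consequently
\[
 (\1 \otimes T^-(t)) - (\1 \otimes \widehat{P}) = \1 \otimes A_t
\]
as bounded operators on $\Lb_{C}$, and it remains only to estimate the operator norm of $\1 \otimes A_t$.

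The key step is the projective cross-norm inequality $\Vert \1 \otimes A_t \Vert_{L(\Lb_{C})} \leq \Vert A_t \Vert_{L(\Lb_{C_-})}$. Rather than invoke the general operator-extension result behind \cite{R02}, I would establish this by a short density argument. Given any $G \in \mathrm{lin}(\mathcal{X})$ and any representation $G = \sum_i G_i^1 \otimes G_i^2$, the image $(\1 \otimes A_t)G = \sum_i G_i^1 \otimes (A_t G_i^2)$ is again a finite sum of simple tensors, so by the definition of the projective norm together with the cross-norm identity $\Vert G^1 \otimes G^2 \Vert_{\Lb_{C}} = \Vert G^1 \Vert_{\Lb_{C_+}}\Vert G^2 \Vert_{\Lb_{C_-}}$ (the special case of the displayed identity in the excerpt for $A_1 = A_2 = \1$),
\[
 \Vert (\1 \otimes A_t) G \Vert_{\Lb_{C}} \leq \sum_i \Vert G_i^1 \Vert_{\Lb_{C_+}} \Vert A_t G_i^2 \Vert_{\Lb_{C_-}} \leq \Vert A_t \Vert_{L(\Lb_{C_-})} \sum_i \Vert G_i^1 \Vert_{\Lb_{C_+}} \Vert G_i^2 \Vert_{\Lb_{C_-}}.
\]
Taking the infimum over all such representations gives $\Vert (\1 \otimes A_t)G \Vert_{\Lb_{C}} \leq \Vert A_t \Vert_{L(\Lb_{C_-})} \Vert G \Vert_{\Lb_{C}}$ on the dense subspace $\mathrm{lin}(\mathcal{X}) \subset \Lb_{C}$; since $\1 \otimes A_t$ is bounded, this extends by continuity to all of $\Lb_{C}$.

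Combining the two estimates yields, for every $G \in \Lb_{C}$,
\[
 \Vert (\1 \otimes T^-(t))G - (\1 \otimes \widehat{P})G \Vert_{\Lb_{C}} \leq \Vert T^-(t) - \widehat{P} \Vert_{L(\Lb_{C_-})} \Vert G \Vert_{\Lb_{C}} \leq C e^{-\lambda t}\Vert G \Vert_{\Lb_{C}},
\]
which is exactly \eqref{MCSL:20}. I do not expect a genuine obstacle here: all the analytic work (the spectral gap of $\widehat{L}^-$, the Kirkwood--Salsburg construction of $k_{\mathrm{inv}}$, and positive-definiteness) is already contained in Propositions \ref{TH:00} and \ref{PROP:02}, and the present statement merely propagates that ergodicity across the inert tensor factor $\Lb_{C_+}$. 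The only points requiring care are the bookkeeping that $\1 \otimes \widehat{P}$ as written in \eqref{MCSL:25} coincides with the tensor extension of $\widehat{P}$ and that it commutes with $\1 \otimes T^-(t)$ in the sense used above; both follow from their common action on simple tensors together with the density of $\mathrm{lin}(\mathcal{X})$.
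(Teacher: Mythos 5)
Your proposal is correct and follows essentially the same route as the paper: both reduce \eqref{MCSL:20} to the one-component exponential ergodicity of Proposition \ref{PROP:02}(a) by exploiting $\Lb_{C} \cong \Lb_{C_+} \widehat{\otimes}_{\pi} \Lb_{C_-}$, estimating term-by-term on tensor representations and then optimizing over representations. The only cosmetic difference is that you work with finite sums on the dense subspace $\mathrm{lin}(\mathcal{X})$ and extend by continuity of the already-bounded operator $(\1 \otimes T^-(t)) - (\1 \otimes \widehat{P})$, whereas the paper estimates partial sums of an infinite series representation and passes to the limit before taking the infimum.
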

\begin{proof}
First observe that due to $\Lb_{C} \cong \Lb_{C_+} \widehat{\otimes}_{\pi} \Lb_{C_-}$ and by \cite{R02} we have
\[
 \Lb_{C} \cong \left \{ G = \sum \limits_{n=1}^{\infty}G_n^1 \otimes G_n^2 \ \bigg| \ (G_n^1)_{n \in \N} \subset \Lb_{C_+}, \ (G_n^2)_{n \in \N} \subset \Lb_{C_-}, \ \sum \limits_{n=1}^{\infty}\Vert G_n^1\Vert_{\Lb_{C_+}} \Vert G_n^2\Vert_{\Lb_{C_-}} < \infty \right\}.
\]
Let $G = \sum_{n=1}^{\infty}G_n^1 \otimes G_n^2 \in \Lb_{C}$ and set 
$G_N := \sum_{n=1}^{N}G_n^1 \otimes G_n^2 \in \mathrm{lin}(\mathcal{X})$. 
By \eqref{MCSL:25} we get $(\1 \otimes \widehat{P})G_N = \sum_{n=1}^{N}G_n^1 \cdot \widehat{P} G_n^2$ and similarly
$(\1 \otimes T^-(t))G_N = \sum_{n=1}^{N}G_n^1\cdot T^-(t)G_n^2$. Thus we obtain
\begin{align*}
 \Vert (\1 \otimes T^-(t))G_N - (\1 \otimes \widehat{P})G_N \Vert_{\Lb_{C}}
 &\leq \sum \limits_{n=1}^{N}\Vert G_n^{1}\Vert_{\Lb_{C_+}} \Vert T^-(t)G_n^2 - \widehat{P}G_n^2 \Vert_{\Lb_{C_-}}
 \\ &\leq 2 C e^{-\lambda t} \sum \limits_{n=1}^{N}\Vert G_n^1\Vert_{\Lb_{C_+}} \Vert G_n^{2}\Vert_{\Lb_{C_-}},
\end{align*}
where we have used Proposition \ref{PROP:02}.(a).
Taking the limit $N \to \infty$ yields by $G_N \longrightarrow G$ in $\Lb_{C}$
\begin{align}\label{EQ:42}
 \Vert (\1 \otimes T^-(t))G - \widehat{P}G \Vert_{\Lb_{C}} \leq 2 C e^{-\lambda t} \sum \limits_{n=1}^{\infty}\Vert G_n^1\Vert_{\Lb_{C_+}}\Vert G_n^2\Vert_{\Lb_{C_-}}.
\end{align}
Since (see \cite{R02})
\[
 \Vert G \Vert_{\Lb_{C}} = \inf \left\{ \sum \limits_{n=1}^{\infty}\Vert G_n^1\Vert_{\Lb_{C_+}} \Vert G_n^2\Vert_{\Lb_{C_-}} \ \bigg | \ G = \sum \limits_{n=1}^{\infty}G_n^1 \otimes G_n^2 \in \Lb_{C} \right\}
\]
we find a sequence $G_k = \sum_{n=1}^{\infty}G_{n,k}^1 \otimes G_{n,k}^2$ with 
$\sum_{n=1}^{\infty}\Vert G_{n,k}^1\Vert_{\Lb_{C_+}} \Vert G_{n,k}^2\Vert_{\Lb_{C_-}} \longrightarrow \Vert G \Vert_{\Lb_{C}}$, as $k \to \infty$.
Applying \eqref{EQ:42} to $G_k$ gives
\[
 \Vert (\1 \otimes T^-(t))G_k - (\1 \otimes \widehat{P})G_k \Vert_{\Lb_{C}} \leq 2 C e^{-\lambda_0 t} \sum \limits_{n=1}^{\infty}\Vert G_{n,k}^1\Vert_{\Lb_{C_+}}\Vert G_{n,k}^2\Vert_{\Lb_{C_-}}.
\]
Taking the limit $k \to \infty$ yields \eqref{MCSL:20}.
\qed
\end{proof}

\subsection*{Step 2. Construction of scaled dynamics}
Consider the scaled operator $L_{\e} = L^+ + \frac{1}{\e}L^-$ and let 
$\widehat{L}_{\e} = \widehat{L}^+ + \frac{1}{\e}\widehat{L}^-$ where $\widehat{L}^-$ is given by \eqref{EQ:15} and 
\begin{align*}
 (\widehat{L}^+G)(\eta) = - \sum \limits_{\xi \subset \eta}G(\xi) \sum \limits_{x \in \xi^+}\sum \limits_{\bfrac{\zeta^+ \subset \xi^+ \backslash x}{\zeta^- \subset \xi^-}}D^+(x,\eta \backslash \xi \cup \zeta)
 + \sum \limits_{\xi \subset \eta}\int \limits_{\R^d}G(\xi^+ \cup x, \xi^-)\sum \limits_{\zeta \subset \xi}B^+(x,\eta \backslash \xi \cup \zeta)d x.
\end{align*}
is defined on $D(\widehat{L}^+) = \{ G \in \Lb_{C} \ | \ M_+ \cdot G \in \Lb_{C}\}$ with
$M_+(\eta) = \sum_{x \in \eta^+}d^+(x,\eta^+ \backslash x, \eta^-)$.
\begin{lemma}
 The following assertions hold
 \begin{enumerate}
  \item[(a)] The operator $(\widehat{L}^+, D(\widehat{L}^+))$ is the generator of an analytic semigroup $(T^+(t))_{t \geq 0}$ of contractions on $\Lb_{C}$.
  Moreover, the assertions from Theorem \ref{TH:FPESYSTEM} are satisfied.
  \item[(b)] $D(\widehat{L}^+) \cap D(\1 \otimes \widehat{L}^-)$ is a core for the operator $(\1 \otimes \widehat{L}^-, D(\1 \otimes \widehat{L}^-))$.
 \end{enumerate}
\end{lemma}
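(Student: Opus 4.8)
The plan is to repeat the construction of Proposition \ref{PROP:03} verbatim in the two-component setting. I would decompose $\widehat{L}^+ = A + B$, where $(AG)(\eta) = -M_+(\eta)G(\eta)$ generates a positive, analytic semigroup of contractions on $\Lb_C$ (note $M_+ \geq 0$), and $B$ collects the remaining terms of $\widehat{L}^+$. Replacing $D^+$ and $B^+$ inside $B$ by their absolute values yields a positive operator $B_*$ with $|BG| \leq B_*|G|$, and the combinatorial identity \eqref{IBP} together with condition (S2), that is $c_+(\eta) \leq a_+ M_+(\eta)$ with $a_+ \in (0,2)$, gives for $0 \leq G \in D(\widehat{L}^+)$ the relative bound
\[
 \int \limits_{\Gamma_0^2}(B_*G)(\eta)C_+^{|\eta^+|}C_-^{|\eta^-|}d\lambda(\eta) \leq (a_+ - 1)\Vert AG \Vert_{\Lb_C}.
\]
Choosing $r \in (0,1)$ with $a_+ < 1 + r < 2$ we obtain $\int (A + \frac{1}{r}B_*)G\, C_+^{|\eta^+|}C_-^{|\eta^-|}d\lambda \leq 0$, so \cite[Theorem 2.2]{TV06} shows that $(A + B_*, D(\widehat{L}^+))$ generates a positive semigroup $V(t)$ of contractions, and \cite[Theorem 1.1 and Theorem 1.2]{AR91} then yield that $\widehat{L}^+ = A + B$ generates an analytic semigroup $T^+(t)$ with $|T^+(t)G| \leq V(t)|G|$; in particular $T^+(t)$ consists of contractions. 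The remaining assertions of Theorem \ref{TH:FPESYSTEM}, namely existence and uniqueness of the state evolution $(\mu_t)_{t \geq 0} \subset \mathcal{P}_C$, then follow along the very same lines as the proof of Theorem \ref{TH:FPE}: the adjoint semigroup $T^+(t)^*$ on $\K_C$ furnishes the unique bounded solution of the system hierarchy (uniqueness via \cite[Theorem 2.1]{WZ06}), and Lenard positive-definiteness of the resulting correlation functions is obtained from the Lyapunov bound provided by (S3).

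\textbf{Part (b).} For the core property I would use a cut-off argument as in the last step of Proposition \ref{PROP:03}. Fix $G \in D(\1 \otimes \widehat{L}^-)$ and set $G_n(\eta) := \1_{|\eta| \leq n}\1_{\eta \subset B_n}G(\eta)$, where $B_n$ is the ball of diameter $n$ around the origin and $\eta \subset B_n$ is meant componentwise. On the support of $G_n$ one has $|\eta| \leq n$ and $\eta \subset B_n$, so the growth bounds in (E1) and (S1) give $M_-(\eta^-) \leq c(n)$ and $M_+(\eta) \leq c(n)$ for a constant $c(n) > 0$; hence $M_+ \cdot G_n, M_- \cdot G_n \in \Lb_C$ and therefore $G_n \in D(\widehat{L}^+) \cap D(\1 \otimes \widehat{L}^-)$. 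It remains to verify $G_n \to G$ and $(\1 \otimes \widehat{L}^-)G_n \to (\1 \otimes \widehat{L}^-)G$ in $\Lb_C$. The first is immediate from dominated convergence, since $G_n \to G$ pointwise with $|G_n| \leq |G| \in \Lb_C$. For the second I would split $\1 \otimes \widehat{L}^- = (\1 \otimes A_-) + (\1 \otimes B_-)$, where $(\1 \otimes A_-)G = -M_- \cdot G$. The diagonal part converges because $M_- G \in \Lb_C$ and $M_-(G_n - G) \to 0$ pointwise with $|M_-(G_n - G)| \leq |M_- G|$, while for the off-diagonal part the estimate $|(\1 \otimes B_-)H| \leq (\1 \otimes B_{-,*})|H|$ together with the relative bound inherited from Proposition \ref{PROP:03} (which only involves the action on the variable $\eta^-$, so the factor $C_+^{|\eta^+|}$ and the $\eta^+$-integration factor through) gives
\[
 \Vert (\1 \otimes B_{-,*})|G_n - G| \Vert_{\Lb_C} \leq (a_- - 1)\Vert M_-(G_n - G) \Vert_{\Lb_C} \longrightarrow 0.
\]
Combining both parts shows $(\1 \otimes \widehat{L}^-)G_n \to (\1 \otimes \widehat{L}^-)G$ in $\Lb_C$, which is precisely the asserted core property.

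\textbf{Main obstacle.} The only genuine point to watch is that the approximants must lie in both domains at once, and this is exactly what the combined cardinality-and-spatial cut-off guarantees: although the rates need not be globally bounded, the bounds in (E1) and (S1) make $M_+$ and $M_-$ simultaneously bounded on configurations with at most $n$ points contained in $B_n$, so both multiplication operators are controlled on the support of $G_n$. I would stress that convergence in the graph norm of $\1 \otimes \widehat{L}^-$ requires only $G \in D(\1 \otimes \widehat{L}^-)$, not $G \in D(\widehat{L}^+)$; the membership of the approximants in $D(\widehat{L}^+)$ stems solely from their bounded support and plays no role in the limit. Consequently no ingredient beyond the relative bound already established in Proposition \ref{PROP:03} is needed, and the generation argument of part (a) transfers directly.
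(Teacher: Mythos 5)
Your proof is correct, and it follows the paper's strategy with one genuine difference in part (b). For part (a) your argument is exactly the paper's: the paper disposes of (a) by the remark that it ``follows by the same arguments as given in section four,'' i.e. the decomposition $\widehat{L}^+ = A + B$ with $(AG)(\eta) = -M_+(\eta)G(\eta)$, the positive majorant $B_*$, the relative bound coming from (S2) via \eqref{IBP}, and the theorems of Thieme--Voigt and Arendt--Rhandi — precisely the steps you spell out. For part (b) the mechanism is the same in both proofs (produce approximants lying in both domains, then get graph-norm convergence from the relative bound $\Vert (\1 \otimes \widehat{L}^-)H \Vert_{\Lb_{C}} \leq a_- \Vert M_- \cdot H \Vert_{\Lb_{C}}$ inherited from Proposition \ref{PROP:03}, which factors through the $\eta^+$-variable exactly as you argue, plus dominated convergence), but the approximating sequences differ: you truncate, $G_n = \1_{|\eta| \leq n}\1_{\eta \subset B_n}G$, and invoke the growth bounds of (E1)/(S1) to make $M_+$ and $M_-$ bounded on the support of $G_n$; the paper instead takes the multiplier $G_{\lambda} = \frac{\lambda}{\lambda + M_+}G$, for which $M_+ G_{\lambda} \leq \lambda |G|$ and $M_- G_{\lambda} \leq M_-|G|$ hold trivially, and then
\begin{align*}
 \Vert (\1 \otimes \widehat{L}^-)(G_{\lambda} - G) \Vert_{\Lb_{C}}
 \leq a_- \int \limits_{\Gamma_0^2}\frac{M_-(\eta^-)M_+(\eta)}{\lambda + M_+(\eta)}|G(\eta)|C_+^{|\eta^+|}C_-^{|\eta^-|}d\lambda(\eta) \longrightarrow 0
\end{align*}
by dominated convergence. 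The paper's choice is slightly more economical — it needs nothing about $M_{\pm}$ beyond measurability and nonnegativity, whereas your truncation genuinely uses the polynomial-exponential bounds in (E1)/(S1) (available here, so no loss) — while your truncation is the same device used for the core statement at the end of Proposition \ref{PROP:03} and has the side benefit of producing approximants of bounded support. Your closing observation is also accurate: membership of the approximants in $D(\widehat{L}^+)$ is only needed to land in the intersection, and plays no role in the graph-norm limit, which is taken with respect to $\1 \otimes \widehat{L}^-$ alone.
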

\begin{proof}
 (a) This follows by the same arguments as given in section four (see also \cite{FK16b}).
 \\ (b) Let $G \in D(\1 \otimes \widehat{L}^-)$ and set $G_{\lambda} = \frac{\lambda}{\lambda + M_+}G$, $\lambda > 0$. 
 Then $G_{\lambda} \in D(\widehat{L}^+) \cap D(\1 \otimes \widehat{L}^-)$ and obviously $G_{\lambda} \longrightarrow G$ in $\Lb_{C}$ as $\lambda \to \infty$.
 Moreover, we have
 \begin{align*}
  \Vert (\1 \otimes \widehat{L}^-)G_{\lambda} - (\1 \otimes \widehat{L}^-)G \Vert_{\Lb_{C}}
  &\leq a_- \Vert M_- \cdot( G_{\lambda} - G) \Vert_{\Lb_{C}}
  \\ &= a_- \int \limits_{\Gamma_0^2} \frac{M_-(\eta^-) M_+(\eta)}{\lambda + M_+(\eta)} |G(\eta)| C_+^{|\eta^+|}C_-^{|\eta^-|}d \lambda(\eta)
 \end{align*}
 which tends by dominated convergence to zero as $\lambda \to \infty$.
 \qed
\end{proof}
Finally we may show the following.
\begin{proposition}\label{TH:02}
 For every $\e > 0$ the operator $(\widehat{L}^+ + \frac{1}{\e}\widehat{L}^-, D(\widehat{L}^+) \cap D(\widehat{L}^-))$ 
 is the generator of an analytic semigroup of contractions on $\Lb_{C}$.
 Moreover, the assertions from Theorem \ref{TH:MAIN}.(a) are satisfied.
\end{proposition}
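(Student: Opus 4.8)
The plan is to reproduce the perturbation argument of Proposition \ref{PROP:03} for the combined generator, the only new ingredient being that the diagonal (multiplication) part must now absorb both mortality functions simultaneously. Fix $\e > 0$ and write $\widehat{L}_{\e} = A_{\e} + B_{\e}$, where
\begin{align*}
 (A_{\e}G)(\eta) = -\left( M_+(\eta) + \tfrac{1}{\e}M_-(\eta)\right)G(\eta)
\end{align*}
and $B_{\e} = B_+ + \tfrac{1}{\e}B_-$ collects the off-diagonal parts (the terms with $\xi \subsetneq \eta$ involving $D^{\pm}, B^{\pm}$) of $\widehat{L}^+$ and of $\1 \otimes \widehat{L}^-$. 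Since $M_+, M_- \geq 0$, one has $\| A_{\e}G\|_{\Lb_C} = \int_{\Gamma_0^2}(M_+ + \tfrac1\e M_-)|G|\, C_+^{|\eta^+|}C_-^{|\eta^-|}d\lambda$, so $A_{\e}$ generates a positive analytic semigroup of contractions on $\Lb_C$ with domain $D(A_{\e}) = \{ G \in \Lb_C : (M_+ + \tfrac1\e M_-)G \in \Lb_C\}$. Because both weights are non-negative this domain equals $D(\widehat{L}^+) \cap D(\widehat{L}^-)$, which is exactly the domain claimed.

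Next I introduce the positive majorant $B_{*,\e} = B_{*,+} + \tfrac1\e B_{*,-}$ obtained by inserting absolute values inside the kernels as in Proposition \ref{PROP:03}, so that $B_{*,\e}$ is positive and $|B_{\e}G| \leq B_{*,\e}|G|$. Applying the computation of Proposition \ref{PROP:03} to the $+$ summand (using the two-component $c_+, M_+$ of (S2)) and fiberwise in $\eta^+$ to the $-$ summand (using $c_-, M_-$ of (E2), then integrating against $C_+^{|\eta^+|}$ by Tonelli), I obtain for $0 \leq G \in D(A_{\e})$
\begin{align*}
 \int_{\Gamma_0^2} (B_{*,\e}G)\, C_+^{|\eta^+|}C_-^{|\eta^-|}d\lambda \leq \int_{\Gamma_0^2}\left( (a_+ - 1)M_+ + \tfrac1\e(a_- - 1)M_-\right) G\, C_+^{|\eta^+|}C_-^{|\eta^-|}d\lambda \leq (a-1)\| A_{\e}G\|_{\Lb_C},
\end{align*}
with $a := \max\{a_+, a_-\} \in (0,2)$. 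The crucial point here is that the relevant constant is the \emph{maximum} of $a_\pm$, not their sum: since $A_\e$ is a single multiplication operator, the two estimates do not accumulate, and $a$ stays in the admissible range $(0,2)$. Choosing $r \in (0,1)$ with $a < 1+r < 2$ then gives $\int(A_\e + \tfrac1r B_{*,\e})G\,C_+^{|\eta^+|}C_-^{|\eta^-|}d\lambda \leq 0$, so by \cite[Theorem 2.2]{TV06} the operator $(A_{\e} + B_{*,\e}, D(A_{\e}))$ generates a positive contraction semigroup $V_{\e}(t)$. Combining $|B_{\e}G| \leq B_{*,\e}|G|$ with \cite[Theorem 1.1 and Theorem 1.2]{AR91} yields that $(\widehat{L}_{\e}, D(\widehat{L}^+) \cap D(\widehat{L}^-))$ generates an analytic semigroup $T_{\e}(t)$ satisfying $|T_{\e}(t)G| \leq V_{\e}(t)|G|$; contractivity of $V_{\e}(t)$ then forces $T_{\e}(t)$ to be contractive as well, proving the first assertion.

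For the ``moreover'' part, i.e. Theorem \ref{TH:MAIN}.(a), I would follow the scheme of Theorem \ref{TH:FPE} on $\Gamma^2$. Since $\mathbbm{K}$ acts component-wise, $\mathbbm{K}\widehat{L}_{\e} = (L^+ + \tfrac1\e L^-)\mathbbm{K}$ on $B_{bs}(\Gamma_0^2)$, so the adjoint semigroup $T_{\e}(t)^*$ on $\K_C$ produces, for $k_0 = k_{\mu_0}$, a weak solution of the coupled correlation hierarchy (the analogue of \eqref{CORR} with operator $L^+ + \tfrac1\e L^-$), and the duality/uniqueness argument of Proposition \ref{PROP:04} gives uniqueness inside the Ruelle class $\mathcal{P}_C$. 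The one step that is not purely mechanical, and hence the main obstacle, is showing that $k_t^{\e} := T_{\e}(t)^*k_0$ is positive-definite in the sense of Lenard, so that it is the correlation function of a genuine state $\mu_t^{\e}$. For this I would use the joint Lyapunov function $V_{\delta}(\eta) := V_{\delta}^+(\eta) + V_{\delta}^-(\eta^-)$ supplied by (S3) and (E3): the scaled cut-off generator $L_{\delta}^+ + \tfrac1\e L_{\delta}^-$ is then conservative, the approximating finite-volume evolutions are honest sub-Markov dynamics, and positivity is preserved in the limit exactly as in \cite[Lemma 3.18]{FK16b}, with the weaker hypotheses (S2), (E2) handled by the norm-continuity lemma already established in Section four.
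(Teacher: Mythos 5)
Your proposal is correct and is essentially the paper's own proof: the paper's argument consists precisely of the observation
$c_+(\eta) + \frac{1}{\e}c_-(\eta^-) \leq a_+M_+(\eta) + \frac{a_-}{\e}M_-(\eta^-) \leq \max\{a_+,a_-\}\left(M_+(\eta) + \frac{1}{\e}M_-(\eta^-)\right)$,
which is exactly your estimate with $a = \max\{a_+,a_-\}$, followed by an appeal to the section-four machinery (the splitting into the multiplication part $A_\e$ and the majorized off-diagonal part, \cite[Theorem 2.2]{TV06} and \cite[Theorems 1.1, 1.2]{AR91}) for generation, and the same reference back to section four and \cite{FK16b} for the assertions of Theorem \ref{TH:MAIN}.(a). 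The one point where you go beyond what the paper records --- the joint Lyapunov function $V_\delta^+ + V_\delta^-$ --- deserves a caution: since (S3) allows $V_\delta^+$ to depend on $\eta^-$, the cross term $\frac{1}{\e}L_\delta^- V_\delta^+$ is not controlled by (E3) and (S3) together, so your claim that the coupled cut-off generator is conservative is immediate only when $V_\delta^+$ depends on $\eta^+$ alone (as it does in all of the paper's examples); otherwise it requires a separate argument.
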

\begin{proof}
 For $\e > 0$ and $\eta \in \Gamma_0^2$ we get
 \begin{align*}
  c_+(\eta) + \frac{1}{\e}c_-(\eta^-) &\leq a_{+}M_{+}(\eta) + \frac{a_-}{\e}M_{-}(\eta^-) 
  \leq \max\ \{ a_-, a_{+}\} \left(M_{+}(\eta) + \frac{1}{\e}M_{-}(\eta^-)\right).
 \end{align*}
 The assertion can be now deduced by the same arguments as given in section four (see also \cite{FK16b}).
 \qed
\end{proof}

\subsection*{Step 3. Construction of averaged dynamics}
Define a linear operator by
\begin{align*}
 \widehat{\overline{L}}G(\eta) = &- \sum \limits_{\xi^+ \subset \eta^+}G(\xi^+) \sum \limits_{x \in \xi^+}\sum \limits_{\zeta^+ \subset \xi^+ \backslash x}\overline{D}(x,\eta^+ \backslash \xi^+ \cup \zeta^+)
 \\ &\ \ \ + \sum \limits_{\xi^+ \subset \eta^+}\int \limits_{\R^d}G(\xi^+ \cup x)\sum \limits_{\zeta^+ \subset \xi^+}\overline{B}(x,\eta^+ \backslash \xi^+ \cup \zeta^+)d x.
\end{align*}
on the domain $D(\widehat{\overline{L}}) = \{ G \in \Lb_{C_+} \ | \ G \cdot \overline{M} \in \Lb_{C_+} \}$
with $\overline{M}(\eta^+) = \sum_{x \in \eta^+}\overline{d}(x,\eta^+ \backslash x)$.
Following line to line the arguments in section four we readily deduce Theorem \ref{TH:MAIN}.(b) and, in particular, the next proposition.
\begin{proposition}
 The operator $(\widehat{\overline{L}}, D(\widehat{\overline{L}}))$ is the generator of an analytic semigroup $\overline{T}(t)$ on $\Lb_{C_+}$.
 Moreover, $\overline{T}(t)$ is a contraction operator and $B_{bs}(\Gamma_0^+)$ is a core for the generator.
\end{proposition}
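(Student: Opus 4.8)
The plan is to transcribe the proof of Proposition \ref{PROP:03} verbatim into the one-component setting on $\Lb_{C_+}$, replacing $M_-, c_-, a_-, C_-, D^-, B^-$ by $\overline{M}, \overline{c}, \overline{a}, C_+, \overline{D}, \overline{B}$ respectively. Conditions (AV1) and (AV2) were imposed precisely so that the averaged rates $\overline{d}, \overline{b}$ inherit the same structural bounds that (E1) and (E2) provide for the environmental rates; consequently every estimate in Proposition \ref{PROP:03} carries over unchanged. Note that condition (AV3), like (E3) in Proposition \ref{PROP:03}, plays no role here: it is only needed for the positive-definiteness in the sense of Lenard underlying the full assertion of Theorem \ref{TH:MAIN}.(b).

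Concretely, I would first decompose $\widehat{\overline{L}} = \overline{A} + \overline{B}$, where $(\overline{A}G)(\eta^+) = -\overline{M}(\eta^+)G(\eta^+)$ is the diagonal multiplication operator, which generates a positive analytic semigroup of contractions on $\Lb_{C_+}$ since $\overline{M} \geq 0$, and $\overline{B}$ collects the remaining terms, namely the sum over $\xi^+ \subsetneq \eta^+$ together with the full birth part. Introducing the positive operator $\overline{B}_*$ obtained from $\overline{B}$ by pulling the absolute value inside the $\zeta^+$-sums, the combinatorial identity \eqref{IBP} together with (AV2) yields, for $0 \leq G \in D(\widehat{\overline{L}})$,
\[
 \int \limits_{\Gamma_0^+}(\overline{B}_*G)(\eta^+)C_+^{|\eta^+|}d\lambda(\eta^+) \leq \int \limits_{\Gamma_0^+}\left(\overline{c}(\eta^+) - \overline{M}(\eta^+)\right)G(\eta^+)C_+^{|\eta^+|}d\lambda(\eta^+) \leq (\overline{a}-1)\Vert \overline{A}G \Vert_{\Lb_{C_+}}.
\]
Choosing $r \in (0,1)$ with $\overline{a} < 1+r < 2$ (possible by (AV2)) this gives $\int_{\Gamma_0^+}(\overline{A} + \frac{1}{r}\overline{B}_*)G(\eta^+)\, C_+^{|\eta^+|}d\lambda(\eta^+) \leq 0$.

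From here the conclusion is immediate: \cite[Theorem 2.2]{TV06} shows that $(\overline{A} + \overline{B}_*, D(\widehat{\overline{L}}))$ generates a positive contraction semigroup $\overline{V}(t)$; \cite[Theorem 1.1]{AR91}, applied with the domination $|\overline{B}G| \leq \overline{B}_*|G|$, shows that $\widehat{\overline{L}} = \overline{A} + \overline{B}$ generates an analytic semigroup $\overline{T}(t)$; and \cite[Theorem 1.2]{AR91} gives $|\overline{T}(t)G| \leq \overline{V}(t)|G|$, so that $\overline{T}(t)$ is a contraction because $\overline{V}(t)$ is. For the core statement, for $G \in D(\widehat{\overline{L}})$ the truncations $G_n(\eta^+) := \1_{|\eta^+|\leq n}\1_{\eta^+ \subset B_n}G(\eta^+)$ lie in $B_{bs}(\Gamma_0^+)$ and satisfy $G_n \to G$ and $\widehat{\overline{L}}G_n \to \widehat{\overline{L}}G$ in $\Lb_{C_+}$. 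Since this is a word-for-word transcription, there is no genuine obstacle; the only thing to verify is that (AV1) secures the absolute convergence of the sums defining $\overline{D}, \overline{B}$ and the resulting relative boundedness of $\overline{B}$ with respect to $\overline{A}$, which is exactly the role played by (E1) in Lemma \ref{ENV:LEMMA00} and Proposition \ref{PROP:03}.
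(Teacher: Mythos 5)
Your proposal is correct and is exactly what the paper does: the paper's own justification for this proposition is simply ``following line to line the arguments in section four,'' i.e.\ the transcription of the proof of Proposition \ref{PROP:03} with $(M_-, c_-, a_-, C_-, D^-, B^-)$ replaced by $(\overline{M}, \overline{c}, \overline{a}, C_+, \overline{D}, \overline{B})$, using (AV1)--(AV2) in place of (E1)--(E2), which is precisely your argument (including the decomposition $\overline{A}+\overline{B}$, the domination by $\overline{B}_*$, the appeal to \cite[Theorem 2.2]{TV06} and \cite[Theorems 1.1, 1.2]{AR91}, and the truncation argument for the core). Your side remark is also accurate: (AV3), like (E3), is not needed for generation of the semigroup but only for the Lenard positive-definiteness behind Theorem \ref{TH:MAIN}.(b).
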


\subsection*{Step 4. Stochastic averaging principle}
Using the definition of correlation functions and previous steps, we immediately see that the main result, Theorem \ref{TH:MAIN}.(c),
follows from the next proposition.
\begin{proposition}\label{MCSLTH:06}
 Let $T^{\e}(t)$ be the semigroup generated by $(\widehat{L}^+ + \frac{1}{\e}\widehat{L}^-, D(\widehat{L}^+)\cap D(\widehat{L}^-))$
 and let $\overline{T}(t)$ be the semigroup generated by $(\widehat{\overline{L}}, D(\widehat{\overline{L}}))$.
 Then for any $T > 0$ and $G \in \Lb_{C_+}$ 
 \begin{align}\label{MCSL:39}
  \lim \limits_{\e \to 0} \sup \limits_{t \in [0,T]} \| T^{\e}(t)(G \otimes 0^-) - (\overline{T}(t)G)\otimes 0^- \|_{\Lb_C} = 0,
 \end{align}
 where $(G \otimes 0^-)(\eta) = G(\eta^+)0^{|\eta^-|}$ and for any $k \in \K_{C}$
 \begin{align}\label{MCSL:40}
  \lim \limits_{\e \to 0}\sup \limits_{t \in [0,T]} \left|\int \limits_{\Gamma_0^+}G(\eta^+)(T^{\e}(t)^*k)(\eta^+, \emptyset) d \lambda(\eta^+) - \int \limits_{\Gamma_0^+}G(\eta^+)\left( \overline{T}(t)^*k(\cdot, \emptyset)\right)(\eta^+)d \lambda(\eta^+)\right| = 0.
 \end{align}
\end{proposition}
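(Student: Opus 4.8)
I would prove the $\Lb_C$-statement \eqref{MCSL:39} by the abstract averaging theorem of \cite{KURTZ73} on the pre-dual space of quasi-observables, and then deduce \eqref{MCSL:40} by duality. The two generators involved are $\widehat{L}^+$ (the slow part) and $\1 \otimes \widehat{L}^-$ (the fast part) on $\Lb_C$, the latter generating the environment semigroup $\1 \otimes T^-(t)$. By the proposition of Step~1 this semigroup is uniformly ergodic with exponential rate, $\|(\1 \otimes T^-(t))G - \widehat{P}G\|_{\Lb_C} \le C e^{-\lambda t}\|G\|_{\Lb_C}$. Consequently $\Lb_C = \mathcal{N}(\1 \otimes \widehat{L}^-) \oplus \mathcal{R}(\1 \otimes \widehat{L}^-)$ with closed range, $\widehat{P}$ is the projection onto the null space $\mathcal{N}(\1 \otimes \widehat{L}^-) = \mathcal{R}(\widehat{P}) = \{ H \otimes 0^- : H \in \Lb_{C_+}\}$, and the exponential rate guarantees that $Q := \int_0^{\infty}\left((\1 \otimes T^-(t)) - \widehat{P}\right)dt$ converges in operator norm and furnishes a bounded pseudo-inverse of $\1 \otimes \widehat{L}^-$ on $\mathcal{R}(I - \widehat{P})$, namely $(\1 \otimes \widehat{L}^-)Q = -(I - \widehat{P})$ there. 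This bounded pseudo-inverse is exactly the object required by \cite{KURTZ73}.

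\textbf{The averaged generator.} Writing $\iota: \Lb_{C_+} \to \mathcal{R}(\widehat{P})$, $\iota H = H \otimes 0^-$, the first key step is the identity $\widehat{P}\widehat{L}^+ \iota = \iota \widehat{\overline{L}}$. Indeed, inserting $G = H \otimes 0^-$ into $\widehat{L}^+$ forces $\xi^- = \emptyset$ and $\zeta^- = \emptyset$ in the defining sums, after which applying $\widehat{P}$ integrates the remaining environment variable against $k_{\mathrm{inv}}$; by the definitions \eqref{MCSL:23}--\eqref{MCSL:24} of $\overline{D}, \overline{B}$ this reproduces precisely the death and birth parts of $\widehat{\overline{L}}$. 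Hence, under $\iota$, the averaged generator on $\mathcal{R}(\widehat{P})$ corresponds to $\widehat{\overline{L}}$, which by Step~3 generates the contraction semigroup $\overline{T}(t)$ with core $B_{bs}(\Gamma_0^+)$.

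\textbf{The corrector and the limit.} To verify generator convergence I would use a corrector. For $f \in B_{bs}(\Gamma_0^+)$ set $f_0 = f \otimes 0^- \in D(\widehat{L}^+) \cap \mathcal{R}(\widehat{P})$ and $f_1 = Q(I - \widehat{P})\widehat{L}^+ f_0$, and let $f_{\e} = f_0 + \e f_1$. Since $(\1 \otimes \widehat{L}^-)f_0 = 0$ and $(\1 \otimes \widehat{L}^-)f_1 = -(I - \widehat{P})\widehat{L}^+ f_0$, the singular $\e^{-1}$-term cancels and $\widehat{L}_{\e} f_{\e} = \widehat{P}\widehat{L}^+ f_0 + \e \widehat{L}^+ f_1$. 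Thus $f_{\e} \to f_0$ and, provided the remainder $\e \widehat{L}^+ f_1$ vanishes in $\Lb_C$, one gets $\widehat{L}_{\e} f_{\e} \to \widehat{P}\widehat{L}^+ f_0 = \iota \widehat{\overline{L}} f$. Together with the contraction property of all $T^{\e}(t)$ (Proposition \ref{TH:02}) and of the degenerate limit $T^0(t) = \iota \overline{T}(t) \iota^{-1}\widehat{P}$, the averaging theorem then yields $T^{\e}(t)(f \otimes 0^-) \to (\overline{T}(t)f) \otimes 0^-$ in $\Lb_C$, uniformly on compact $t$-intervals; density of $B_{bs}(\Gamma_0^+)$ and uniform boundedness extend this to all $G \in \Lb_{C_+}$, which is \eqref{MCSL:39}. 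The main obstacle is exactly the control of the corrector: one must show $f_1 \in D(\widehat{L}^+)$, i.e.\ $M_+ \cdot f_1 \in \Lb_C$, so that $\e \widehat{L}^+ f_1$ is genuinely $O(\e)$. This is delicate because the birth term in $\widehat{L}^+ f_0$ spreads the $\eta^+$-support over $\R^d$, and it is here that the analyticity of the semigroups and the relative-boundedness estimates provided by (S2) and (AV2) must be invoked, possibly after first approximating $f$ by functions with sufficient localization.

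\textbf{Duality.} Finally, \eqref{MCSL:40} follows from \eqref{MCSL:39}. For $G \in \Lb_{C_+}$ and $k \in \K_C$ one has $\langle G \otimes 0^-, T^{\e}(t)^*k \rangle = \langle T^{\e}(t)(G \otimes 0^-), k \rangle$, and since testing the duality against $\eta^- = \emptyset$ gives $\langle G \otimes 0^-, h\rangle = \int_{\Gamma_0^+}G(\eta^+)h(\eta^+,\emptyset)d\lambda(\eta^+)$, the norm convergence in \eqref{MCSL:39} paired with the bounded functional $k$ yields, uniformly on compacts, convergence of $\int_{\Gamma_0^+}G(\eta^+)(T^{\e}(t)^*k)(\eta^+,\emptyset)d\lambda(\eta^+)$ to $\int_{\Gamma_0^+}(\overline{T}(t)G)(\eta^+)k(\eta^+,\emptyset)d\lambda(\eta^+)$. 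Noting that $k(\cdot,\emptyset) \in \K_{C_+}$ with $\|k(\cdot,\emptyset)\|_{\K_{C_+}} \le \|k\|_{\K_C}$ and using the definition of the adjoint $\overline{T}(t)^*$ on $\K_{C_+}$, the latter equals $\int_{\Gamma_0^+}G(\eta^+)(\overline{T}(t)^*k(\cdot,\emptyset))(\eta^+)d\lambda(\eta^+)$, which is exactly \eqref{MCSL:40}.
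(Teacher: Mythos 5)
Your overall architecture (work on the quasi-observable space $\Lb_{C}$, use the exponential ergodicity of $\1 \otimes T^-(t)$ from Step 1, identify the averaged operator with $\widehat{\overline{L}}$, invoke \cite{KURTZ73}, and recover \eqref{MCSL:40} by duality) is the same as the paper's, and both your duality paragraph and the computation $(\1 \otimes \widehat{P})\widehat{L}^+(H \otimes 0^-) = (\widehat{\overline{L}}H) \otimes 0^-$ agree with what the paper does. The problem is the middle of your argument: instead of applying Kurtz's theorem, you essentially re-prove it by the perturbed-test-function method, and that re-proof has a hole which you yourself flag. For the remainder $\e \widehat{L}^+ f_1$ to be $O(\e)$ you need the corrector $f_1 = Q(I - \widehat{P})\widehat{L}^+ f_0$ to lie in $D(\widehat{L}^+)$, i.e. $M_+ \cdot f_1 \in \Lb_{C}$; you call this ``delicate'' and ``possibly'' fixable by localization, but give no proof. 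There is no reason it should be automatic: $Q$ is only bounded on $\Lb_{C}$, it does not map into the weighted domain $D(\widehat{L}^+)$, and $M_+$ is unbounded (it already grows like $|\eta^+|$ for constant death rates). So the generator-convergence step, which is the heart of the proposition, is not established as written. A secondary imprecision: the identity $\widehat{P}\widehat{L}^+\iota = \iota \widehat{\overline{L}}$ only makes sense on $\mathrm{Ran}(\1 \otimes \widehat{P}) \cap D(\widehat{L}^+) = \{ H \otimes 0^- \ | \ M_+(\cdot,\emptyset)\cdot H \in \Lb_{C_+}\}$, and this set differs in general from $\{ H \otimes 0^- \ | \ H \in D(\widehat{\overline{L}})\}$ because $M_+(\cdot,\emptyset) \neq \overline{M}$ (in the BDLP example they differ by $\overline{m}$); what one actually needs is a closability statement, not an operator identity.

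The point of citing \cite[Theorem 2.1]{KURTZ73} is precisely that no corrector estimate is required. The hypotheses the paper verifies are: the operators $\widehat{L}^+$ and $\widehat{L}^+ + \frac{1}{\e}\widehat{L}^-$ generate strongly continuous contraction semigroups (Step 2 and Proposition \ref{TH:02}); the fast semigroup $\1 \otimes T^-(t)$ is a contraction semigroup, ergodic with projection $\1 \otimes \widehat{P}$, and $D(\widehat{L}^+)\cap D(\1 \otimes \widehat{L}^-)$ is a core for its generator (Steps 1--2); and the averaged operator $(\1 \otimes \widehat{P})\widehat{L}^+$ with domain $\mathrm{Ran}(\1 \otimes \widehat{P}) \cap D(\widehat{L}^+)$ is closable, its closure generating a strongly continuous semigroup on the range of the projection. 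The paper settles this last point by the computation you already did, combined with the observation (using (AV1)--(AV2) and the generation argument of section four) that $G \otimes 0^- \longmapsto (\widehat{\overline{L}}G)\otimes 0^-$ generates the analytic contraction semigroup $\overline{U}(t)(G \otimes 0^-) = (\overline{T}(t)G)\otimes 0^-$ on $\Lb_{C_+}\otimes \Lb_{C_-}^0$, for which $\{ G \otimes 0^- \ | \ G \in B_{bs}(\Gamma_0^+)\} \subset \mathrm{Ran}(\1 \otimes \widehat{P}) \cap D(\widehat{L}^+)$ is a core. Replacing your corrector computation by this closability/core verification closes the gap; your Step 1 ergodicity input and your final duality argument can be kept as they stand.
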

\begin{proof}
 Suppose that the following properties are satisfied
 \begin{enumerate}
  \item[(i)] $(\widehat{L}^+, D(\widehat{L}^+))$ is the generator of a strongly continuous contraction semigroup.
  \item[(ii)] $(\1 \otimes \widehat{L}^{-}, D(\1 \otimes \widehat{L}^{-}))$ is the generator of a strongly continuous contraction semigroup 
  $\1 \otimes T^-(t)$ on $\Lb_C$. Moreover $\1 \otimes T^-(t)$ is ergodic with projection operator $\1 \otimes \widehat{P}$
  and $D(\widehat{L}^+) \cap D(\1 \otimes \widehat{L}^-)$ is a core for the generator.
  \item[(iii)] $(\widehat{\overline{L}}, D(\widehat{\overline{L}}))$ is the generator of a strongly continuous contraction semigroups $\overline{T}(t)$ on $\Lb_{C_+}$.
  \item[(iv)] The averaged operator $(\1 \otimes \widehat{P})\widehat{L}^+$ equipped with the domain $\mathrm{Ran}(\1 \otimes \widehat{P}) \cap D(\widehat{L}^+)$
  is closable and its closure $(\mathcal{PL}^+, D(\mathcal{PL}^+))$ satisfies
  \[
   (\mathcal{PL}^+G)(\eta) = 0^{|\eta^-|}(\widehat{\overline{L}}G)(\eta^+), \ \ G \otimes 0^- \in D(\mathcal{PL}^+)
  \]
  where $D(\mathcal{PL}^+) = \{ G \otimes 0^- \ | \ G \in D(\widehat{\overline{L}}) \}$.
 \end{enumerate}
 Then we may apply \cite[Theorem 2.1]{KURTZ73} and from that readily deduce \eqref{MCSL:39}.
 By duality this yields
 \begin{align*}
  \langle G \otimes 0^-, T^{\e}(t)^*k \rangle = \langle T^{\e}(t)(G \otimes 0^-), k\rangle
  \longrightarrow \langle (\overline{T}(t)G) \otimes 0^- , k\rangle
 \end{align*}
 uniformly on compacts in $t \geq 0$. Convergence \eqref{MCSL:40} then follows from
 \begin{align*}
  \int \limits_{\Gamma_0^+}\overline{T}(t)G(\eta^+) k(\eta^+,\emptyset)d \lambda(\eta^+) 
  = \int \limits_{\Gamma_0^+}G(\eta^+)\left( \overline{T}(t)^*k(\cdot, \emptyset)\right)(\eta^+)d \lambda(\eta^+).
 \end{align*}
 Properties (i) -- (iii) have been checked in step 1 -- step 3. It remains to prove (iv).
 First observe that $\mathrm{Ran}(\1 \otimes \widehat{P}) = \Lb_{C_+}\otimes \Lb_{C_-}^0$ and hence by definition of $D(\widehat{L}^+)$
 \begin{align*}
  \mathrm{Ran}(\1 \otimes \widehat{P}) \cap D(\widehat{L}^+) 
  = \{ G = 0^{|\eta^-|}G_1 \in \Lb_{C_+}\otimes \Lb_{C_-}^0 \ |\ M_{+}(\cdot, \emptyset) \cdot G_1 \in \Lb_{C_+} \}.
 \end{align*}
For such $G(\eta) = G_1(\eta^+)0^{|\eta^-|} \in \mathrm{Ran}(\1 \otimes \widehat{P}) \cap D(\widehat{L}^+) $ we obtain
\begin{align*}
 (\widehat{L}^+G)(\eta) &= - \sum \limits_{\xi^+ \subset \eta^+}G_1(\xi) \sum \limits_{x \in \xi^+}\sum \limits_{\zeta^+ \subset \xi^+ \backslash x}D^+(x,\eta^+ \backslash \xi^+ \cup \zeta^+, \eta^-)
 \\ &\ \ \ + \sum \limits_{\xi^+ \subset \eta^+}\int \limits_{\R^d}G_1(\xi^+ \cup x)\sum \limits_{\zeta^+ \subset \xi^+}B^+(x,\eta^+ \backslash \xi^+ \cup \zeta^+, \eta^-)d x.
\end{align*}
and hence applying $\1 \otimes \widehat{P}$ yields by the definitions \eqref{MCSL:23} and \eqref{MCSL:24}
\begin{align*}
 ((\1 \otimes \widehat{P})\widehat{L}^+G)(\eta) &= - 0^{|\eta^-|}\sum \limits_{\xi^+ \subset \eta^+}G_1(\xi^+) \sum \limits_{x \in \xi^+}\sum \limits_{\zeta^+ \subset \xi^+ \backslash x}\overline{D}(x,\eta^+ \backslash \xi^+ \cup \zeta^+)
 \\ &\ \ \ + 0^{|\eta^-|}\sum \limits_{\xi^+ \subset \eta^+}\int \limits_{\R^d}G_1(\xi^+ \cup x)\sum \limits_{\zeta^+ \subset \xi^+}\overline{B}(x,\eta^+ \backslash \xi^+ \cup \zeta^+)d x
 \\ &= (\widehat{\overline{L}}G_1)(\eta^+)0^{|\eta^-|}.
\end{align*}
Similar arguments to section four (see also \cite{FK16b}) together with (AV1), (AV2) imply that 
the right-hand side equipped with the domain $\{ G \otimes 0^- \in \Lb_{C_+} \otimes \Lb_{C_-}^0 \ | \ G \in D(\widehat{\overline{L}}) \}$
is the generator of the analytic semigroup $\overline{U}(t)$ of contractions on $\Lb_{C_+} \otimes \Lb_{C_-}^0$ given by
\[
 (\overline{U}(t)(G \otimes 0^-))(\eta) := \overline{T}(t)G(\eta^+) 0^{|\eta^-|}.
\]
The generator has clearly 
\[
 \mathcal{D} := \{ G \otimes 0^- \in \Lb_{C_+} \otimes \Lb_{C_-}^0 \ | \ G \in B_{bs}(\Gamma_0^+) \}  
\]
as a core. Since $\mathcal{D} \subset \mathrm{Ran}(\1 \otimes \widehat{P}) \cap D(\widehat{L}^+)$ the assertion follows.
 \qed
\end{proof}

\subsection*{Acknowledgements}
Financial support through CRC701, project A5, at Bielefeld University is gratefully acknowledged.
The authors would like to thank the anonymous referee for his remarks which lead to a significant improvement of this work.

\newpage

\begin{footnotesize}

\bibliographystyle{alpha}
\bibliography{Bibliography}

\newcommand{\etalchar}[1]{$^{#1}$}
\def\cprime{$'$}
\begin{thebibliography}{FKKO15}

\bibitem[AR91]{AR91}
W.~Arendt and A.~Rhandi.
\newblock Perturbation of positive semigroups.
\newblock {\em Arch. Math. (Basel)}, 56(2):107--119, 1991.

\bibitem[BCF{\etalchar{+}}14]{BCFKKO14}
B.~Bolker, S.~Cornell, D.~Finkelshtein, Y.~Kondratiev, O.~Kutoviy, and
  O.~Ovaskainen.
\newblock A general mathematical framework for the analysis of spatio-temporal
  point processes.
\newblock {\em Theoretical Ecology}, 7(1):101--113, 2014.

\bibitem[EK86]{ET86}
S.~Ethier and T.~Kurtz.
\newblock {\em Markov processes}.
\newblock Wiley Series in Probability and Mathematical Statistics: Probability
  and Mathematical Statistics. John Wiley \& Sons, Inc., New York, 1986.
\newblock Characterization and convergence.

\bibitem[EW03]{EW03}
A.~Eibeck and W.~Wagner.
\newblock Stochastic interacting particle systems and nonlinear kinetic
  equations.
\newblock {\em Ann. Appl. Probab.}, 13(3):845--889, 2003.

\bibitem[FFH{\etalchar{+}}15]{FFHKKK15}
D.~Finkelshtein, M.~Friesen, H.~Hatzikirou, Y.~Kondratiev, T.~Kr\"uger, and
  O.~Kutoviy.
\newblock Stochastic models of tumour development and related mesoscopic
  equations.
\newblock {\em Inter. Stud. Comp. Sys.}, 7:5--85, 2015.

\bibitem[Fin11a]{F11}
D.~Finkelshtein.
\newblock Functional evolutions for homogeneous stationary death-immigration
  spatial dynamics.
\newblock {\em Methods Funct. Anal. Topology}, 17(4):300--318, 2011.

\bibitem[Fin11b]{F09}
D.~Finkelshtein.
\newblock Measures on two-component configuration spaces.
\newblock {\em Methods Funct. Anal. Topology}, 17(4):300--318, 2011.

\bibitem[FK16a]{FK16}
M.~Friesen and Y.~Kondratiev.
\newblock Weak-coupling limit for ergodic environments.
\newblock {\em Preprint: bibos.math.uni-bielefeld.de/preprints/16-08-509.pdf},
  2016.

\bibitem[FK16b]{FK16b}
Martin Friesen and Oleksandr Kutoviy.
\newblock Evolution of states and mesoscopic scaling for two-component
  birth-and-death dynamics in continuum.
\newblock {\em Methods Funct. Anal. Topology}, 22(4):346--374, 2016.

\bibitem[FKK12]{FKK12}
D.~Finkelshtein, Y.~Kondratiev, and O.~Kutoviy.
\newblock Semigroup approach to birth-and-death stochastic dynamics in
  continuum.
\newblock {\em J. Funct. Anal.}, 262(3):1274--1308, 2012.

\bibitem[FKK15]{FKK15}
D.~Finkelshtein, Y.~Kondratiev, and O.~Kutoviy.
\newblock Statistical dynamics of continuous systems: perturbative and
  approximative approaches.
\newblock {\em Arab. J. Math. (Springer)}, 4(4):255--300, 2015.

\bibitem[FKKO15]{FKK15WRMODEL}
D.~Finkelshtein, Y.~Kondratiev, O.~Kutoviy, and M.~J. Oliveira.
\newblock Dynamical {W}idom-{R}owlinson model and its mesoscopic limit.
\newblock {\em J. Stat. Phys.}, 158(1):57--86, 2015.

\bibitem[FKKZ12]{FKKZ12}
D.~Finkelshtein, Y.~Kondratiev, O.~Kutoviy, and E.~Zhizhina.
\newblock An approximative approach for construction of the {G}lauber dynamics
  in continuum.
\newblock {\em Math. Nachr.}, 285(2-3):223--235, 2012.

\bibitem[FKKZ14]{FKKZ14}
D.~Finkelshtein, Y.~Kondratiev, O.~Kutoviy, and E.~Zhizhina.
\newblock On an aggregation in birth-and-death stochastic dynamics.
\newblock {\em Nonlinearity}, 27(6):1105--1133, 2014.

\bibitem[FKO09]{FKO09}
D.~Finkelshtein, Y.~Kondratiev, and M.~J. Oliveira.
\newblock Markov evolutions and hierarchical equations in the continuum. {I}.
  {O}ne-component systems.
\newblock {\em J. Evol. Equ.}, 9(2):197--233, 2009.

\bibitem[FM04]{FM04}
N.~Fournier and S.~M{\'e}l{\'e}ard.
\newblock A microscopic probabilistic description of a locally regulated
  population and macroscopic approximations.
\newblock {\em Ann. Appl. Probab.}, 14(4):1880--1919, 2004.

\bibitem[Fri17]{F16WR}
M.~Friesen.
\newblock Non-equilibrium {D}ynamics for a {W}idom--{R}owlinson {T}ype {M}odel
  with {M}utations.
\newblock {\em J. Stat. Phys.}, 166(2):317--353, 2017.

\bibitem[GK06]{GKT06}
N.~L. Garcia and T.~G. Kurtz.
\newblock Spatial birth and death processes as solutions of stochastic
  equations.
\newblock {\em ALEA Lat. Am. J. Probab. Math. Stat.}, 1:281--303, 2006.

\bibitem[KK02]{KK02}
Y.~Kondratiev and T.~Kuna.
\newblock Harmonic analysis on configuration space. {I}. {G}eneral theory.
\newblock {\em Infin. Dimens. Anal. Quantum Probab. Relat. Top.},
  5(2):201--233, 2002.

\bibitem[KK06]{KK06}
Y.~Kondratiev and O.~Kutoviy.
\newblock On the metrical properties of the configuration space.
\newblock {\em Math. Nachr.}, 279(7):774--783, 2006.

\bibitem[KK16]{KK16}
Y.~Kondratiev and Y.~Kozitsky.
\newblock The evolution of states in a spatial population model.
\newblock {\em J. Dyn. Diff. Equat.}, 28(1):1--39, 2016.

\bibitem[KKM08]{KKM08}
Y.~Kondratiev, O.~Kutoviy, and R.~Minlos.
\newblock On non-equilibrium stochastic dynamics for interacting particle
  systems in continuum.
\newblock {\em J. Funct. Anal.}, 255(1):200--227, 2008.

\bibitem[KKM10]{KKM10}
Y.~Kondratiev, O.~Kutoviy, and R.~Minlos.
\newblock Ergodicity of non-equilibrium {G}lauber dynamics in continuum.
\newblock {\em J. Funct. Anal.}, 258(9):3097--3116, 2010.

\bibitem[Kol06]{KOL06}
V.~N. Kolokoltsov.
\newblock Kinetic equations for the pure jump models of {$k$}-nary interacting
  particle systems.
\newblock {\em Markov Process. Related Fields}, 12(1):95--138, 2006.

\bibitem[KS06]{KS06}
Y.~Kondratiev and A.~Skorokhod.
\newblock On contact processes in continuum.
\newblock {\em Infin. Dimens. Anal. Quantum Probab. Relat. Top.},
  9(2):187--198, 2006.

\bibitem[Kur73]{KURTZ73}
T.~G. Kurtz.
\newblock A limit theorem for perturbed operator semigroups with applications
  to random evolutions.
\newblock {\em J. Functional Analysis}, 12:55--67, 1973.

\bibitem[Kur92]{K92}
T.~G. Kurtz.
\newblock Averaging for martingale problems and stochastic approximation.
\newblock In {\em Applied stochastic analysis ({N}ew {B}runswick, {NJ}, 1991)},
  volume 177 of {\em Lecture Notes in Control and Inform. Sci.}, pages
  186--209. Springer, Berlin, 1992.

\bibitem[Len73]{L75a}
A.~Lenard.
\newblock Correlation functions and the uniqueness of the state in classical
  statistical mechanics.
\newblock {\em Comm. Math. Phys.}, 30:35--44, 1973.

\bibitem[Len75]{L75b}
A.~Lenard.
\newblock States of classical statistical mechanical systems of infinitely many
  particles. {II}. {C}haracterization of correlation measures.
\newblock {\em Arch. Rational Mech. Anal.}, 59(3):241--256, 1975.

\bibitem[Lot85]{L85}
Heinrich~P. Lotz.
\newblock Uniform convergence of operators on {$L^\infty$} and similar spaces.
\newblock {\em Math. Z.}, 190(2):207--220, 1985.

\bibitem[Paz83]{PAZ83}
A.~Pazy.
\newblock {\em Semigroups of linear operators and applications to partial
  differential equations}, volume~44 of {\em Applied mathematical sciences ;
  44}.
\newblock Springer, New York [u.a.], 2., corr. print edition, 1983.

\bibitem[Pin91]{PINSKY}
M.~A. Pinsky.
\newblock {\em Lectures on random evolution}.
\newblock World Scientific Publishing Co., Inc., River Edge, NJ, 1991.

\bibitem[Rya02]{R02}
R.~A. Ryan.
\newblock {\em Introduction to tensor products of {B}anach spaces}.
\newblock Springer Monographs in Mathematics. Springer-Verlag London, Ltd.,
  London, 2002.

\bibitem[SEW05]{SEW05}
D.~Steinsaltz, S.~N. Evans, and K.~W. Wachter.
\newblock A generalized model of mutation-selection balance with applications
  to aging.
\newblock {\em Adv. in Appl. Math.}, 35(1):16--33, 2005.

\bibitem[SHS02]{SHS}
A.~Skorokhod, F.~Hoppensteadt, and H.~Salehi.
\newblock {\em Random Perturbation Methods with Applications in Science and
  Engineering}.
\newblock Springer, 2002.

\bibitem[Spo88]{S88}
H.~Spohn.
\newblock Kinetic equations from {H}amiltonian dynamics: the {M}arkovian
  approximations.
\newblock In {\em Kinetic theory and gas dynamics}, volume 293 of {\em CISM
  Courses and Lectures}, pages 183--211. Springer, Vienna, 1988.

\bibitem[TV06]{TV06}
H.~R. Thieme and J.~Voigt.
\newblock Stochastic semigroups: their construction by perturbation and
  approximation.
\newblock In {\em Positivity {IV}---theory and applications}, pages 135--146.
  Tech. Univ. Dresden, Dresden, 2006.

\bibitem[WZ06]{WZ06}
L.~Wu and Y.~Zhang.
\newblock A new topological approach to the {$L^\infty$}-uniqueness of
  operators and the {$L^1$}-uniqueness of {F}okker-{P}lanck equations.
\newblock {\em J. Funct. Anal.}, 241(2):557--610, 2006.

\end{thebibliography}

\end{footnotesize}

\end{document}